\documentclass[11pt]{article}
\usepackage{graphicx}
\usepackage[curve,matrix,arrow,color]{xy}
\usepackage[mathscr]{eucal}

\xyoption{line}

\usepackage{epsf}
\usepackage{amssymb}
\usepackage{amsthm}
\usepackage{amsmath}
\usepackage{epstopdf}
\usepackage{color}

\addtolength{\oddsidemargin}{-67pt}
\addtolength{\evensidemargin}{-67pt}
\addtolength{\topmargin}{-60pt}
\addtolength{\textheight}{100pt}
\addtolength{\textwidth}{125pt}

\makeatletter
\@addtoreset{equation}{section}
\makeatother
\renewcommand{\theequation}{\thesection.\arabic{equation}}
\def\begeqar{\begin{eqnarray}}
\def\endeqar{\end{eqnarray}}
\def\begeq{\begin{equation}}
\def\endeq{\end{equation}}

\def\wgta#1#2#3#4{\hbox{\rlap{\lower.35cm\hbox{$#1$}}
\hskip.2cm\rlap{\raise.25cm\hbox{$#2$}}
\rlap{\vrule width1.3cm height.4pt}
\hskip.55cm\rlap{\lower.6cm\hbox{\vrule width.4pt height1.2cm}}
\hskip.15cm
\rlap{\raise.25cm\hbox{$#3$}}\hskip.25cm\lower.35cm\hbox{$#4$}\hskip.6cm}}

\def\wgtb#1#2#3#4{\hbox{\rlap{\raise.25cm\hbox{$#2$}}
\hskip.2cm\rlap{\lower.35cm\hbox{$#1$}}
\rlap{\vrule width1.3cm height.4pt}
\hskip.55cm\rlap{\lower.6cm\hbox{\vrule width.4pt height1.2cm}}
\hskip.15cm
\rlap{\lower.35cm\hbox{$#4$}}\hskip.25cm\raise.25cm\hbox{$#3$}\hskip.6cm}}

\def\begeqar{\begin{eqnarray}}
\def\endeqar{\end{eqnarray}}

\newcommand{\one}{\boldsymbol{1}}
\newcommand{\tensor}{\otimes}

\newcommand{\q}{\mathfrak{q}}
\newcommand{\ffrac}[2]{\mbox{\footnotesize$\displaystyle\frac{#1}{#2}$}}
\newcommand{\idem}{\boldsymbol{e}}
\newcommand{\UresSL}[1]{\overline{U}_{\q} s\ell(#1)}

\newcommand{\TL}[1]{TL_{#1}}
\newcommand{\JTL}[1]{JTL_{#1}}

\newcommand{\modd}{\,\mathrm{mod}\,}
\newcommand{\proj}{p}

\newcommand{\inn}[2]{\langle#1,#2\rangle}

\newcommand{\LQG}{U_{\q} s\ell(2)}
\newcommand{\LQGi}{U_{i} s\ell(2)}

\newcommand{\LQGodd}{U^{\text{odd}}_{\q} s\ell(2)}

\newcommand{\K}{\mathsf{K}}
\newcommand{\F}{\mathsf{F}}
\newcommand{\FF}[1]{\F_{#1}}
\newcommand{\f}{\mathsf{f}}

\newcommand{\E}{\mathsf{E}}
\newcommand{\Er}{\tilde{\E}}
\newcommand{\Fr}{\tilde{\F}}
\newcommand{\EE}[1]{\E_{#1}}
\newcommand{\h}{\mathsf{h}}
\newcommand{\e}{\mathsf{e}}
\newcommand{\te}[1]{\mathsf{e}_{#1}}
\newcommand{\tf}[1]{\mathsf{f}_{#1}}
\newcommand{\epr}{\te{0}}
\newcommand{\fpr}{\tf{0}}

\newcommand{\fpl}{\tf{+}}
\newcommand{\epl}{\te{+}}
\newcommand{\epo}{\epr}
\newcommand{\fpo}{\fpr}

\newcommand{\half}{%
  \mathchoice{\ffrac{1}{2}}{\frac{1}{2}}{\frac{1}{2}}{\frac{1}{2}}}
  
\newcommand{\ramond}{\mathrm{a.p.}}

\newcommand{\ferm}{\theta}
\newcommand{\fermd}{\theta^{\dagger}}

\newcommand{\sfermp}{\psi^{2}}
\newcommand{\sfermm}{\psi^{1}}
\newcommand{\bsfermp}{\bar{\psi}^{2}}
\newcommand{\bsfermm}{\bar{\psi}^{1}}
\newcommand{\phip}{\phi^{2}}
\newcommand{\phim}{\phi^{1}}

\newcommand{\Hilb}{\mathcal{H}}
\newcommand{\vectv}[1]{|#1\rangle}
\newcommand{\veven}[1]{|v^{\text{even}}\rangle}
\newcommand{\vodd}[1]{|v^{\text{odd}}\rangle}
\newcommand{\vacl}{\langle\text{vac}|}
\newcommand{\vacr}{|\text{vac}\rangle}
\newcommand{\vac}{\boldsymbol{\Omega}}
\newcommand{\lvac}{\boldsymbol{\omega}}

\newcommand{\oN}{\mathbb{N}}
\newcommand{\oC}{\mathbb{C}}
\newcommand{\oZ}{\mathbb{Z}}

\newcommand{\step}{\epsilon}
\newcommand{\Endo}{\mathrm{End}}
\newcommand{\VEnd}{\mathcal{E}}
\newcommand{\Hom}{\mathrm{Hom}}

\newcommand{\Vir}{\mathcal{V}}

\newcommand{\VirN}{\boldsymbol{\mathcal{V}}}
\newcommand{\SU}{s\ell}
\newcommand{\nSU}{\boldsymbol{s\ell}}
\newcommand{\gl}{g\ell}
\newcommand{\psl}{ps\ell}

\newcommand{\HommTL}{\mathrm{Hom}_{\rule{0pt}{6.5pt}%
{\TL{N}}}}

\newcommand{\chVv}{\Hilb_{N}}
\newcommand{\chV}{V}
\newcommand{\repgl}{\pi_{\gl}}
\newcommand{\repXX}{\pi_{\mathrm{XX}}}
\newcommand{\repQG}{\rho_{\gl}}
\newcommand{\cent}{\mathfrak{Z}}
\newcommand{\centTL}{\cent_{\TL{}}}
\newcommand{\centJTL}{\cent_{\JTL{}}}
\newcommand{\centVir}{\cent_{\VirN}}
\newcommand{\N}{\mathrm N}

\newcommand{\toppr}{\mathsf{t}}
\newcommand{\botpr}{\mathsf{b}}
\newcommand{\leftpr}{\mathsf{l}}
\newcommand{\rightpr}{\mathsf{r}}
\newcommand{\stprp}{\mathsf{x}}

\newcommand{\XX}{\mathsf{X}}
\newcommand{\PP}{\mathsf{P}}

\newcommand{\TLX}{d^0}

\newcommand{\IrrTL}[1]{(d^0_{#1})}
\newcommand{\PrTL}[1]{\mathscr{P}_{#1}}
\newcommand{\StTL}[1]{\mathscr{W}_{#1}}
\newcommand{\MJTL}[1]{\mathscr{M}_{#1}}
\newcommand{\NJTL}[1]{\mathscr{N}_{#1}}

\newcommand{\Approjmodbase}{C}

\newcommand{\Falg}{\mathcal{A}}
\newcommand{\spco}{\mathcal{C}}

\newtheorem{thm}[subsubsection]{Theorem}
\newtheorem{Lemma}[subsection]{Lemma}

\newtheorem{Cor}[subsection]{Corrolary}

\theoremstyle{definition}

\newtheorem{dfn}[subsubsection]{Definition}
\newtheorem{rem}[subsubsection]{Remark}

\begin{document}
\begin{center}

\Large{Continuum limit and 
  symmetries  of  the periodic $\gl(1|1)$ spin chain }
\vskip 1cm

{\large A.M. Gainutdinov\,$^{a}$, N.~Read\,$^{b}$, and
H.~Saleur\,$^{a,c}$}

\vspace{1.0cm}

{\sl\small $^a$  Institut de Physique Th\'eorique, CEA Saclay,\\ 
Gif Sur Yvette, 91191, France\\}
{\sl\small $^b$ Department of Physics, Yale University, P.O. Box 208120,\\ New Haven, Connecticut 06520-8120, USA\\}
{\sl\small $^c$ Department of Physics and Astronomy,
University of Southern California,\\
Los Angeles, CA 90089, USA\\}

\end{center}

\begin{abstract}

This paper is the first in a series devoted to the study of
logarithmic conformal field theories (LCFT) in the bulk. Building on
earlier work in the boundary case, our general strategy consists in
analyzing the algebraic properties of lattice regularizations (quantum
spin chains) of these theories. In the boundary case, a crucial step
was the identification of the space of states as a bimodule over the
Temperley--Lieb (TL) algebra and the quantum group $\LQG$.  The extension of
this analysis in the bulk case involves considerable difficulties,
since the $\LQG$ symmetry is partly lost, while the TL algebra is
replaced by a much richer version (the Jones--Temperley--Lieb - JTL -
algebra). Even the simplest case of the $\gl(1|1)$ spin chain --
corresponding to the $c=-2$ symplectic fermions theory in the
continuum limit -- presents very rich aspects, which we will discuss in
several papers.

In this first work, we focus on the symmetries of the spin chain, that
is, the centralizer of the JTL algebra in the alternating tensor product of
the $\gl(1|1)$ fundamental representation and its dual. We prove that
this centralizer is only a subalgebra of $\LQG$ at $\q=i$ that we dub
$\LQGodd$. We then begin the analysis of the continuum limit of the
JTL algebra: using general arguments about the regularization of the
stress energy-tensor, we identify families of JTL elements going over
to the Virasoro generators $L_n,\bar{L}_n$ in the continuum limit. We
then discuss the $\SU(2)$ symmetry of the (continuum limit) symplectic
fermions theory from the lattice and JTL point of view.

The analysis of the spin chain as a bimodule over $\LQGodd$ and
$\JTL{N}$ is  discussed in the second paper of this series.

\end{abstract}


\section{Introduction}

There are often striking similarities between the properties of (not
necessarily integrable) lattice models and their conformally invariant
continuum limit in two dimensions.  The origin -- and mathematically more precise formulation --  of these similarities  is partly understood,
and related with the presence of common algebraic structures such as
quantum groups centralizers~\cite{PasquierSaleur, [FGST4],BFGT}. Nevertheless, many features remain unexplored in this
field, chief among them the relation between representations of the
Virasoro algebra and various lattice objects -- Temperley Lieb
algebras, RSOS paths~\cite{Kyoto,Kaufmann,PearceFeverati}, {\it etc.}

The similarities between lattice models and conformal field theories
(CFT) can be a powerful -- albeit non rigorous yet -- tool to infer
the continuum limit of some models which are too hard to solve
analytically. This idea has been exploited recently to deepen our
understanding of Logarithmic CFTs (LCFTs).  Indeed, models based on
representations of associative algebras such as the Temperley--Lieb (TL)
algebra exhibit~\cite{ReadSaleur07-2}, from a representation theoretic
point of view, and in {\sl finite size}, strong similarities with the
chiral algebras in LCFTs. The structure of indecomposable modules and
fusion rules carried out sometimes with great difficulty in the
Virasoro setting~\cite{Rohsiepe, GK1, MatRid} can then be
predicted from a more manageable algebraic analysis of the lattice
models~\cite{Pearceetal, ReadSaleur07-1,RP1}.
A rigorous reformulation of the similarities in representation theories
for lattice and continuum sides requires some categorical statements
like equivalence of tensor categories. The tensor structure or fusion data on
the lattice part is essentially an induction (bi)functor associated
with two chains of arbitrary sizes joined by a common vertex. The 
 construction of direct limits of `tensor' categories of modules over the lattice
algebras, {\it e.g.}, TL-modules, should then give  the desired equivalence with a
tensor category of modules over the chiral algebra in the continuum limit.
 
It has also turned out that, beyond the abstract structure of
indecomposable modules, the matrix elements of Virasoro generators
themselves can also be obtained from the lattice models, although this
time an extrapolation to infinite sizes and restriction to low energy
part of the spectrum have to be
implemented~\cite{KooSaleur}. Indecomposability parameters
characterizing Virasoro action in large
families of boundary LCFTs have recently been obtained in this
fashion~\cite{DubJacSal2, VasJacSal}.

While the case of boundary LCFTs is thus slowly getting under control,
the understanding of the {\sl bulk} case remains in its infancy. The
main problem here, from the continuum point of view, is the expected
double indecomposability of the modules
over the product of the left and right Virasoro algebras. From the
lattice point of view, the necessarily periodic geometry of the model
leads to more complicated algebras~\cite{MartinSaleur1, MartinSaleur},
and to a more intricate role of the quantum
group~\cite{PasquierSaleur}, whose symmetry is partly lost. A relative
understanding of bulk LCFTs has only been gained in the rational
case~\cite{GabRun,GabRunW} based on chiral
W-algebras~\cite{[K-first],[FGST3]}, and also for Wess--Zumino models on
supergroups which, albeit very simple as far as LCFTs go, provide
interesting lessons on the coupling of left and right
sectors~\cite{SaleurSchomerus}. We are not aware of much other work in  this area, apart from~\cite{DoFlohr}, and the recent very interesting paper~\cite{Ridout12}. 

\medskip

The present paper is the first in our investigation of bulk LCFTs using lattice models and algebras. We shall mostly deal with 
super-spin chains, which are now well
understood in the open case~\cite{ReadSaleur07-2}, and whose spectrum in the bulk  was determined 
as early as 2001~\cite{ReadSaleur01}. This spectrum
exhibits intricate patterns such as conformal weights covering all the
rationals (modulo integers), and large degeneracies given by complicated,
arithmetic formulas. To understand these patterns, and to extract the
structure of the left-right Virasoro representations, what is required is a more
thorough study of the lattice algebras present in this case. While
difficult, this study should not be impossible, thanks in part to
recent progress on the side of mathematics~\cite{MartinSaleur1,MartinSaleur,Jones,GL,GL1,Ram}.

Before launching into abstract algebra, it seems important to gain a
 better understanding of the potentially simplest case, that is the
 closed $\gl(1|1)$ spin chain, whose continuum limit is expected to be
 described by the ubiquitous symplectic fermion
 theory~\cite{Kausch}. Our goal is to understand this case thoroughly,
 in order to delineate a general strategy which we will be able to
 extend to other situations -- such as the $\gl(2|1)$ spin chain -- in
 subsequent papers. Unfortunately, even the $\gl(1|1)$ case is rather
 complicated, and will occupy us for a while.

Recall that a fundamental technical step in our approach is to analyze
the Hilbert space of the system as a (bi)module over the two algebras
-- the algebra of hamiltonian densities which, for the models in~\cite{ReadSaleur01,ReadSaleur07-1} is the periodically extended
Temperley--Lieb algebra, and its centralizing symmetry algebra.  We
will restrict in this first paper to the analysis of the symmetries,
postponing the full bimodule discussion to our second
paper~\cite{GRS2}. We begin with definitions of our closed spin-chains
and their relations with XX spin-chains in Sec.~\ref{sec:prelim}
where we also recall  the continuum limit in the open case. In the closed $\gl(1|1)$ case we shall see in
Sec.~\ref{sec:symm-sp-ch} that the symmetry algebra is only a
subalgebra of the symmetry $\LQG$ of the boundary theory. The
resulting object for periodic
conditions -- called $\LQGodd$, with $\q=i$ below -- is realized
as a subalgebra in $\LQG$ which involves the use of the Lusztig limit
$\q\to i$ of particular polynomials of odd degree in the quantum group
generators while the $\SU(2)$ subalgebra is given by polynomials of even
degree and realizes the symmetry for antiperiodic conditions. More rigorous statements are presented
in~Thm.~\ref{Thm:centr-JTL-main} and Thm.~\ref{Thm:centr-JTL-anti}.

A crucial feature of the product
$\VirN(2)=\Vir(2)\boxtimes\overline{\Vir}(2)$ of left and right
Virasoro algebras that appear in the continuum limit symplectic
fermion theory is the presence of a global $\SU(2)$ symmetry (the
`symplectic' symmetry of the theory). It turns out however that the
lattice centralizer of JTL, $\LQGodd$, does not contain the subalgebra
$\SU(2)$. What happens to this `extra symmetry' in the continuum limit
will turn out to be a crucial aspect of the problem of connecting
algebraic features of the lattice models with those of LCFTs. To
understand this better, we spend some time in 
Sec.~\ref{sec:scaling} analyzing the scaling limit of the spin
chain.
Using
general ideas about the lattice version of the stress energy tensor,
we identify particular `local' elements in the JTL algebra (such as
the generators $e_i$, or the commutators $[e_i,e_{i+1}]$) whose
long wavelength Fourier modes have a well-defined convergence to the left and right
Virasoro modes $L_n$ and $\bar{L}_n$ in the logarithmic theory of
symplectic fermions at $c=-2$. The fate of the $\SU(2)$ symmetry in
the $\gl(1|1)$ case is then discussed in
Sec.~\ref{sec:symm-continuum}.

A note on style: some of the results below -- roughly, all that
concerns algebraic aspects of the finite dimensional spin chain, as
presented in Sec.~\ref{sec:symm-sp-ch} and the three appendices -- are rigorous, and
presented accordingly in the form of propositions, theorems, {\it etc.}
While we believe the rest of the paper could be turned into fully
rigorous statements (at the price of dwelling into analysis), we have
chosen not to do so, and to remain instead close to the style of
physics literature.

Finally, we note that a lattice model going over in the continuum
limit to symplectic fermions with periodic boundary conditions has
been studied from a related but different point of view in~\cite{Saleursusy,PearceRasmussenVillani}.


\subsection{Notations}
To help the reader navigate through this paper, we provide a partial list of notations (common to this paper and its sequels):

\begin{itemize}

\item[\mbox{}] $\TL{N}$ --- the (ordinary) Temperley--Lieb algebra,

\item[\mbox{}] $\TL{N}^a$ --- the periodic Temperley--Lieb algebra,

\item[\mbox{}] $\JTL{N}$ --- the Jones--Temperley--Lieb algebra,

\item[\mbox{}] $\centJTL$ --- the  centralizer of $\JTL{N}$,

\item[\mbox{}] $\repgl$ --- the spin-chain representation of $\JTL{N}$,

\item[\mbox{}] $\LQG$ --- the full quantum group,

\item[\mbox{}] $\E$, $\F$,  $\K^{\pm1}$ --- the standard quantum group generators,

\item[\mbox{}] $\e$, $\f$ --- the renormalized powers of the generators $\E$ and $\F$,

\item[\mbox{}] $\repQG$ --- the spin-chain representation of the quantum group $\LQG$,

\item[\mbox{}] $\Vir(2)$ --- the left Virasoro algebra with $c=-2$,

\item[\mbox{}] $\VirN(2)$ --- the product of the left and right Virasoro
  algebras,

\item[\mbox{}] $\centVir$ --- the  centralizer of $\VirN(2)$,

\item[\mbox{}] $\nSU(2)$ --- Kausch's $\SU(2)$ symmetry.
\end{itemize}

\medskip

\section{Preliminaries}\label{sec:prelim}

 \subsection{The $\gl(1|1)$ super-spin chain}\label{sec:super-spin-ch-def}
 
The  $\gl(1|1)$ super-spin chain~\cite{ReadSaleur07-1} is the tensor product $\chVv=\tensor_{j=1}^{N}\chV_j$, with $\chV_j\cong\oC^2$, which consists of  $N=2L$ sites
labelled by $j=1,\ldots,2L$, with the fundamental representation of $\gl(1|1)$ on even sites and its dual on odd sites.
The  $\gl(1|1)$ algebra admits 
a free fermion representation based on operators $f_j$ and $f_j^\dagger$ which obey the anti-commutation relations 
\begin{eqnarray}\label{f-ferm-rel}
\{f_j,f_{j'}\}=0,\quad\{f^\dagger_j,f^\dagger_{j'}\}=0,\quad\{f_j,f_{j'}^\dagger\}=(-1)^{j}\delta_{jj'}.
\end{eqnarray}
The most general nearest-neighbour `Heisenberg' coupling
\begin{equation}\label{rep-JTL-1}
e_j^{\gl}= (f_j+f_{j+1})(f_j^\dagger+f_{j+1}^\dagger),\qquad 1\leq
j\leq N-1,
\end{equation}
is then a mapping onto  the $\gl(1|1)$-invariant in the
product of two neighbour tensorands\footnote{Note that this mapping is not a projector, as its square is equal to zero.}. It  can be expressed in terms of a
representation of the Temperley--Lieb algebra $\TL{2L}(m)$
 generated by $e_j$'s together with the identity, subject to the usual relations
\begin{eqnarray}
e_j^2&=&me_j,\nonumber\\
e_je_{j\pm 1}e_j&=&e_j,\label{TL}\\
e_je_k&=&e_ke_j\qquad(j\neq k,~k\pm 1),\nonumber
\end{eqnarray}
where $j=1,\ldots,N-1$. The operators
$e_j^{\gl}$ in~\eqref{rep-JTL-1}  satisfy the Temperley--Lieb algebra relations
with $m=0$ (in general for the models of~\cite{ReadSaleur07-1}, the parameter $m$ is
the superdimension of the fundamental representation). 
The open $\gl(1|1)$ spin-chain described by the coupling~\eqref{rep-JTL-1} and the Hamiltonian $-\sum_{j=1}^{N-1}e_j^{\gl}$ provides a faithful representation of $\TL{2L}(0)$.

 The closed (periodic) spin-chain is obtained simply by adding a coupling between
the sites with $j=2L$ and $j=1$, that is by adding a generator 
\begin{equation}\label{rep-JTL-2}
e_{2L}^{\gl}=(f_{2L}+f_{1})(f_{2L}^\dagger+f_{1}^\dagger),
\end{equation}
which corresponds to the periodic boundary condition
$f^{(\dagger)}_{2L+1}=f^{(\dagger)}_1$ on the lattice fermions, where notation such as $f^{(\dagger)}$ means the result holds both for $f$ and for $f^\dagger$.
The operators $e_j^{\gl}$, with $1\leq j\leq 2L$, satisfy the
relations~\eqref{TL} with $m=0$ where the indices are now  interpreted
modulo $N$ (the
abstract algebra generated by $e_j$ with these relations as the
defining relations is a quotient of the affine Hecke algebra of $A$-type and
is also known as the periodic
Temperley--Lieb algebra~\cite{MartinSaleur1,MartinSaleur}.) Note that all the operators  $e_j^{\gl}$ are self-adjoint with respect to the non-degenerate inner product
 (defined such that $\inn{f_j x}{y}=\inn{x}{f_j^{\dagger}y}$ for any $x,y\in\Hilb_{2L}$), which is indefinite due to the sign factor in~\eqref{f-ferm-rel}. 

  The critical Hamiltonian for our model is then expressed as
 \begin{equation}\label{Hamilt-def}
 H=-\sum_{j=1}^{2L}e^{\gl}_j
 \end{equation}
  (note that for this model the sign of $H$ is irrelevant, as the
algebra obeyed by $e_j$'s and $-e_j$'s are identical. This is of
course not the case for other values of $m$). We note that the Hamiltonian is also self-adjoint.

In the periodic case, we also  consider the
generators $u^2$ and $u^{-2}$ of translations by two sites to the right
and to the left, respectively.  The following additional 
relations are then obeyed,
\begin{equation}\label{aTL-rel}
\begin{split}
u^2e_ju^{-2}&=e_{j+2},\\
u^2e_{N-1}&=e_1\ldots e_{N-1}.
\end{split}
\end{equation}
%
The expressions for the $e_j^{\gl}$ defined in~\eqref{rep-JTL-1} and \eqref{rep-JTL-2}
 together with the translations $u^{\pm2}$ of the periodic spin-chain
provides a  representation of the so-called  \textit{Jones--Temperley--Lieb} (JTL)
algebra $\JTL{2L}(m=0)$ which we denote by
$\repgl:\JTL{2L}(0)\to\Endo_{\oC}(\chVv)$. The representation $\repgl$
is known to be non-faithful and non-semisimple~\cite{ReadSaleur07-1}. We give a precise definition of the JTL algebra in our second paper~\cite{GRS2}. 
In the following, we usually  suppress all
 reference to $m$ and suppose $m=0$. 

 
\subsection{A relation with XX spin-chains}
It will be useful in what follows to observe that the $\gl(1|1)$
spin-chain representation $\repgl$ is equivalent to a twisted XX
spin-chain representation $\repXX$ of $\JTL{2L}$. The expression of
the Temperley--Lieb generators in this case is well known for the open
chain~\cite{PasquierSaleur},
\begin{equation}
\repXX(e_j) \equiv e^{XX}_j=-\half\left[\sigma_j^x\sigma_{j+1}^x+\sigma_j^y\sigma_{j+1}^y - i(\sigma_j^z-\sigma_{j+1}^z)\right],
\end{equation}
%
where $\sigma^x_j$, $\sigma^y_j$ and $\sigma^z_j$ are usual Pauli matrices  acting on a $j$th tensorand,
\begin{equation}\label{Pauli}
\sigma^x = 
\begin{pmatrix}
0 & 1\\
1 & 0
\end{pmatrix}, \quad
\sigma^y = 
\begin{pmatrix}
0 & -i\\
i & 0
\end{pmatrix}, \quad
\sigma^z = 
\begin{pmatrix}
1 & 0\\
0 & -1
\end{pmatrix}.
\end{equation}
We also use the notations $\sigma^{\pm}=\half\bigl(\sigma^x\pm i\sigma^y\bigr)$ in what follows.

 To get  equivalence  in the closed case we need to  set in the expression for $e^{XX}_{2L}$ the following:
\begin{equation}
\sigma_{2L+1}^{\pm}=-(-1)^{S^z}\sigma_1^\pm,\qquad \text{with}\quad S^z=\half\sum_{j=1}^{2L}\sigma_j^z.
\end{equation}
This means that a periodic $\gl(1|1)$ (alternating) spin-chain
corresponds to a periodic XX spin-chain for odd values of the spin
 $S^z$ and to an antiperiodic  XX spin chain 
for even values.
	      
To prove this -- and for later computational simplicity -- it is
 useful to reformulate everything in terms of ordinary fermions
 $c_j^{(\dagger)}$ obeying anticommutation relations
 $\{c^{(\dagger)}_{j},c^{(\dagger)}_{j'}\}=0$, $\{c_j,c_{j'}^\dagger\}=\delta_{jj'}$. Starting from the XX representation $\repXX$ and using the
 Jordan--Wigner transformation %
 \begin{eqnarray}\label{JW-trans}
 c_j^\dagger&=&i^{j-1}~i^{\sigma_1^z+\ldots+\sigma_{j-1}^z}\otimes{\sigma_j^+},\nonumber\\
  c_j&=
 &i^{-j+1}~i^{-\sigma_1^z-\ldots-\sigma_{j-1}^z}\otimes {\sigma_j^-}
 \end{eqnarray} 
(in each case, both $i$ and $-i=i^{-1}$ can indeed be used
 interchangeably, as the whole prefactor is real), one obtains
\begin{equation}\label{eq:PTL-rep-first}
e^{XX}_j=c_j c_{j+1}^\dagger + c_{j+1} c_j^\dagger 
  +i\left(c_j^\dagger c_j-c^\dagger_{j+1}c_{j+1}\right),
\quad c^{(\dagger)}_{2L+1}=(-1)^{L}c^{(\dagger)}_1,\qquad 1\leq j\leq 2L.
\end{equation}

Meanwhile, we can also  reexpress the $f_j^{(\dagger)}$'s  from the
 $\gl(1|1)$ chain in terms of  these ordinary fermions:
\begin{equation}\label{eq:def-f-c}
f_j^\dagger=i^{j}c_j^\dagger,\qquad
f_j=i^{j}c_j
\end{equation}
leading to the identification
\begin{equation}
e^{\gl}_j=i(-1)^j\left[c_{j+1}c_j^\dagger +c_jc_{j+1}^\dagger+i(c_j^\dagger c_j-c_{j+1}^\dagger c_{j+1})\right]
= i(-1)^je^{XX}_j
\end{equation}
which gives an isomorphism of $\repgl$ with the representation of $\JTL{2L}$~\eqref{eq:PTL-rep-first} obtained in 
 the XX chain (the factor $i(-1)^j$ leaving the cubic relation
 invariant). 
We note also that our periodic $\gl(1|1)$ chain corresponds to
periodic ordinary fermions $c_j^{(\dagger)}$ if $L$ is even, and antiperiodic fermions
if $L$ is odd.
 

\subsection{The continuum limit and the importance of the symmetry algebra. }
  The continuum limit of
the  $\gl(1|1)$ spin chain~\eqref{Hamilt-def} is well
known~\cite{Saleursusy,ReadSaleur07-1}, and 
  corresponds to the symplectic fermions
logarithmic CFT at $c=-2$~\cite{Kausch}. It also describes the long distance properties
of dense polymers. Less well known are the associated algebraic
features like lattice construction of left and right Virasoro modes
  $L_n$, $\bar{L}_n$ based on $\JTL{2L}$, as well as 
  the centralizer of $\JTL{2L}$, 
 which are the main topic of this paper.
We recall here briefly  that for  an algebra $A$ and its representation space
  $\chVv$, the centralizer of $A$ is an algebra
  $\cent_{A}$ of all commuting operators $[\cent_{A},A]=0$,
  {\it i.e.}, the centralizer is defined as the algebra of intertwiners $\cent_{A}=\Endo_{A}(\chVv)$.

In the open case, the $\gl(1|1)$ spin chain exhibits a large symmetry algebra
dubbed ${\cal A}_{1|1}$ in~\cite{ReadSaleur07-1}. This algebra is the
centralizer $\centTL$ of $\TL{2L}(0)$ and is
generated by the identity and the five generators
\begin{eqnarray}\label{FF-sym-def}
F_{(1)}&=&\sum_j f_j\nonumber,\\
F^\dagger_{(1)}&=&\sum_j f_j^\dagger\nonumber,\\
F_{(2)}&=&\sum_{j<j'}f_jf_{j'},\\
F_{(2)}^\dagger&=&\sum_{j<j'} f_{j'}^\dagger f_j^\dagger\nonumber,\\
{\N}&=&\sum_j (-1)^jf_j^\dagger f_j-L\nonumber,
\end{eqnarray}
where the fermions-number operator $\N$ should not be confused
with the notation for a number of sites $N$.
%
The operators $F_{(1)}$, $F_{(1)}^\dagger$ generate the subalgebra
$\psl(1|1)$ while $F_{(2)}$, $F^\dagger_{(2)}$, and $\N$ generate an
$s\ell(2)$ Lie subalgebra, with respect to which
$F_{(1)}$ and $F_{(1)}^\dagger$ transform as a doublet. The resulting Lie
superalgebra~${\cal A}_{1|1}$ is the semi-direct product of these two
algebras. It turns out to coincide with the full quantum group
representation $\repQG\bigl(\LQG\bigr)$, for
$\q=i$ (see Sec.~\ref{sec:symm-sp-ch} for  definitions).

\subsubsection{The continuum (scaling) limit}\label{prelim:scal-lim}
It is time here to discuss a bit more precisely what is meant by
the continuum limit, first in the general case.  It is always
possible~\cite{ReadSaleur07-1} to consider a
$N\to\infty$ limit (or so-called projective/inductive limit) of the algebraic structures in the spin-chains,
especially the centralizer of the TL algebra and its modules, and the
modules over the TL algebra as well, from a purely algebraic point of view.
But for our purpose more is required. We have chosen a
Hamiltonian $H$ for the spin-chain (such as~\eqref{Hamilt-def}), which is an element of (the representation of) an algebra like $\TL{N}$ or $\JTL{N}$ to which we refer to as the ``hamiltonian
densities'' algebra. Physically, we
 focus on low-energy (and long-wavelength) properties in a
$N\to\infty$ limit. We can for instance introduce a lattice spacing
 between sites and  consider the limit as taken with a
lattice spacing distance tending to zero as $N\to\infty$, such that
the length of the chain remains constant in the limit, equal to $1$,
say (hence the term continuum limit), and also with 
the Hamiltonian $H$ rescaled by $N$.
 Then, low energies and long wavelengths
mean excitation energies and wavevectors of order $1$ in these
units. We are especially interested in cases where this continuum
limit is a non-trivial conformal field theory, which in these units
implies that excited states at energies of order $1$ above the ground
state do exist.  Note that in practice, it is equivalent and more
convenient to keep the lattice spacing constant as $N\to\infty$. In
this case, low energies and long wavelengths mean excitation energies
and wavevectors of order $1/N$. To get finite results to be compared
with those of the CFT one must, for instance, rescale then the gaps by
$N$, hence the name scaling limit, which we will use equivalently.

 It is not entirely clear how the limit can be taken in a
mathematically rigorous way, but roughly we want to take the
eigenvectors of $H$ that have low-energy eigenvalues only, and we
expect that the inner products among these vectors can be made to tend
to some limits. Further, if we focus on long wavelength Fourier
components of the set of local generators of the hamiltonian densities
algebra, we expect their limits to exist, and their commutation
relations to tend to those of the Virasoro generators $L_n$ (or $L_n +
\bar{L}_{-n}$ in the closed chain case), in the sense of strong
convergence of operators in the basis of low-energy
eigenvectors\footnote{See a more precise reformulation in
Sec.~\ref{sec:emerge-latVir} in the case of the periodic $\gl(1|1)$
spin-chain.}. Then, the modules over the (J)TL algebra restricted to
the low-energy states become in the scaling limit modules over the
universal enveloping algebra of the Virasoro algebra (the product of
left and right Virasoro algebras in the closed chain case), or
possibly even a larger algebra.

\medskip
An advantage in using the centralizer  is that 
it gives a control on representation theory of the ``hamiltonian
densities'' algebra on a finite chain and even on fusion
rules, as was demonstrated in~\cite{ReadSaleur07-2}.
It is clear that the centralizer of the hamiltonian densities is a symmetry of the low-lying spectrum of the Hamiltonian for any finite $N$.
The symmetry (centralizer) algebra in the scaling limit, which commutes with the Virasoro algebra (the product of left and right Virasoro algebras in the closed chain case), must be thus at least as large as that in the finite-$N$ chains.
For example,
the decomposition of the open $\gl(1|1)$ spin-chain as a (bi)module over the pair
$(\TL{N},{\cal A}_{1|1})$ of mutual centralizers goes over in the
scaling limit to a semi-infinite (`staircase') (bi)module~\cite{ReadSaleur07-2} over the
Virasoro algebra $\Vir(2)$, with the central charge $c=-2$, and (the
scaling limit of) ${\cal A}_{1|1}$, which is just an infinite-dimensional representation
of $\LQGi$. In this case, we thus have essentially the same centralizer for lattice and continuum models.

\medskip

While the scenario described above can not be fully established
analytically for general models, it is confirmed a posteriori by the
validity of the results obtained in~\cite{ReadSaleur07-2}. Of course,
in some special cases such as free theories, much more can be said,
and we will go back to the question, and a more rigorous
reformulation, of the scaling limit for the closed $\gl(1|1)$
spin-chains and the associated symplectic fermions CFT in the
following sections.

 In the periodic $\gl(1|1)$ spin-chains, while the $\gl(1|1)$ symmetry remains, the
equivalent of the generators $F_{(2)}$ and $F^\dagger_{(2)}$ introduced in~\eqref{FF-sym-def} disappears,
since the summation, extended around the chain, vanishes by
anticommutation of the $f_j$'s. Meanwhile, the Temperley--Lieb algebra
is replaced by $\JTL{N}$.  What replaces the appealing symmetry algebra known to exist in the open case when one turns to periodic
systems is the subject of the following  section.

\section{Symmetries for the spin chain}\label{sec:symm-sp-ch}
 
 \subsection{Quantum group results}\label{sec:qg-res}
 
We find it convenient here to start with some notations and results
about quantum groups when the deformation parameter $\q$ is a root of
unity.  The \textit{full} quantum group $\LQG$ with $\q = e^{i\pi/p}$,
for integer $p\geq2$, is generated by $\E$, $\F$, $\K^{\pm1}$, and $\e$, $\f$,
$\h$. The first three generators satisfy the standard quantum-group relations
\begin{equation*}
  \K\E\K^{-1}=\q^2\E,\quad
  \K\F\K^{-1}=\q^{-2}\F,\quad
  [\E,\F]=\ffrac{\K-\K^{-1}}{\q-\q^{-1}},
\end{equation*}
with additional relations
\begin{equation*}
  \E^{p}=\F^{p}=0,\quad \K^{2p}=\one,
\end{equation*}
and  the divided powers $\f\sim \F^p/[p]!$ and $\e\sim \E^p/[p]!$ satisfy the usual $s\ell(2)$-relations:
\begin{equation*}
  [\h,\e]=\e,\qquad[\h,\f]=-\f,\qquad[\e,\f]=2\h.
\end{equation*}
The full list of relations with comultiplication formulae are borrowed
from~\cite{BFGT} and listed in App.~A where we also give the simple correspondence with the 
quantum group generators
$S^{\pm}$, $S^z$ and $\q^{S^z}$ used commonly in the spin chain literature.

 For applications to $\gl(1|1)$
spin-chains, we consider only the case $p=2$ and set  in what follows
$\q\equiv i$. As a module over $\LQG$, the spin
chain $\chVv$ is a tensor product of
two-dimensional irreducibe representations such that $\E\to\sigma^+$,
$\F\to\sigma^-$, $\K\to \q\sigma^z$, and $\e=\f=0$, where $\sigma^{\pm}=\half\bigl(\sigma^x\pm i\sigma^y\bigr)$ and the Pauli matrices $\sigma^{x,y,z}$ are from~\eqref{Pauli}.
Using the $(N-1)$-folded
comultiplications~\eqref{N-fold-comult-cap},
\eqref{N-fold-comult-ren-e}, and~\eqref{N-fold-comult-ren-f} together
with the Jordan-Wigner transformation~\eqref{JW-trans}, we obtain
the representation $\repQG:\LQG\to\Endo_{\oC}(\chVv)$ (usual fermionic expressions)
\begin{align}
&\repQG(\E)\equiv\Delta^{N-1}(\E) = \sum_{1\leq j\leq N}\q^j c_j^{\dagger}\, \repQG(\K) = F^\dagger_{(1)}\,\repQG(\K),\notag\\
 &\repQG(\F)\equiv\Delta^{N-1}(\F) = \sum_{1\leq j\leq N}\q^{j-1} c_j = \q^{-1}F_{(1)},\label{QG-fermf-1}\\
&\repQG(\K)\equiv\Delta^{N-1}(\K) = (-1)^{\repQG(2\h)},\notag
\end{align}
and
\begin{eqnarray}
\repQG(\e)&\equiv&\Delta^{N-1}(\e) = \sum_{1\leq j_1<j_2\leq N}(-1)^{j_1+j_2}\q^{1-j_1-j_2} c_{j_1}^{\dagger}c_{j_2}^{\dagger}=\q^{-1}\sum_{1\leq j_1<j_2\leq N} f_{j_1}^\dagger f_{j_2}^\dagger=\q^{-1} F_{(2)}^\dagger,\nonumber\\
\repQG(\f)&\equiv&\Delta^{N-1}(\f) = \sum_{1\leq j_1<j_2\leq N}\q^{j_1+j_2-1} c_{j_1}c_{j_2} =\q\sum_{1\leq j_1<j_2\leq N} f_{j_1}f_{j_2}=\q F_{(2)},\label{QG-fermf-2}\\
\repQG(2\h)&\equiv&[\repQG(\e),\repQG(\f)]=\sum_{1\leq j\leq N}(-1)^j f_j^\dagger f_j-L,\nonumber
\end{eqnarray}
where we also detailed the correspondence with the generators~\eqref{FF-sym-def} of the $\TL{}$-centralizer $\centTL={\cal A}_{1|1}$.

\newcommand{\egl}{e^{g\ell}}

As  noted above, the symmetry algebra ${\cal A}_{1|1}$ of the
open spin-chain~\cite{ReadSaleur07-1}
coincides with the representation of the full quantum group $\repQG\bigl(\LQG\bigr)$, for
$\q=i$. The $\gl(1|1)$ (in fact $\psl(1|1)$ completed with
$(-1)^{\N}$) meanwhile corresponds to the representation of the \textit{restricted} quantum
group $\UresSL 2$ generated by $\E$, $\F$, and $\K^{\pm 1}$, with
$\E:\Hilb_{[n]}\to\Hilb_{[n+1]}$ and $\F:\Hilb_{[n]}\to\Hilb_{[n-1]}$
(satisfying  $\F^2=\E^2=0$)
 and $\Hilb_{[n]}$ denotes the subspace with $2\h=S^z=n$. The
statement that the representation $\repgl$ of the $\JTL{N}$ algebra obtained
from the periodic $\gl(1|1)$
spin-chain~\eqref{rep-JTL-1}-\eqref{Hamilt-def} does exhibit the $\gl(1|1)$ symmetry
corresponds to
an inclusion\footnote{A
  similar observation was
made in~\cite{PasquierSaleur} for $\egl_j$ replaced by $H^{1+
(S^z\,\text{mod}\,2)}$, where $H^{0,1}$ denotes the periodic
(resp. antiperiodic) XX spin-chain Hamiltonian.} $\repQG\bigl(\UresSL 2\bigr)\subset\centJTL$.  The question is whether there are more generators in the centralizer
  $\centJTL$ of $\JTL{N}$. 

\subsection{Fourier transforms}\label{sec:fourier}

It is convenient in the following to use Fourier transforms, and introduce, for $1\leq m\leq N$ (recall that we set $N=2L$),
\begin{equation}\label{def-ferm}
\ferm_{p_m} = \ffrac{1}{\sqrt{N}}\sum_{k=1}^N e^{-ikp_m}c_{k}, \qquad
\fermd_{p_m} = \ffrac{1}{\sqrt{N}}\sum_{k=1}^N e^{ikp_m}c^{\dagger}_{k}
\end{equation}
with the set of allowed momenta
\begin{equation}\label{momenta-set}
p_m=
\begin{cases}
\frac{2\pi m}{N},\qquad &L-\text{even},\\
\frac{(2m-1)\pi}{N}, &L-\text{odd},
\end{cases}
\qquad\quad 1\leq m\leq N,
\end{equation}
and with the usual anti-commutation relations
\begin{equation*}
\{\ferm_{p_1},\fermd_{p_2}\} = \delta_{p_1,p_2}, \qquad \{\ferm_{p_1},\ferm_{p_2}\} = \{\fermd_{p_1},\fermd_{p_2}\} = 0.
\end{equation*}

\subsubsection{Quantum group generators}
We then find using a direct calculation that 
\begin{equation}\label{QG-ferm-1}
\repQG(\E)=  \sqrt{N}\fermd_{\pi/2}(-1)^{S^z},
\qquad 
\repQG(\F) = \q^{-1}\sqrt{N}\ferm_{3\pi/2},
\end{equation}
and the renormalized powers read
\begin{equation}
\repQG(\e) =-\q\sum_{p\ne \frac{\pi}{2}}\frac{e^{i(\frac{\pi}{2}+p)}\fermd_p\,\fermd_{\pi-p} 
+ 2\,\fermd_p\,\fermd_{\pi/2}}{e^{i(\frac{\pi}{2}+p)}+1}
=\sum_{\substack{p=\frac{\pi}{2}+\frac{2\pi}{N}\\\text{step}=\frac{2\pi}{N}}}^{\frac{3\pi}{2}}
\tan\ffrac{1}{2}\bigl(\ffrac{\pi}{2}+p\bigr)\,\fermd_{p}\,\fermd_{\pi-p} 
- 2i\sum_{p\ne\frac{\pi}{2}}\frac{\fermd_p\,\fermd_{\pi/2}}{e^{i(\frac{\pi}{2}+p)}+1},\label{QG-ferm-2}
\end{equation}
and
\begin{equation}\label{QG-ferm-3}
\repQG(\f) =
\q\sum_{p\ne \frac{3\pi}{2}}\frac{e^{i(\frac{3\pi}{2}-p)}\ferm_p\,\ferm_{\pi-p} 
- 2\,\ferm_p\,\ferm_{3\pi/2}}{e^{i(\frac{3\pi}{2}-p)}-1}
=-\sum_{\substack{p=\frac{\pi}{2}\\\text{step}=\frac{2\pi}{N}}}^{\frac{3\pi}{2}-\frac{2\pi}{N}}
\cot\ffrac{1}{2}\bigl(\ffrac{\pi}{2}+p\bigr)\,\ferm_{p}\,\ferm_{\pi-p} 
- 2i\sum_{p\ne\frac{3\pi}{2}}\frac{\ferm_p\,\ferm_{3\pi/2}}{e^{-i(\frac{\pi}{2}+p)}-1}.
\end{equation}
These results agrees with ones established before in~\cite{DFMC} in a slightly different basis.

\subsubsection{JTL generators in terms of  Fourier transforms}

Finally, we can reexpress the generators $e_j$ of $\JTL{N}$ themselves:
\begin{equation}\label{eq:PTL-ferm}
e_j^{\gl}= (-1)^j \ffrac{i}{N}\sum_{p_1,p_2}e^{ij(p_2-p_1)}( i - e^{-ip_1})(1+ie^{ip_2})
\fermd_{p_1}\ferm_{p_2},\qquad 1\leq j\leq N,
\end{equation}
where the sum is taken over all possible momenta defined
in~\eqref{momenta-set}. In what follows, we use simply the notation
$e_j$ for the representation $e_j^{\gl}$ in~\eqref{eq:PTL-ferm}.
 
 In order to
translate (the sub-index of) the $\JTL{N}$ generators $e_j$, we demand
\begin{equation}\label{gen-u-acts}
u^2 f_j^{(\dagger)}u^{-2}=f_{j+2}^{(\dagger)}
\end{equation}
which means, in terms of the Fourier modes, that
\begin{equation}
u^2 \theta_{p_m}u^{-2}=-e^{2ip_m}  \theta_{p_m},
\qquad
u^2 \theta_{p_m}^{\dagger}u^{-2}=-e^{- 2ip_m}  \theta_{p_m}^{\dagger}.
\end{equation}
It is then convenient to express the generator $u^2$ in terms of these
Fourier modes. For this, we observe that, if $\theta$ and $\theta^\dagger$
are a conjugate pair of fermions, $\theta^2=
\left(\theta^\dagger\right)^2=0$, $\{\theta,\theta^\dagger\}=1$, we
have
\begin{eqnarray*}
e^{\lambda \theta^\dagger\theta}~\theta~ e^{-\lambda \theta^\dagger\theta}=e^{-\lambda} \theta\nonumber,\\
e^{\lambda \theta^\dagger\theta}~\theta^\dagger ~e^{-\lambda \theta^\dagger\theta}=e^{\lambda}\theta^\dagger,
\end{eqnarray*}
from which we can finally write the coherent state
representation
\begin{equation}\label{uu-theta}
u^2=\exp\left[\sum_{m=1}^{N} (i\pi-2i p_m)\theta_{p_m}^\dagger\theta_{p_m}\right].
\end{equation}

\medskip

We can then easily check the only linear combinations of fermions
which commute with $e_j$, where $1\leq j\leq N$, and $u^2$ are $\ferm_{3\pi/2}$ and
$\fermd_{\pi/2}$ . So, we have\footnote{We sometimes simplify
  expressions omitting more bulky and pedantic notations like $\bigl[\repgl(\JTL{N}),\repQG\bigl(\UresSL 2\bigr)\bigr]=0$.}
\begin{equation*}
\left[\JTL{N},\UresSL 2\right]=0,
\end{equation*}
as was mentioned above.
To find
additional generators in the centralizer~$\centJTL$, we look for 
 elements in the centralizer $\centTL$ of the subalgebra $\TL{N}\subset\JTL{N}$. This centralizer is the 
quantum group $\LQG$, which differs from 
$\UresSL2$)  by the presence of renormalized powers $\e$ and $\f$, and
the Cartan $\h=S^z/2$. It will turn out that the centralizer of $\JTL{N}$
can be  identified with the Lusztig limit ($\q\to i$) of appropriate
polynomials in generators of $\LQG$, as we now describe.

\subsection{The centralizer of $\JTL{N}$}

Using~\eqref{QG-ferm-2}, \eqref{QG-ferm-3}
and~\eqref{eq:PTL-ferm}, we calculate the commutators between $e_j$
and the renormalized powers,
\begin{align}
[e_j,\f] &= 
\begin{cases}
0, &1\leq j\leq N-1,\\
2i\sum_{p'\ne \frac{3\pi}{2}}(e^{ip'}-i)\ferm_{p'}\ferm_{3\pi/2}, \quad & j=N,
\end{cases}\label{eq:comm-TL-f}\\
[e_j,\e] &=
\begin{cases}
0, &1\leq j\leq N-1,\\
2i\sum_{p'\ne \frac{\pi}{2}}(e^{-ip'}-i)\fermd_{p'}\fermd_{\pi/2}, \quad & j=N.
\end{cases}\label{eq:comm-TL-e}
\end{align}
We thus see that the renormalized powers $\e$ and $\f$ are not
contained in the centralizer $\centJTL$ unless $L=1$ because of the last
Temperley--Lieb generator $e_N$ making the system
periodic. We note also that  for a finite chain the only powers of $\e$
  and $\f$ that commute with $\JTL{N}$ are $\e^{N/2}$ and
  $\f^{N/2}$. They are the highest non-zero powers and just mix the
  two  $\JTL{N}$-invariants -- the states with the all spins up or
  down.

To build elements in $\centJTL$, we can then modify the $\e$ and $\f$ by elements
 from the respective annulators of the
 commutators~\eqref{eq:comm-TL-f} and~\eqref{eq:comm-TL-e}. The first
 obvious candidates for the modifying elements are
 $\E\sim\fermd_{\pi/2}$ and $\F\sim\ferm_{3\pi/2}$
 (see~\eqref{QG-ferm-1}), respectively:
\begin{equation*}
[e_j,\f\,\F]=[e_j,\f]\F=0, \qquad [e_j,\e\E]=[e_j,\e]\E=0, \qquad 1\leq
j\leq N.
\end{equation*}
Moreover, there are  many other elements in $\LQG$ commuting with
all $e_j$'s:
\begin{equation}\label{eq:comm-TL-efn}
[e_j,\f^n \F]=[e_j,\f]\F \f^{n-1}=0, \qquad [e_j,\e^m \E]=[e_j,\e]\E \e^{m-1}=0, \qquad 1\leq j\leq N, 
\quad n,m\geq 1,
\end{equation}
which can be easily proved by induction.

In particular, we have the equality in $\LQG$,
\begin{equation*}
\f^n \F \e^m \E = \f^n \e^m \F\E,
\end{equation*}
which follows from the relations
\begin{equation*}
[\F,\e^m] = m\ffrac{\K+\K^{-1}}{2}\e^{m-1}\E, \qquad [\E,\f^n] = n\ffrac{\K+\K^{-1}}{2}\f^{n-1}\F, \quad\text{and}\quad \E^2=\F^2=0,
\end{equation*}
where the first two are obtained using~\eqref{Ef-rel}.

\begin{dfn}\label{dfn:Uqodd}
We now introduce the associative algebra $\LQGodd$, generated 
 $\FF n$, $\EE m$ ($n,m\in\oN\cup\{0\}$), $\K^{\pm1}$, 
$\h$ with the following defining relations
\begin{gather}
\K \EE m \K^{-1} = \q^2 \EE m,\qquad \K \FF n \K^{-1} = \q^{-2} \FF n,
\qquad \K^4=\one,\\
[\EE m,\FF n] = \sum_{r=1}^{\text{min}(n,m)}P_r(\h)\FF{n-r}\EE{m-r},\\
\EE{m}\EE{n} = \EE{n}\EE{m} = 0,\quad \FF{m}\FF{n}= \FF{n}\FF{m} =
0,\quad [\K,\h]=0,\label{Uqodd-dfn-3}\\
[\h,\EE m] = (m + \ffrac{1}{2})\EE{m}, \qquad [\h,\FF n] = -(n + \ffrac{1}{2})\FF{n},\label{Uqodd-dfn-4}
\end{gather}
where $P_r(\h)$ are polynomials on $\h$ from the usual $s\ell(2)$
relation
$[\e^m,\f^{n}]=\sum_{r=1}^{\text{min}(n,m)}P_r(\h)\f^{n-r}\e^{m-r}$,
and we assume that $\sum_{r=1}^{0}f(r)=0$.
\end{dfn}

The algebra $\LQGodd$ has the PBW basis $\EE n \FF m \h^k\K^l$, with
$n,m,k\geq0$ and $0 \leq l \leq 3$.
The positive Borel
subalgebra is generated by $\h$, $\K$ and $\EE n$ while the negative subalgebra -- by $\h$, $\K$ and $\FF n$, for $n\geq0$.

\begin{rem}\label{rem:inj-hom-LQGodd}
We note there is an injective homomorphism $\LQGodd\to\LQG$:
\begin{equation*}
\EE m \mapsto \e^m\E\,\ffrac{\K^2+\one}{2}, \qquad \FF n \mapsto \f^n\F\,\ffrac{\K^2+\one}{2}.
\end{equation*}
This  subalgebra in $\LQG$ can be realized as the limit $\q\to i$ of the renormalized \textit{odd}-powers of the $\E$ and $\F$ in $U_{\q}s\ell(2)$ at generic $\q$:
\begin{equation*}
\ffrac{\E^{2m+1}}{[2m+1]!} \xrightarrow[\;\q\to i\;]{} \e^{m}\E, \qquad
\ffrac{\F^{2n+1}}{[2n+1]!} \xrightarrow[\;\q\to i\;]{} \f^{n}\F, 
\qquad n,m\geq 0,
\end{equation*}
up to some irrelevant coefficients.
\end{rem}

We are now ready to formulate the main result of this section about the
centralizer of the image of $\JTL{2L}(0)$ under the representation $\repgl$.
\begin{thm}\label{Thm:centr-JTL-main}
On the alternating
periodic $\gl(1|1)$ spin chain $\Hilb_{2L}$,
the centralizer $\centJTL$ of the image of
Jones--Temperley--Lieb algebra $\repgl\bigl(\JTL{2L}(0)\bigr)$ (where
$\repgl$ is defined in~\eqref{rep-JTL-1} and~\eqref{rep-JTL-2}) is the  
 subalgebra in $\repQG\bigl(\LQG\bigr)$
generated by $\LQGodd$ and $\f^L$, $\e^L$. 
\end{thm}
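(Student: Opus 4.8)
\section*{Proof plan for Theorem~\ref{Thm:centr-JTL-main}}

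The whole argument is organized by one reduction. Since $\TL{N}\subset\JTL{N}$, every element of $\centJTL$ in particular commutes with $\TL{N}$, hence lies in $\centTL=\repQG(\LQG)$; conversely $\JTL{N}$ is generated by $\TL{N}$ together with the last generator $e_N$ and the translations $u^{\pm2}$. Therefore
\begin{equation*}
\centJTL=\bigl\{\,x\in\repQG(\LQG)\ :\ [x,e_N]=0\ \text{ and }\ [x,u^2]=0\,\bigr\},
\end{equation*}
and the theorem amounts to computing, inside the explicitly known algebra $\repQG(\LQG)$, the joint commutant of the two ``extra'' operators $e_N$ and $u^2$, both of which we have written in closed fermionic form in~\eqref{eq:PTL-ferm} and~\eqref{uu-theta}. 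I would prove the two inclusions separately.

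For the inclusion $\supseteq$ one must check that the images of the generators of $\LQGodd$, namely $\e^m\E\,\half(\K^2+\one)$, $\f^n\F\,\half(\K^2+\one)$, $\h$ and $\K$, together with $\f^L$ and $\e^L$, all commute with $e_N$ and with $u^2$. The Cartan data $\h=S^z/2$, $\K=\q^{S^z}$ and the projector $\half(\K^2+\one)$ are functions of $S^z$, which is preserved by every $e_j$ and by $u^2$, so these are central for trivial reasons. For $\e^m\E$ and $\f^n\F$, commutation with every $e_j$ (including $e_N$) is exactly~\eqref{eq:comm-TL-efn}; commutation with $u^2$ follows by computing the translation correction, observing that conjugating $\f$ by $u^2$ changes it by a term carrying a factor $F_{(1)}$ on the right (dually for $\e$ and $F_{(1)}^\dagger$), so the additional rightmost factor $\F\sim F_{(1)}$ (dually $\E\sim F_{(1)}^\dagger$) kills this correction via $F_{(1)}^2=(F_{(1)}^\dagger)^2=0$. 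Finally $\repQG(\f^L)$ is, up to a nonzero scalar, the top monomial $f_1f_2\cdots f_N$ (all other terms of $F_{(2)}^L$ vanish by fermion repetition) and dually $\repQG(\e^L)\propto f_1^\dagger\cdots f_N^\dagger$; the monomial $f_1\cdots f_N$ is annihilated on both sides by each $e_j=(f_j+f_{j+1})(f_j^\dagger+f_{j+1}^\dagger)$ --- the right action vanishes because reinserting $f_j$ or $f_{j+1}$ repeats a fermion, and the left action vanishes after a short sign bookkeeping using $(f_j+f_{j+1})^2=0$ --- and it is invariant under translation by two sites, since moving the block $f_1f_2$ past the remaining $N-2$ fermions produces the sign $(-1)^{2(N-2)}=1$. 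Hence the subalgebra generated by $\LQGodd$ and $\f^L,\e^L$ lies in $\centJTL$.

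The inclusion $\subseteq$ is the substance. Take $x\in\centJTL\subseteq\repQG(\LQG)$ and write it in a triangular PBW normal form, say as a combination of monomials $\f^{n}\F^{b}\,\phi(\h,\K)\,\e^{m}\E^{a}$ with $a,b\in\{0,1\}$ and $0\le n,m\le L$ (using $\E^2=\F^2=0$, the commutativity of $\f$ with $\F$ and of $\e$ with $\E$, and the fact that $\e^{L+1}=\f^{L+1}=0$ in this representation), keeping in mind that $\repQG$ is not faithful so this presentation is unique only modulo $\ker\repQG$. Since $\F,\E,\h,\K$ already commute with $e_N$ and $u^2$, the conditions $[x,e_N]=0$ and $[x,u^2]=0$ constrain only the ``$\e,\f$ content'' of $x$. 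Using~\eqref{eq:comm-TL-f}, \eqref{eq:comm-TL-e} and their inductive consequences, a monomial with $\f$-part $\f^{n}$ ($1\le n<L$) survives the bracket with $e_N$ only if the single mode $\ferm_{3\pi/2}$ is present so as to absorb $[e_N,\f]\propto\sum_{p'}(e^{ip'}-i)\ferm_{p'}\ferm_{3\pi/2}$, i.e.\ only if it carries the factor $\F$ on the $\f$-side; dually for the $\e$-part and $\E$; the only exception is the extreme power, which is the translation-invariant monomial $\f^L$ or $\e^L$ already on our list. The condition $[x,u^2]=0$ then forces the Cartan dressing $\phi(\h,\K)$ of such a term to be proportional to $\half(\K^2+\one)$ --- equivalently, it reassembles the surviving terms exactly into the images of $\EE m$ and $\FF n$. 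Collecting cases, every $x\in\centJTL$ is a polynomial in $\{\e^m\E\,\half(\K^2+\one),\ \f^n\F\,\half(\K^2+\one),\ \h,\ \K,\ \e^L,\ \f^L\}$, which is the claimed subalgebra.

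The main obstacle is making the phrase ``the constraint forces exactly the listed combinations'' in the last paragraph into a proof rather than a plausibility argument. Two points need care. First, because $\repgl$ and $\repQG$ are non-faithful one cannot argue purely inside the abstract quantum group: the analysis of $[x,e_N]=0$ and $[x,u^2]=0$ must be carried out with the actual fermionic operators and becomes a finite but bulky linear-algebra problem on the coefficients of fermion bilinears and their products. Second, one should pin down the solution space on the nose; the cleanest route is to complement the structural argument above by a dimension count --- computing $\dim\centJTL$ from the free-fermion (twisted XX) realization and matching it with $\dim$ of the subalgebra generated by $\LQGodd$ and $\f^L,\e^L$ --- which also disposes of the apparent over-counting caused by the fact that $\repQG(\E)$ and $\repQG(\F)$ visibly commute with $e_N$ and $u^2$ and hence must already sit inside the claimed subalgebra (this uses relations of $\repQG(\LQG)$, e.g.\ that $\repQG(\E)$ annihilates the image of $\repQG(\E\,\half(\K^2+\one))$ and agrees with $\repQG(\EE 0\FF 0)$ on the image of $\repQG(\F\,\half(\K^2+\one))$, together with the fact that these images exhaust the odd-$S^z$ sectors). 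I expect this last step --- the module-theoretic input identifying the odd-$S^z$ part of $\chVv$ --- to be the part that genuinely requires the machinery of the appendices and the companion paper.
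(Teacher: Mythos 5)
Your reduction is the same as the paper's: since $\centTL=\repQG(\LQG)$ is known, the problem is to compute the commutant of $e_N$ (and $u^2$) inside $\repQG(\LQG)$, and your verification of the inclusion $\supseteq$ via~\eqref{eq:comm-TL-efn}, the translation invariance of the odd-order $F_{(2n+1)}$'s, and the top monomials $f_1\cdots f_N$, $f_1^\dagger\cdots f_N^\dagger$ matches what the paper does. The substance, as you say yourself, is the inclusion $\subseteq$, and there your proposal stops at exactly the point where the paper's App.~B begins. The two ingredients you defer to "a finite but bulky linear-algebra problem" plus "a dimension count" are precisely Lemmas~\ref{lem:endomor-TL} and~\ref{lem:endo-perTL-count}: Lem.~\ref{lem:endomor-TL} uses the bimodule decomposition~\eqref{decomp-LQG-TL} and the $\Hom$ spaces between projective $\TL{N}$-modules to produce, sector by sector, an explicit \emph{basis} of $\Hom_{\TL{N}}(\Hilb_{(k)},\Hilb_{(k-2t-1)})$ together with its dimension (linear independence being checked by identifying the images with pairwise non-isomorphic ``zig-zag'' modules); Lem.~\ref{lem:endo-perTL-count} then shows that no nontrivial linear combination of the non-$\LQGodd$ basis elements commutes with $e_N$, not by solving the full coefficient system but by evaluating $[e_N,\f^{n-t}\e^n\E]$ and $[e_N,\F\f^{n-t}\e^{n+1}]$ on the specially chosen test vectors $v_k(n)$ of~\eqref{eq:choosn-mom} and running an induction on $n$. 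The even-sector case ($k-k'=0\bmod 2$), which your PBW sketch treats on the same footing, is handled in the paper by a separate reduction (Lem.~\ref{lem:comm-perTL-final}): if $\mathcal{O}$ commuted with $e_N$ without lying in $\LQGodd$, then $\mathcal{O}\E$ would violate Lem.~\ref{lem:endo-perTL-count}. Without some version of these two steps your argument remains a plausibility argument, so you have correctly located, but not closed, the gap.

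Two smaller corrections. First, your claim that the condition $[x,u^2]=0$ ``forces the Cartan dressing to be proportional to $\half(\K^2+\one)$'' is not right: $u^2$ preserves $S^z$, hence commutes with every function of $\h$ and $\K$ and imposes no constraint on Cartan dressings; moreover in the spin-chain representation $\repQG(\K)=(-1)^{2\h}$, so $\K^2=\one$ and that projector is trivial. In the paper $u^2$ plays no role at all in the hard inclusion --- the centralizer of the subalgebra $\TL{N}^a$ generated by $e_1,\dots,e_N$ alone already equals the claimed algebra (Cor.~\ref{cor:TLa-centralizer}), and one only checks at the very end that conjugation by $u^{\pm2}$ fixes the generators, so that enlarging $\TL{N}^a$ to $\JTL{N}$ does not shrink the centralizer further. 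Second, your worry about ``over-counting'' from $\repQG(\E)$, $\repQG(\F)$ is unnecessary for the same reason: these are the $m=n=0$ generators $\EE0$, $\FF0$ of $\LQGodd$ itself.
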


The full proof of this statement is too long and has been relegated to
App.~B.

\subsubsection{Fermion expression for the centralizer $\centJTL$}\label{sec:ferm-exp-cent}
We note here that generators of $\centJTL$  in  Thm.~\ref{Thm:centr-JTL-main}
have a simple fermionic expression, for $n\geq0$,
\begin{eqnarray}\label{Fodd-sym-def}
F_{(2n+1)}&=&\!\!\!\!\sum_{\substack{1\leq
  j_1<j_2<\,\dots\\\dots\,<j_{2n+1}\leq 2L}}f_{j_1}f_{j_2}\dots f_{j_{2n+1}},\\
F^\dagger_{(2n+1)}&=&\!\!\!\!\sum_{\substack{1\leq
  j_1<j_2<\,\dots\\\dots\,<j_{2n+1}\leq
  2L}}f^\dagger_{j_1}f^\dagger_{j_2}\dots f^\dagger_{j_{2n+1}},\\
F_{(2L)}&=&f_{1}f_{2}\dots f_{2L},\nonumber\\
F^\dagger_{(2L)}&=&f^\dagger_{1}f^\dagger_{2}\dots f^\dagger_{2L},\nonumber\\
{\N}&=&\sum_{1\leq j\leq 2L} (-1)^jf_j^\dagger f_j-L\nonumber,
\end{eqnarray}
which is to be compared with the generators~\eqref{FF-sym-def} of the centralizer ${\cal
  A}_{1|1}$ in the open case. The correspondence
  with the generators of $\LQGodd$ is $F_{(2n+1)} = \frac{\q^{-n+1}}{n!}\repQG(\FF{n})$,
  $F^\dagger_{(2n+1)} = \frac{\q^n}{n!}\repQG(\EE{n}\K^{-1})$, with $n>0$, while
  $n=0$ correspondence is given in~\eqref{QG-fermf-1}, and $\N$ is
  proportional to $\repQG(\h)=S^z/2$.

\medskip

In our second paper~\cite{GRS2}, we  rely on representation theory of the
$\JTL{N}$-centralizer $\centJTL$ in order to study the  decomposition of the
periodic spin-chain into indecomposable $\JTL{N}$-modules. 

\subsection{A note on the twisted model}\label{sec:twist-mod}
We can also  consider the antiperiodic model
for the $\gl(1|1)$ chain, obtained by setting
$f^{(\dagger)}_{2L+1}=-f^{(\dagger)}_1$. The generators $e_j$, for $1\leq j\leq 2L-1$, have the same representation~\eqref{rep-JTL-1} while the last generator is then given by
\begin{equation*}
e_{2L}=(f_{2L}-f_{1})(f_{2L}^\dagger-f_{1}^\dagger),
\end{equation*}
to be compared with~\eqref{rep-JTL-2}.
 This does not provide more  a
representation of   the $\JTL{N}$
algebra but rather a representation of an abstract algebra generated by
 $e_j$   and $u^{\pm2}$  
 with the  relations~\eqref{TL} for
 $1\leq j\leq N$
  and~\eqref{aTL-rel}, among others. 
 We will call the
corresponding algebra $\JTL{N}^{tw}$. The corresponding XX spin chain
now is periodic for even spin, and antiperiodic for odd spin.
Note that the action of  $\JTL{N}^{tw}$ \textit{does not} commute with $\gl(1|1)$ generators $F_{(1)}$ and $F_{(1)}^{\dagger}$  (or $\F$ and $\E$, equivalently) defined in~\eqref{FF-sym-def}. Therefore, the  hamiltonian densities algebra does not have $\gl(1|1)$ symmetry in this case.

We next study the centralizer of the representation of $\JTL{N}^{tw}$.
It turns out that the choice of ``even" subalgebra in $U_{\q}s\ell(2)$ at generic $\q$,
{\it i.e.}, the algebra generated by the renormalized \textit{even}-powers of the $\E$
and $\F$ gives in the limit $\q\to i$ the centralizer for the representation of $\JTL{N}^{tw}$
on the spin-chain with the opposite twist --- the usual $U(s\ell(2))$
generated by the $\e$ and $\f$. The  proof is given  below.

\begin{thm}\label{Thm:centr-JTL-anti}
On the alternating antiperiodic $\gl(1|1)$ spin chain,
 the centralizer of the image
of the representation of the algebra $\JTL{N}^{tw}$ is the associative algebra $\repQG(U s\ell(2))$.
\end{thm}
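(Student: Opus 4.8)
The plan is to mirror closely the proof of Thm.~\ref{Thm:centr-JTL-main}, exploiting the fact that the only change is the sign of the fermion identification $f^{(\dagger)}_{2L+1}=-f^{(\dagger)}_1$, which in Fourier space amounts to a shift of the allowed momentum set by $\pi/N$. First I would redo the Fourier computations of Section~\ref{sec:fourier} in the twisted case: the momenta in~\eqref{momenta-set} get swapped between the $L$-even and $L$-odd cases (periodic for even spin, antiperiodic for odd spin, as noted), so the special momenta $\pi/2$ and $3\pi/2$ that produced the ``resonant'' single-fermion pieces of $\repQG(\E)$ and $\repQG(\F)$ in~\eqref{QG-ferm-1} are generically no longer in the spectrum. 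This is precisely why $\F$ and $\E$ fail to commute with $e_{2L}^{tw}$ — consistent with the remark in Section~\ref{sec:twist-mod} that $\JTL{N}^{tw}$ has no $\gl(1|1)$ symmetry — and, conversely, why the renormalized \emph{even} powers, built from pair sums $\fermd_p\fermd_{\pi-p}$, survive: the denominators $e^{i(\pi/2+p)}+1$ that obstructed the odd case are now harmless because $p=\pi/2$ is excluded, while $p=\pi-p$, i.e. $p=\pi/2$, is likewise excluded so the square $\e^2$ type terms behave well.

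Next I would establish the inclusion $\repQG(U s\ell(2))\subseteq\centJTL^{tw}$. The subalgebra generated by $\e,\f,\h$ is the centralizer of the open $\TL{N}$ inside $\LQG$ restricted to even powers (equivalently the ``even Borel'' halves), so it already commutes with $e_j$ for $1\leq j\leq N-1$; the only thing to check is commutation with the twisted last generator $e_{2L}^{tw}$ and with the translation $u^{\pm 2}$. For $u^{\pm 2}$ this follows from the coherent-state formula~\eqref{uu-theta} exactly as before, since $\e,\f$ are built from momentum-conserving-mod-$\pi$ bilinears. For $e_{2L}^{tw}$ I would recompute $[e_j^{tw},\e]$ and $[e_j^{tw},\f]$ as in~\eqref{eq:comm-TL-f}–\eqref{eq:comm-TL-e}: the would-be anomaly at $j=N$ is a sum over $p'$ of $\fermd_{p'}\fermd_{p_{\mathrm{special}}}$, but in the twisted momentum set there is no $p_{\mathrm{special}}$, so the commutator vanishes identically. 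One also needs $\f^L,\e^L$ to \emph{not} be separately in the centralizer here (unlike the periodic case) — or rather, to be expressible within $\repQG(Us\ell(2))$ — which is automatic since $\e^{N/2},\f^{N/2}$ are the top powers and lie in $U(s\ell(2))$ by definition.

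Then I would prove the reverse inclusion $\centJTL^{tw}\subseteq\repQG(U s\ell(2))$, the substantive half. The cleanest route reuses the XX/free-fermion equivalence: as in Section~\ref{sec:fourier} one checks directly that the only linear combinations of the $\ferm_{p},\fermd_{p}$ commuting with all $e_j^{tw}$ ($1\leq j\leq N$) and with $u^{\pm2}$ are — none, generically (no special momentum survives), so there is no degree-one element in the centralizer, unlike the untwisted case where $\fermd_{\pi/2},\ferm_{3\pi/2}$ appeared. Combined with a double-commutant argument — the image $\repQG(Us\ell(2))$ and $\repgl(\JTL{N}^{tw})$ should be shown to be mutual centralizers by a dimension/character count, matching the decomposition of $\Hilb_{2L}$ into joint isotypic components — this pins down $\centJTL^{tw}$ exactly. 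In practice I expect one can import wholesale the bimodule machinery of the companion paper~\cite{GRS2} applied to the twisted model, or alternatively adapt the Appendix~B argument for Thm.~\ref{Thm:centr-JTL-main} with the momentum shift inserted throughout.

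The main obstacle, as in the periodic case, is the reverse inclusion: showing that \emph{every} operator commuting with the (non-faithful, non-semisimple) image $\repgl(\JTL{N}^{tw})$ already lies in $\repQG(Us\ell(2))$, rather than merely that the obvious quantum-group-type generators do. This requires understanding the indecomposable structure of the spin chain as a $\JTL{N}^{tw}$-module well enough to compute $\dim\Endo_{\JTL{N}^{tw}}(\Hilb_{2L})$ and match it against $\dim\repQG(Us\ell(2))$; the free-fermion Fourier description makes the relevant module decomposition tractable, but assembling the isotypic count and ruling out ``accidental'' intertwiners between distinct indecomposables — especially in the sectors where $u^2$ acts with nontrivial Jordan structure — is where the real work lies. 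I would handle it by a careful bookkeeping of joint $(\JTL{N}^{tw},Us\ell(2))$-eigenspaces in each spin ($S^z$) and momentum sector, analogous to App.~B, noting that the absence of the odd-power generators $\FF n,\EE m$ actually simplifies the combinatorics relative to Thm.~\ref{Thm:centr-JTL-main}.
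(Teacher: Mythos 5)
Your proposal follows essentially the same route as the paper: verify directly that $\e$, $\f$ commute with the twisted generator $e_{2L}$ and with $u^{\pm2}$ (the paper does the latter in position space, noting the sign flip cancels in the quadratic expressions for $\f$ and $\e$), and then establish the reverse inclusion by redoing the Fourier-mode analysis of Appendix~B with the shifted momentum set, which is exactly what the paper's proof invokes via Lemma~\ref{lem:endo-perTL-count}. The alternative double-commutant/dimension-count route you sketch is not what the paper uses, but your primary suggestion matches its argument.
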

\begin{proof}
We first check using expressions~\eqref{QG-fermf-2} for the $U s\ell(2)$ generators in terms of $f_j$ and $f^{\dagger}_j$ fermions that the  action of $U s\ell(2)$ indeed commutes with the additional generator $e_N=(f_N-f_1)(f^{\dagger}_N-f^{\dagger}_1)$; that the generators $e_j$, for $1\leq j\leq N-1$, commute with the $U s\ell(2)$ is obvious because the centralizer of the  $\TL{N}$ contains $\repQG(\e)$ and $\repQG(\f)$. Next, a simple calculation using again the $f_j$ and $f^{\dagger}_j$ fermions shows that the $e_N$ does not commute with the operators $\repQG(\e^n\f^m\h^k\F)$, $\repQG(\e^n\f^m\h^k\E)$, $\repQG(\e^n\f^m\h^k\F\E)$, for $n,m,k\geq0$. To show that there are no linear combinations of these operators in the centralizer, we go to the Fourier transforms as in Sec.~\ref{sec:fourier} introducing $\ferm_p$ and $\fermd_p$ with the same formal expression~\eqref{def-ferm} but now the momenta $p_m$ takes values $\frac{2\pi m}{N}$ for $L$ odd and $\frac{(2m-1)\pi}{N}$ for $L$ even. We then carry out  calculations fully similar to those  in the proof of Thm.~\ref{Thm:centr-JTL-main} (which are mainly presented in Lem.~\ref{lem:endo-perTL-count}). Additional care should be taken
 in handling fermionic expressions for the $\LQG$ generators in terms of $\ferm_p$ and $\fermd_p$, which are different from the ones in~\eqref{QG-ferm-1}-\eqref{QG-ferm-3}. One proves easily in this way that the centralizer of the algebra generated by $e_j$, for $1\leq j\leq N$, in the antiperiodic spin-chain is given by $\repQG(U s\ell(2))$.
 
 Finally, we show that the generators $u^{\pm2}$ commute with the action of $U s\ell(2)$. The $u^2$ acts on the fermions $f_j$ and $f_j^{\dagger}$ formally in the same way~\eqref{gen-u-acts} as in the periodic model but it changes sign in front of $f^{(\dagger)}_{j+2-N}$ whenever the position $j+2$ is greater than $N$ due to the antiperiodic conditions. We then obtain 
 \begin{equation*}
 u^2\repQG(\f)u^{-2} = \q \sum_{1\leq j_1<j_2\leq N} f_{j_1+2}f_{j_2+2} =  \q \sum_{1\leq j_1<j_2\leq N} f_{j_1}f_{j_2} = \repQG(\f)
 \end{equation*}
 and similarly for $\repQG(\e)$. This finishes the proof.
\end{proof}

We emphasize that the antiperiodic $\gl(1|1)$ spin chain does not
have $\gl(1|1)$ symmetry any longer. We will come back briefly to this
twisted case in other subsections -- the main text meanwhile is only
devoted to the periodic case.

\section{The scaling limit of the closed $\gl(1|1)$ chains}\label{sec:scaling}


In this Section, we discuss how to proceed from the $\JTL{N}$
generators to get Virasoro modes in the non-chiral logarithmic
conformal field theory of symplectic fermions:
we show that the combinations 
\begin{equation}\label{HPn-def}
H(n) = -\sum_{j=1}^N e^{-iqj} e^{\gl}_j,\qquad
P(n)=\ffrac{i}{2}\sum_{j=1}^N e^{-iqj} [e^{\gl}_j,e^{\gl}_{j+1}], \qquad q=\ffrac{n\pi}{L},
\end{equation}
of the (representation of)
$\JTL{N}$ generators converge in a certain sense (the scaling limit) as $L\to\infty$ to the well-known symplectic fermions representation
of the left and right Virasoro generators
\begin{equation*}
\ffrac{L}{2\pi}H(n) \mapsto 
L_{n}+\bar{L}_{-n}, \qquad \ffrac{L}{2\pi}P(n) \mapsto 
L_{n}-\bar{L}_{-n}.
\end{equation*}

 For convenience, we
begin with studying the $\gl(1|1)$-Hamiltonian spectrum on a finite
lattice in Sec.~\ref{sec:ham-spec} and Sec.~\ref{sec:ham-spec-Jordan}, where we also introduce 
technically  more suitable lattice fermions. 
We  then give a formal definition of the scaling limit procedure in Sec.~\ref{sec:emerge-latVir} and show the convergence of
the whole family of lattice higher Hamiltonians (with their Fourier
transformations) to all generators of the product $\VirN(2)=\Vir(2)\boxtimes\overline{\Vir}(2)$ of
 the left and right Virasoro algebras with the central charge $c=-2$. The
important result that the scaling limit
respects algebraic relations is discussed in Sec.~\ref{sec:fromJTL-Vir}.

\subsection{The Hamiltonian and  $\chi$-$\eta$ fermions}\label{sec:ham-spec}

We now go back to  the periodic $\gl(1|1)$ spin-chain with the following
$\JTL{N}$-representation:
 \begin{equation*}
 e^{\gl}_j=\left(f_j+f_{j+1}\right)\left(f_j^\dagger+f_{j+1}^\dagger\right),
 \qquad f_{N+1}=f_1,\quad  f^{\dagger}_{N+1}=f^{\dagger}_1,\qquad 1\leq j\leq N,
 \end{equation*}
which is discussed above in Sec.~\ref{sec:super-spin-ch-def} and Sec.~\ref{sec:fourier}.  We
abuse the notation for the representation of the $\JTL{N}$ generators in
what follows and write simply $e_j$ instead of $e^{\gl}_j$.
Setting 
\begin{equation*}
f_j^\dagger=i^{j}c_j^\dagger,\qquad
f_j=i^{j}c_j
\end{equation*}
we get as well
\begin{equation}
e_j=i(-1)^j\left[c_jc^\dagger_{j+1}-c_{j}^\dagger c_{j+1}
  +i(c_j^\dagger c_j-c_{j+1}^\dagger c_{j+1})\right] ,\qquad 1\leq
  j\leq 2L.
\end{equation}

We find it more convenient to use Fourier transforms of the fermions
$c_j$ and $c^\dagger_j$, and set
\begin{equation}
c_j=\ffrac{1}{\sqrt{N}} \sum_{p_m} e^{ijp_m}\theta_{p_m},\qquad
c^\dagger _j=\ffrac{1}{\sqrt{N}} \sum_{p_m} e^{-ijp_m}\theta^\dagger_{p_m},
\end{equation}
where the sums are taken over all the momenta $p_m$ introduced in~\eqref{momenta-set}. We obtain 
 then the Hamiltonian
\begin{equation}\label{eq:hamilt-fermd}
H = -\sum_{j=1}^{2L}e_j = 2\sum_{p}(1+\sin{p})\fermd_p\,\ferm_{\pi+p},
\end{equation}
which can be rewritten in (almost) diagonal form:
\begin{equation}\label{eq:def-hamilt-gl}
H=
2\sum_{\substack{p=\step\\\text{step}=\step}}^{\pi-\step}\sin{p}\bigl(\chi^{\dagger}_p\chi_{p}
- \eta^{\dagger}_p\eta_{p}\bigr) + 4 \chi^{\dagger}_0\eta_0,
\end{equation}
where $\step=\frac{2\pi}{N}$ and we introduced
\begin{align}
\chi^{\dagger}_p &=
\ffrac{1}{\sqrt{2}}\Bigl(\sqrt{\tan{\ffrac{p}{2}}}\,\fermd_{p-\frac{\pi}{2}} +
\sqrt{\cot{\ffrac{p}{2}}}\,\fermd_{p+\frac{\pi}{2}}\Bigr),&
\chi_p &=
\ffrac{1}{\sqrt{2}}\Bigl(\sqrt{\cot{\ffrac{p}{2}}}\,\ferm_{p-\frac{\pi}{2}} +
\sqrt{\tan{\ffrac{p}{2}}}\,\ferm_{p+\frac{\pi}{2}}\Bigr),\notag\\
\eta^{\dagger}_p &=
\ffrac{1}{\sqrt{2}}\Bigl(\sqrt{\tan{\ffrac{p}{2}}}\,\fermd_{p-\frac{\pi}{2}} -
\sqrt{\cot{\ffrac{p}{2}}}\,\fermd_{p+\frac{\pi}{2}}\Bigr),&
\eta_p &=
\ffrac{1}{\sqrt{2}}\Bigl(\sqrt{\cot{\ffrac{p}{2}}}\,\ferm_{p-\frac{\pi}{2}} -
\sqrt{\tan{\ffrac{p}{2}}}\,\ferm_{p+\frac{\pi}{2}}\Bigr),\label{eq:chi-eta-def}\\
\chi^{\dagger}_0 &= \fermd_{\frac{\pi}{2}},\quad \chi_0 = \ferm_{\frac{\pi}{2}},&
\eta^{\dagger}_0 &= \fermd_{\frac{3\pi}{2}},\quad \eta_0=\ferm_{\frac{3\pi}{2}},\notag
\end{align}
with momenta $p$ shifted by $\pi/2$ and taking thus values $p=p_n=\step n$, where $1\leq n\leq L-1$, for even and odd $L$. The normalizations have been chosen to ensure relativistic
dispersion relation with unit speed of light, and  to satisfy the anti-commutation relations
\begin{equation*}
\left\{\chi^{\dagger}_p,\chi_{p'}\right\} = \left\{\eta^{\dagger}_p,\eta_{p'}\right\} =
\delta_{p,p'}, \qquad 
\left\{\chi_p,\eta_{p'}\right\} =
\left\{\chi^{\dagger}_p,\eta^{(\dagger)}_{p'}\right\} =
\left\{\eta^{\dagger}_p,\chi^{(\dagger)}_{p'}\right\} = 0. 
\end{equation*}
For  convenience, we also give expressions for  $\ferm^{(\dagger)}$s in terms of $\chi^{(\dagger)}$s and
$\eta^{(\dagger)}$s,
\begin{equation}\label{eq:ferm-chi-eta}
\begin{split}
\fermd_{p'-\frac{\pi}{2}} &=
\sqrt{\ffrac{\cot{(p'/2)}}{2}}\bigl(\chi^{\dagger}_{p'} +
\eta^{\dagger}_{p'}\bigr), \qquad
\ferm_{p'-\frac{\pi}{2}} = \sqrt{\ffrac{\tan{(p'/2)}}{2}}
\bigl(\chi_{p'} + \eta_{p'}\bigr),\\
\fermd_{p'+\frac{\pi}{2}} &=
\sqrt{\ffrac{\tan{(p'/2)}}{2}}\bigl(\chi^{\dagger}_{p'}
- \eta^{\dagger}_{p'}\bigr),\qquad
\ferm_{p'+\frac{\pi}{2}} = \sqrt{\ffrac{\cot{(p'/2)}}{2}}
\bigl(\chi_{p'} - \eta_{p'}\Bigr),
\end{split}\qquad \step\leq p'\leq\pi-\step.
\end{equation}

\subsection{Hamiltonian spectrum and Jordan blocks}\label{sec:ham-spec-Jordan}
We now study the  spectrum of the Hamiltonian~\eqref{eq:def-hamilt-gl}
and analyze the  Jordan blocks appearing  on a finite lattice. Once the Hamiltonian is written as a quadratic form in free fermionic modes as in~\eqref{eq:def-hamilt-gl}, 
the zero-mode term $\chi^{\dagger}_0\eta_0$ (which  is proportional to
the Casimir operator of the quantum group $U_{\q}s\ell(2)$) implies
the existence of non-trivial Jordan blocks since, for a given set of filled modes at non zero momentum, the action of the operators $\chi_0^{(\dagger)}$ and $\eta_0^{(\dagger)}$ allows one to build a four dimensional subspace with the same energy, and Jordan block of dimension two analogous to the one for the Casimir.

We first note that the diagonal part $H^{(d)}$ of the Hamiltonian has
the eigenvectors
\begin{equation}
v(\{p_k\},\{p'_j\}) =
\prod_{{\{p_k\}}}\eta_{p_k}\prod_{{\{p'_j\}}}\chi_{p'_j}\vectv{\uparrow\dots\uparrow},
\end{equation}
where $\vectv{\uparrow\dots\uparrow}$ is the state with all spins up, with the eigenvalues
\begin{equation}\label{eigenval-H}
2\sum_{p\in\{p_k\}}\sin{p}-2\sum_{p\in\{p'_j\}}\sin{p},
\end{equation}
where the sets $\{p_k\}$ and $\{p'_j\}$ are any subsets in
the set $\{p_n=\pi n/L,\;1\leq n\leq L-1\}$ of allowed momenta. We thus immediately
find the four ground states
\begin{equation}
\phip =
\prod_{\substack{p=\step\\\text{step}=\step}}^{\pi-\step}\chi_{p}\vectv{\uparrow\dots\uparrow},
\qquad \phim = \chi_0\eta_0\phip,
\qquad \vac = \eta_0\phip,
\qquad \lvac = \chi_0\phip,
\end{equation}
where the two fermionic states $\phip$ and $\phim$ belong to the
sectors with $S^z=+1$ and $S^z=-1$, respectively, and the two bosonic
states $\vac$ and $\lvac$ have $S^z=0$.

What is crucial for logarithmic CFT is to know the structure of Jordan
blocks. The Hamiltonian we study has the off-diagonal part
$\chi^{\dagger}_0\eta_0$ which generates Jordan blocks of rank $2$. For
example, the space of ground states has the following structure:
\begin{equation}\label{diag:space-gr-st}
   \xymatrix@=20pt{
     &\lvac\ar[dl]_{\chi^{\dagger}_0} \ar[dr]^{-\eta_0} \ar[dd]^(0.45){H}&&\\
     \phip\ar[dr]_{\eta_0}
       &&\phim\ar[dl]^{\chi^{\dagger}_0} &\\
     &\vac&&
   }   
\end{equation}
where the vacuum $\vac$ and the state $\lvac$ form a
two-dimensional Jordan cell of the lowest eigenvalue for~$H$. We also
show the action of $\F\sim\eta_0$
and $\E\sim\chi^{\dagger}_0$ in~\eqref{diag:space-gr-st}.

The whole space of states $\Hilb_{2L}$ is generated from one cyclic
vector $\lvac$ by the algebra of creation modes (including the zero
modes generating the vacuum subspace)
\begin{equation}\label{Falg-def}
\Falg = \bigl\{\,\chi^{\dagger}_p,\; \eta_p\,;\;\; p=\pi n/L,\; 0\leq n<L\,\bigr\}.
\end{equation}
The annihilation modes are
\begin{equation}
\chi_{p}\vac = \eta^{\dagger}_p\vac = \chi^{\dagger}_0\vac =
\eta_0\vac =0, \qquad p\in \{\pi n/L,\;1\leq n\leq L-1\}.
\end{equation}

\subsection{Emergence of the left and right Virasoro algebras}\label{sec:emerge-latVir}
In this section, we study  the scaling limit properties of
 the periodic spin-chain in detail. Recall that an essential ingredient  in the general definition
 of the scaling limit sketched in Sec.~\ref{prelim:scal-lim} is 
 the low-lying eigenstates of the Hamiltonian $H$. In order to study the 
 action of JTL elements on these eigenstates in the limit $L\to\infty$
 (recall $N=2L$) we first truncate each $\Hilb_{2L}$, keeping only
 eigenspaces up to an energy level $M$, for each positive number $M$.
 Each such truncated space turns out to be finite-dimensional in the
 limit, {\it i.e.}, it depends on $M$ but not $L$. Then, keeping matrix
 elements of JTL elements that correspond to the action only within
 these truncated spaces of scaling states, we obtain well-defined
 operators in the limit $L\to\infty$. The corresponding operators
 acting on all scaling states of the CFT can be finally obtained (if
 they exist) in the second limit $M\to\infty$.

To put things a little  more formally, we define \textit{the scaling limit} denoted
simply by `$\mapsto$' as a limit over graded spaces of coinvariants with
respect to smaller and smaller subalgebras in the creation modes
algebra~$\Falg$ introduced in~\eqref{Falg-def}, {\it i.e.}, along the
following lines:
\begin{enumerate}
\item[\textit{1.}] we  consider a family of subalgebras
  $\Falg[M]\subset\Falg$ generated by the creation modes
  $\chi^{\dagger}_p$ and $\eta_p$ in the range $M\step<p<\pi-M\step$,
  where $0\leq M \leq L'/2$ and we set $L'=L-(L\modd2)$ and recall
  $\step={\pi/ L}$; we thus have a tower of subalgebras
  \begin{equation}\label{Falg-tower}
  0=\Falg[L'/2]\subset\Falg[L'/2-1]\subset\dots\subset\Falg[2]\subset\Falg[1]\subset\Falg[0]\subset\Falg;
  \end{equation}
  
\item[\textit{2.}] we consider then vector-spaces
 $\Hilb_{2L}/\Falg[M]\Hilb_{2L}$ of coinvariants\footnote{Here,
 $\Falg[M]\Hilb_{N}$ means the image of the action of the whole
 algebra $\Falg[M]$ on $\Hilb_N$. Then, coinvariants by definition are
 elements of the quotient-space $\Hilb_N/\Falg[M]\Hilb_N$.} graded by
 the Hamiltonian $H$, for any finite $M$. Note also that for each
 fixed $M$ these graded spaces are stabilized after some $L=L_0$ and
 they are \textit{finite-dimensional} at $L\to\infty$; each of these stabilized
 spaces we denote as $\spco_M$.  In physical terms, we keep only the
 low energy modes, which are those close to $0$ and $\pi$.
  
\item[\textit{3.}] we compute Fourier transforms of $e_j$'s and
$[e_j,e_{j+1}]$'s corresponding to finite modes on the
finite-dimensional graded vector-spaces of coinvariants $\spco_M$ in
the limit $L\to\infty$ (physically, we keep only long wave-length
contribution to low-lying excitations over the ground states). By  
computating in the limit $L\to\infty$ we mean here showing  strong
convergence\footnote{The strong convergence of operators requires a
  normed vector space, or positive-definite inner product. One can
  introduce this inner product here using the fact that a finite-dimensional
  vector space with \textit{non-degenerate} indefinite inner product is a
  Krein space and, therefore,  can be turned into a
  positive-definite inner product space~\cite{Bognar}. This applies to  $\Hilb_{2L}$ endowed with non-degenerate indefinite inner product $\inn{\cdot}{\cdot}$ such that $\inn{f_j x}{y}=\inn{x}{f_j^{\dagger}y}$ for any $x,y\in\Hilb_{2L}$.} of the sequence of operators (the Fourier transforms)
  parametrized by $L$ towards  a particular operator
acting on $\spco_M$.

\item[\textit{4.}] we finally  take a limit with respect to smaller and
smaller subalgebras $\Falg[M]$ in the tower~\eqref{Falg-tower}, {\it i.e.}, we take the second limit
$M\to\infty$. 
So for the spaces of low-lying states $\{\spco_M, \, M\geq1\}$, we take an inductive\footnote{It is more natural to take a projective limit for the spaces of coinvariants, as they are defined as quotients, but this limit is then a completion of the vector space $\spco_{\infty}$ of physical states.} limit which gives the space $\spco_{\infty}$ of all scaling states. This is an infinite-dimensional Krein space, {\it c.f.}~\cite{Iv}, which has a positive-definite inner product. In this space one can then study convergence of operators in the second limit\footnote{One could then go back to the original indefinite inner product, which is used in LCFT, using the Krein space structure on $\spco_{\infty}$ or the so-called fundamental symmetry of the Krein space~\cite{Bognar}.}.
\end{enumerate}
Note that we could equivalently consider the same construction/definition of the scaling limit based on a slightly different tower of  subalgebras $\tilde{\Falg}[M]\subset\Falg$ which generate all eigenstates  between the energy level $M+1$ and the maximum one. But then a definition of each $\tilde{\Falg}[M]$ is more complicated: it is generated by all monomials $\prod_{{\{p_k\}}}\eta_{p_k}\prod_{{\{p'_j\}}}\chi^{\dagger}_{p'_j}$ such that
$2\sum_{p\in\{p_k\}}\sin{p}+2\sum_{p\in\{p'_j\}}\sin{p}> 2\sin{M\step}$ (recall the eigenvalues in~\eqref{eigenval-H}). This choice is probably more natural, in view of the discussion in the beginning of this subsection, but the first choice~\eqref{Falg-tower} of the tower of the subalgebras $\Falg[M]$, which is  much simpler technically,   is enough  for  the purposes of this paper.

\subsubsection{The scaling limit of the Hamiltonian}
Following the lines \textit{1.-4.} in the definition above,
we first study the scaling limit of the
Hamiltonian~\eqref{eq:def-hamilt-gl}. We rewrite it in the
normal-ordered form as
\begin{equation}\label{eq:ham-chi-eta-norm}
H=
2\sum_{\substack{p=\step\\\text{step}=\step}}^{\pi-\step}\sin{p}\bigl(\chi^{\dagger}_p\chi_{p}
+ \eta_p\eta^{\dagger}_{p}\bigr) + 4 \chi^{\dagger}_0\eta_0 - 2\sum_{p=\step}^{\pi-\step}\sin{p},
\end{equation}
where we explicitly extracted the ground-state energy in the last sum.
We can now {\sl linearize} the dispersion relation around $p=0$ and
$p=\pi$ in the first limit $N\to\infty$ introducing the left-moving modes
$\bar{\chi}_p^{(\dagger)}=\chi_{\pi-p}^{(\dagger)}$ and
$\bar{\eta}_p^{(\dagger)}=\eta_{\pi-p}^{(\dagger)}$. 
The excitations over the Dirac sea are thus described by
\begin{equation}
H= \ffrac{4\pi}{N}\sum_{m>0}m\left(\chi_p^\dagger\chi_p + \bar{\chi}_p^\dagger\bar{\chi}_p\! + \eta_p\eta^\dagger_p
+ \bar{\eta}_p\bar{\eta}^\dagger_p\right)+4\chi^{\dagger}_0\eta_0 +
\vacl H\vacr,\qquad p\equiv\ffrac{m\pi}{L}+o\,(1/N).
\end{equation}
 The ground-state
energy has the following leading asymptotics in the large-$N$ limit,
\begin{equation*}
\vacl H\vacr = -2\sum_{m=1}^{L-1}\sin{\ffrac{m\pi}{L}} = -2\cot{\ffrac{\pi}{N}} =
-\ffrac{2N}{\pi} + \ffrac{2\pi}{3N} + o\,(1/N),
\end{equation*}
where we used the trigonometric identity
\begin{equation*}
\sum_{k=1}^n \sin{k\alpha} = \ffrac{\sin{\frac{(n+1)\alpha}{2}}\sin{\frac{n\alpha}{2}}}{\sin{\frac{\alpha}{2}}}.
\end{equation*}
We thus obtain the expansion
\begin{equation}
H= H^{(d)} + H^{(n)} =-\ffrac{2N}{\pi} +
\ffrac{4\pi}{N}\left((L_0+\bar{L}_0)^{(d)} + (L_0+\bar{L}_0)^{(n)} -\ffrac{c}{12}\right) +
 o\,(1/N),
\end{equation}
with the central charge $c=-2$. The diagonal part of the Hamiltonian
in the scaling limit is
\begin{equation*}
(L_0+\bar{L}_0)^{(d)} = \sum_{m>0}m\left(\chi_p^\dagger\chi_p + \bar{\chi}_p^\dagger\bar{\chi}_p\! + \eta_p\eta^\dagger_p
+ \bar{\eta}_p\bar{\eta}^\dagger_p\right),\qquad p\equiv\ffrac{m\pi}{L},
\end{equation*}
and the non-diagonal part is
\begin{equation*}
(L_0+\bar{L}_0)^{(n)} = \ffrac{N}{\pi}\chi^{\dagger}_0\eta_0.
\end{equation*}
Finally, we introduce some other notation convenient for the scaling
limi \footnote{The distinction between $L$
even and odd disappears as the new moments are defined with respect to
$\frac{\pi}{2}$.},
\begin{eqnarray}\label{eq:sferm-lat-def}
\begin{split}
\sfermm_m=\sqrt{m}\,\chi_p,\quad\!\sfermp_m=\sqrt{m}
\,\bar{\eta}^\dagger_p,\quad\!\bsfermm_m=\sqrt{m}\,\bar{\chi}_p,\quad\!\bsfermp_m=-\sqrt{m}
\,\eta^\dagger_p,\quad\! \sfermp_0=\bsfermp_0=\sqrt{\ffrac{L}{\pi}}\,\chi_0^{\dagger}=\sqrt{\ffrac{L}{\pi}}\,\theta^\dagger_{{\pi\over 2}},\quad\mbox{}\\
\sfermm_{-m}=-\sqrt{m}\,\bar{\eta}_p, \quad\!\sfermp_{-m}=\sqrt{m}\,\chi^\dagger_p,
\quad\!\bsfermm_{-m}=\sqrt{m}\,\eta_p, \quad\!\bsfermp_{-m}=\sqrt{m}\,\bar{\chi}^\dagger_p,
\quad\! \sfermm_0=\bsfermm_0=\sqrt{\ffrac{L}{\pi}}\,\eta_0=\sqrt{\ffrac{L}{\pi}}\,\theta_{{3\pi\over 2}},
\end{split}
\end{eqnarray}
t, for $m>0$ and $p=m\pi/L$. One has now the anti-commutation relations
\begin{equation*}
\{\psi^{\alpha}_m,\psi^{\beta}_{m'}\} =
mJ^{\alpha\beta}\delta_{m+m',0}\,,\qquad \alpha,\beta\in\{1,2\},
\end{equation*}
with the symplectic form $J^{12}=-J^{21}=1$.
So, we get the scaling limit
\begin{multline}
\ffrac{L}{2\pi}\Bigl(H+\ffrac{2N}{\pi}\Bigr)\mapsto L_0+\bar{L}_0-\ffrac{c}{12}= \sum_{m>0}
\left(\sfermp_{-m}\sfermm_m-\sfermm_{-m}\sfermp_m + \,\psi\to\bar{\psi}\right)
+ 2\sfermp_0\sfermm_0 -\ffrac{c}{12}\\= \sum_{m\in\oZ} :\!\sfermp_{-m}\sfermm_m\!:+ \bigl(\psi\to\bar{\psi}\bigr)-\ffrac{c}{12}.
\end{multline}
This expression of $L_0+\bar{L}_0$ is well known and appears in the
theory of symplectic fermions~\cite{Kausch}
\begin{eqnarray}\label{L0-def}
L_0=\sfermp_0\sfermm_0 + \sum_{m>0}
\left(\sfermp_{-m}\sfermm_m-\sfermm_{-m}\sfermp_m\right)=
\sum_{m\in\oZ} :\!\sfermp_{-m}\sfermm_m\!: \,,\nonumber\\
\bar{L}_0=\sfermp_0\sfermm_0 + \sum_{m>0}
\left(\bsfermp_{-m}\bsfermm_m - \bsfermm_{-m}\bsfermp_m\right) =
\sum_{m\in\oZ} :\!\bsfermp_{-m}\bsfermm_m\!: \,,\nonumber
\end{eqnarray}
where $L_0$ and $\bar{L}_0$ have a common part, made of
$\sfermp_0\sfermm_0$\,.

\subsubsection{The momentum operator}
We next obtain the conformal spin operator $L_0-\bar{L}_0$ using
lattice calculations. The general mapping~\cite{KooSaleur} between anisotropic transfer
matrices and evolution operators in CFT suggests that a lattice analogue
of $T_{xy}$, the off-diagonal component of the stress tensor,
is a momentum operator
\begin{equation}\label{lattmom}
P=\ffrac{i}{2}\sum_{j=1}^{N} [e_j,e_{j+1}].
\end{equation}

Straightforward calculations for the $\gl(1|1)$ spin chain give
\begin{equation}
[e_j,e_{j+1}]=-i\left[c_jc_{j+1}^\dagger-c_{j+1}c_j^\dagger-c_{j+1}c_{j+2}^\dagger+c_{j+2}c_{j+1}^\dagger+i(c_jc_{j+2}^\dagger -c_{j+2}c_j^\dagger)\right],
\end{equation}
so the momentum reads, in terms of fermion Fourier variables
\begin{equation}
P =
\sum_{\substack{p=\step\\\text{step}=\step}}^{\pi-\step}\sin{2p}\bigl(\chi^{\dagger}_p\chi_{p}
+ \eta^{\dagger}_p\eta_{p}\bigr).
\end{equation}
The scaling limit of the rescaled operator $\ffrac{L}{2\pi}P$ gives
the conformal spin operator $L_0-\bar{L}_0$, keeping only the leading
term:
\begin{equation*}
\ffrac{L}{2\pi} P \mapsto \sum_{m\in\oZ} :\!\sfermp_{-m}\sfermm_m\!:-
\bigl(\psi\to\bar{\psi}\bigr) = L_0-\bar{L}_0.
\end{equation*}

We also note  that the generator  $u^2$ of translations is simply
related with the momentum in the continuum limit.
Going to the $\eta$ and $\xi$ modes in~\eqref{uu-theta} and using repeatedly that $e^{2i\pi}=1$ to shift summation leads  to
\begin{equation}
u^2= \exp\left[-2i\sum_{p=\epsilon}^{\pi-\epsilon} p(\chi_p^\dagger\chi_p+\eta_p^\dagger\eta_p)\right],
\qquad \step=\ffrac{2\pi}{N},
\end{equation}
with the step $\step$ in the sum.
The term in the exponential is a linearized version of the momentum~$P$.

\subsubsection{Higher Virasoro modes}
It is interesting to obtain  expressions for all other modes $L_n$ and
$\bar{L}_n$ of the stress tensor by sticking to the
lattice some more. We consider the Fourier transform of $e_j$,
\begin{multline}
H(n) = -\sum_{j=1}^N e^{-iqj} e_j=\sum_{p}
\left[1+e^{iq}+ie^{-ip}-ie^{i(p+q)}\right] \fermd_{p}\,
\ferm_{p+q+\pi}\\
=4e^{iq/2}\Bigl(\sum_{\substack{p=\step\\\text{step}=\step}}^{\pi-\step}
\bigl(\sin{\ffrac{p+q}{2}}
\sin{\ffrac{p}{2}}\,\fermd_{p-\frac{\pi}{2}}\ferm_{p+q+\frac{\pi}{2}} +
[p\to p+\pi]\bigr) +
\cos{\ffrac{q}{2}}\,\fermd_{\frac{\pi}{2}}\ferm_{q-\frac{\pi}{2}}
\Bigr), \quad q={\ffrac{n\pi}{L}},\label{inter}
\end{multline} 
where $n$ is integer.
This sum can be split into
the two sums
$\sum_{\step}^{\pi-q-\step}$ and $\sum_{\pi-q+\step}^{\pi-\step}$
to be sure that the subscript $p'$ in the terms
$\ferm_{p'\pm\frac{\pi}{2}}$ takes values between $\step$ and
$\pi-\step$ which is necessary to use the
notations~\eqref{eq:ferm-chi-eta}. We first consider the case $0< n< L$.
Using the formulas~\eqref{eq:ferm-chi-eta} expressing the
$\theta^{(\dagger)}$s in terms of the $\chi^{(\dagger)}$s and
$\eta^{(\dagger)}$s, the $H(n)$ can be rewritten as
\begin{multline}\label{lat-Vir-H}
H(n) =
2e^{iq/2}\Biggl(\sqrt{\sin{q}}\, \chi^{\dagger}_{0}\bigl(\chi_{q} +
\eta_{q}\bigr) + \sum_{\substack{p=\step\\\text{step}=\step}}^{\pi-q-\step}
\sqrt{\sin{(p)}\sin{(p+q)}} \bigl(\chi^{\dagger}_{p}\,\chi_{p+q} -
\eta^{\dagger}_{p}\,\eta_{p+q}\bigr)\\ 
+ \sum_{\substack{p=\step\\\text{step}=\step}}^{q-\step}
\sqrt{\sin{(p)}\sin{(q-p)}}\bigl(\chi^{\dagger}_{\pi-p}\,\eta_{q-p} +
\eta^{\dagger}_{\pi-p}\,\chi_{q-p}\bigr)
+ \sqrt{\sin{q}}\, \bigl(\chi^{\dagger}_{\pi-q} +
\eta^{\dagger}_{\pi-q}\bigr)\eta_{0}\Biggr).
\end{multline}
Using the transformation~\eqref{eq:sferm-lat-def} to the fermions
$\psi^{1,2}$ and linearizing the dispersion relation, we thus have in
the scaling limit (keeping the low- and high-$p$ terms which have momenta close to $0$ or $\pi$, 
following the lines \textit{1.-4.} in the definition in Sec.~\ref{sec:emerge-latVir}), with a finite
mode $n$,
\begin{multline*}
\ffrac{L}{2\pi}H(n)\mapsto\sfermp_{0}\bigl( \sfermm_{n}+\bsfermm_{-n}\bigr)
+ \sum_{m>0}\bigl(\sfermp_{-m}\sfermm_{m+n} +
\sfermp_{m+n}\sfermm_{-m} + \bsfermp_{m}\bsfermm_{-m-n} +
\bsfermp_{-m-n}\bsfermm_{m}\bigr) \\
+ \sum_{m=1}^{n-1}\bigl(\sfermp_{m}\sfermm_{n-m} +
\bsfermp_{-m}\bsfermm_{m-n}\bigr)
+ \bigl(\sfermp_{n}+\bsfermp_{-n}\bigr)\sfermm_{0}.
\end{multline*}
We finally obtain the contribution corresponding to low-lying
excitations over the ground state,
\begin{equation}\label{scal-lim-Hn}
\ffrac{L}{2\pi}H(n) \mapsto \sum_{m\in\oZ}\sfermp_{-m}\sfermm_{m+n} +
\sum_{m\in\oZ}\bsfermp_{-m}\bsfermm_{m-n} = 
L_{n}+\bar{L}_{-n},\qquad n>0.
\end{equation}
These expressions are in agreement with~\cite{Kausch} where
 the right-moving Virasoro generators for a non-zero integrer $n$ are expressed as 
\begin{eqnarray}\label{Ln-def}
L_n=\sum_{m\in\oZ} \sfermp_{n-m}\sfermm_m
\end{eqnarray}
and the generators for the left-moving part are 
\begin{eqnarray}\label{barLn-def}
\bar{L}_n=\sum_{m\in\oZ} \bsfermp_{n-m}\bsfermm_m.
\end{eqnarray}
The left and right Virasoro algebras
of course commute, and the vacuum is annihilated by all non-negative modes. 

Similarly, we can show that the scaling limit of $H(n)$ for $n<0$
gives also the sum  $L_n +\bar{L}_{-n}$ of left and right Virasoro generators.
To cover the full Virasoro, we still need to get $L_n -\bar{L}_{-n}$. 

\medskip
It turns out that the corresponding lattice analogue of $L_n
-\bar{L}_{-n}$ is the Fourier equivalent of the momentum operator $P$
in~\eqref{lattmom}
\begin{equation*}
P(n)=\ffrac{i}{2}\sum_{j=1}^N e^{-iqj} [e_j,e_{j+1}], \qquad q=\ffrac{n\pi}{L}.
\end{equation*}
We first obtain expression for the commutator in terms of $\theta$-fermions,
\begin{equation*}
 [e_j,e_{j+1}] =  \ffrac{1}{N}\sum_{p_1,p_2}e^{ij(p_2-p_1)}( e^{-ip_1} - e^{ip_2})(i-e^{-ip_1})(1+ ie^{ip_2})
\fermd_{p_1}\ferm_{p_2},\qquad 1\leq j\leq N,
\end{equation*}
which we use to get
\begin{equation*}
P(n) = 4e^{iq}\Bigl(\sum_{\substack{p=\step\\\text{step}=\step}}^{\pi-\step}
\Bigl(\cos{\bigl(p+\ffrac{q}{2}\bigr)}\sin{\ffrac{p+q}{2}}
\cos{\ffrac{p}{2}}\,\fermd_{p+\frac{\pi}{2}}\ferm_{p+q+\frac{\pi}{2}} +
[p\to p+\pi]\Bigr) +
\cos{\ffrac{q}{2}}\sin{\ffrac{q}{2}}\,\fermd_{\frac{\pi}{2}}\ferm_{q+\frac{\pi}{2}}
\Bigr), \quad q={\ffrac{n\pi}{L}}.\label{inter-mom}
\end{equation*}
We consider the case $0< n< L$.
Using the formulas~\eqref{eq:ferm-chi-eta} expressing the
$\theta$-fermions in terms of the $\chi$-$\eta$ fermions, we rewrite the $P(n)$ as
\begin{multline*}
P(n) = 2e^{iq}\Biggl(\cos{\ffrac{q}{2}}\sqrt{\sin{q}}\, \chi^{\dagger}_{0}\bigl(\chi_{q} -
\eta_{q}\bigr) + \sum_{\substack{p=\step\\\text{step}=\step}}^{\pi-q-\step}
\cos{\bigl(p+\ffrac{q}{2}\bigr)}\sqrt{\sin{(p)}\sin{(p+q)}} \bigl(\chi^{\dagger}_{p}\,\chi_{p+q} +
\eta^{\dagger}_{p}\,\eta_{p+q}\bigr)\\ 
- \sum_{\substack{p=\step\\\text{step}=\step}}^{q-\step}
\cos{\bigl(p-\ffrac{q}{2}\bigr)}\sqrt{\sin{(p)}\sin{(q-p)}}\bigl(\chi^{\dagger}_{\pi-p}\,\eta_{q-p} -
\eta^{\dagger}_{\pi-p}\,\chi_{q-p}\bigr)
- \cos{\ffrac{q}{2}}\sqrt{\sin{q}}\, \bigl(\chi^{\dagger}_{\pi-q} -
\eta^{\dagger}_{\pi-q}\bigr)\eta_{0}\Biggr)
\end{multline*}
which finally gives in the scaling limit (for any finite
mode $n$)
\begin{equation*}
\ffrac{L}{2\pi}P(n) \mapsto \sum_{m\in\oZ}\sfermp_{-m+n}\sfermm_{m} -
\sum_{m\in\oZ}\bsfermp_{-m-n}\bsfermm_{m} = 
L_{n}-\bar{L}_{-n},\qquad n>0.
\end{equation*}
We can similarly show that the scaling limit of $P(n)$ for $n<0$
gives also  $L_n -\bar{L}_{-n}$.

\subsection{The twisted model}\label{sec:twist-lim}
We can perform the same analysis in the model with antiperiodic
$\gl(1|1)$ fermions discussed in Sec.~\ref{sec:twist-mod}. This requires the introduction of a new set of momenta replacing   (\ref{momenta-set}):
\begin{equation}\label{momenta-set-R}
p_m=
\begin{cases}
\frac{2\pi m}{N},\qquad &L-\text{odd},\\
\frac{(2m-1)\pi}{N}, &L-\text{even},
\end{cases} 
\end{equation}
with, as before, $1\leq m\leq N$, while the formal expression~\eqref{def-ferm} for the fermions $\ferm_{p_m}$ and $\fermd_{p_m}$ is the same. Proceeding, we now find the Hamiltonian $H_{\ramond}$ in the antiperiodic model as
\begin{equation}
H_{\ramond}=2\sum_p (1+\sin p)\theta_p^\dagger\theta_{p+\pi}
\end{equation}
which is the same formal expression as for the periodic model. The
difference is that now the momenta run over a different set. As a
result, the values $p={\pi\over 2}, {3\pi\over 2}$ are not allowed, and
there are no zero modes. The ground state in this model is non
degenerate, and we find
\begin{equation}
\langle \hbox{vac}|H_{\ramond}|\hbox{vac}\rangle=-2\bigl(\sin{\ffrac{\pi}{N}}\bigr)^{-1}=-\ffrac{2N}{\pi}-\ffrac{\pi}{3N}+o\,(1/N)
\end{equation}
which corresponds to an effective central charge
$c_{\text{eff}}=1=-2-24\times {-1\over 8}$. We introduce
$\chi^{(\dagger)}_p$ and $\eta^{(\dagger)}_p$ fermions generating
Hamiltonian eigenstates from the vacuum by the same
formal definition~\eqref{eq:chi-eta-def}  but
now momenta takes values $\step/2\leq p\leq \pi - \step/2$ with the
step $\step=\pi/L$. The normal
ordered Hamiltonian then reads
\begin{equation}
H_{\ramond}=2\sum_{\substack{p=\step/2\\\text{step}=\step}}^{\pi-\step/2}\sin{p}\bigl(\chi^{\dagger}_p\chi_{p}
+ \eta_p\eta^{\dagger}_{p}\bigr) +\langle \hbox{vac}|H_{\ramond}|\hbox{vac}\rangle
\end{equation}
with $\step={2\pi\over N}$, and the momenta are of the
form\footnote{Like in the periodic model, the difference with
${\pi\over 2}$ in the notation for $\chi_p$, $\eta_p$ fermions and
$\ferm_p$ fermions makes both cases $L$ even and odd similar.}
$p=\frac{(2m-1)\pi}{N}$, with $1\leq m\leq L$. Introducing exactly the
same definition  for modes as in~\eqref{eq:sferm-lat-def}, with $p_m=(m-1/2){\pi\over L}$, gives the
scaling limit
\begin{equation}
\ffrac{L}{2\pi}\Bigl(H_{\ramond}+\ffrac{2N}{\pi}\Bigr)\mapsto L_0+\bar{L}_0-\ffrac{1}{12},
\end{equation}
with the representarion of the Virasoro modes now 
\begin{equation}
L_0+\bar{L}_0=\sum_{m\in\oZ} :\psi_{-m+1/2}^2\psi_{m-1/2}^1: + \bigl(\psi\to\bar{\psi}\bigr).
\end{equation}
Similar analysis of the Hamiltonians $H(n)$ and of the momenta modes $P(n)$ provides the expected formulas for $L_n$ and $\bar{L}_n$ in this case as well.

\subsection{From $\JTL{N}$ to $\Vir(2)\boxtimes\overline{\Vir}(2)$?}\label{sec:fromJTL-Vir}

It is possible to calculate the scaling limit of more complicated
expressions. In particular, it is known that the scaling limit of the
logarithm of the transfer matrix itself involves only $L_0$ and
$\bar{L}_0$. Expanding this transfer matrix in powers of the spectral
parameter shows that there is an infinity of lattice Hamiltonians (see below for more details) and
momenta with identical scaling limits~\cite{KooSaleur}. For instance,
instead of taking $H\propto -\sum e_i$ we could take the next
Hamiltonian $H\propto \sum [e_j,[e_{j+1},e_{j+2}]]$, which should also
give $L_0+\bar{L}_0$ when acting on low energy states\footnote{More complicated expressions in the enveloping algebra of the Virasoro algebra would be obtained if one were to retain terms of higher order in $1/N$.  This is discussed in~\cite{KooSaleur}.}. This shows that
the correspondence between $\JTL{N}$ elements and elements in the product $\VirN(2)$ of
left and right Virasoro algebras is certainly not a bijection.  

While most of the foregoing results (such as the existence of expressions in JTL generators which converge in the scaling limit to Virasoro generators) are expected to hold for more general models,
how this precisely occurs 
 is not fully understood in general, because of our only partial control on the eigenstates of the Hamiltonian and matrix elements of generators (through the algebraic Bethe ansatz). Even the fact that
the Fourier modes of the local density of energy and momentum give,
when restricting to low lying energy states, the modes of the stress
energy tensor, can only be established analytically in free fermionic
models -- the Ising chain~\cite{KooSaleur}, and the $\gl(1|1)$ chain
here.

Indeed, a major difficulty in studying the correspondence between lattice algebras  and  $\VirN(2)$ is that  the lattice algebra acts  on all the states of the lattice model, including a priori  the high energy states  which disappear in the scaling limit. As a result, it is not clear on general grounds how to relate the structure of   $\JTL{N}$ modules and  $\VirN(2)$ modules: for instance,  we could have two $\JTL{N}$ modules in the spin chain mapped  by some words in  $\JTL{N}$ generators, but in such a way that this connection involves only highly excited states, and disappears when we restrict to excitations at small momentum and energy. On the other hand, it is tempting to speculate in general that low and high energy states are not special in an algebraic sense, so that, if a mapping exists between two modules (subquotients), it will still be present when restricting to the scaling limit. 

Of course, for  $\gl(1|1)$ things are  particularly simple: a look at $H(n)$ in~\eqref{lat-Vir-H} for instance shows that, for any finite $n$ as $L$ becomes large, it only connects low energy states to low energy states and high energy states to high energy states. This implies that the continuum limit of products of $H(n)$'s should coincide with the product of their continuum limits --- in particular, we can easily compute the commutators
\begin{equation*}
\bigl[H(n),H(-n)\bigr] = -4\sin q \sum_p \sin(2p) \fermd_p\ferm_p = -4\sin (q) P,\qquad q=\ffrac{n\pi}{L},
\end{equation*}
using the finite-chain fermionic expression~\eqref{inter}, and their scaling limit $\bigl[\ffrac{L}{2\pi}H(n),\ffrac{L}{2\pi}H(-n)\bigr]\mapsto 2n(L_0-{\bar L}_0)$. On the other hand, the commutator of the scaling limits~\eqref{scal-lim-Hn} of $\ffrac{L}{2\pi}H(\pm n)$ gives the same expression.  One can then for instance obtain the central charge directly from the commutator $[H(n),P(-n)]$. Indeed, a long calculation gives
\begin{equation}\label{Hn-Pn-comm}
\left [\ffrac{L}{2\pi}H(n),\ffrac{L}{2\pi}P(-n)\right] = \ffrac{L^2}{2\pi^2} e^{-iq/2}\sin (q/2)\Bigl(\sin^2\ffrac{q}{2}\,H[\cos^2] - \cos^2\ffrac{q}{2}\bigl(H-3H[\sin^2]\bigr)\Bigr),
\end{equation}
where the Hamiltonian $H$ is given in~\eqref{eq:hamilt-fermd} and we use the notation  $H[f]=2\sum_p f(p)(1+\sin p)\fermd_p\ferm_{p+\pi}$ for  Hamiltonians modified by a weight $f(p)$, where $f$ is  a periodic function $f(p+2\pi)=f(p)$. For $f(p)=\cos^2(p)$, we have the normal-ordered expression (in terms of the $\chi$-$\eta$ fermions introduced above)
\begin{equation}\label{H-cos-chi-eta}
H[\cos^2] = 2\sum_{p=\step}^{\pi-\step}\sin^3 p \bigl(\chi^{\dagger}_p\chi_{p}
+ \eta_p\eta^{\dagger}_{p}\bigr) - 2\sum_{p=\step}^{\pi-\step}\sin^3{p},
\end{equation}
where we have extracted the ground-state value of $H[\cos^2]$ in the second sum (compare with~\eqref{eq:ham-chi-eta-norm}), which has the leading asymptotic for large $L$
\begin{equation}\label{H-cos-vac-val}
\vacl H[\cos^2]\vacr = -2\sum_{m=1}^{L-1}\sin^3{\ffrac{m\pi}{L}} = \half\bigl(\cot{\ffrac{3\pi}{N}}-3\cot{\ffrac{\pi}{N}}\bigr) =
-\ffrac{4N}{3\pi} + o\,(1/N),
\end{equation}
with an $N$-linear contribution  canceled.
We then note $H[\sin^2]=H-H[\cos^2]$ and that the first sum in~\eqref{H-cos-chi-eta} give a contribution  of order $1/L^2$ to the Hamiltonian $H$ in
\eqref{Hn-Pn-comm} which has to be neglected in the scaling limit. We thus keep only the vacuum value~\eqref{H-cos-vac-val} to obtain finally the scaling limit of~\eqref{Hn-Pn-comm}
\begin{equation*}
\left [\ffrac{L}{2\pi}H(n),\ffrac{L}{2\pi}P(-n)\right]  \mapsto 2n (L_0+{\bar L}_0) + \ffrac{c}{6}n(n^2-1) = \left[L_n+\bar L_{-n}, L_{-n}-\bar L_{n}\right].
\end{equation*}

A similar calculation using the fermions shows that all other products also commute with the scaling
 limit, so that in particular the scaling limit of a commutator is the
 commutator of the scaling limits.

\newcommand{\specp}{\mathsf{u}}
\subsubsection{Higher Hamiltonians and their Fourier images}
It is also interesting (and we will use these results in our subsequent papers) to consider the scaling limit of the
whole family of higher Hamiltonians in the periodic $\gl(1|1)$
spin-chain. These can be obtained using the underlying integrable structure, and building the  family of commuting diagonal-to-diagonal transfer matrices
$T_d(\specp)$. An expansion of (the logarithm of) $T_d(\specp)$ in powers of  $\specp$  produce an infinite of commuting operators  $H_l(0)$, with $l\geq0$, see~\cite{KooSaleur} and references therein. To explore the properties of these $H_l(0)$,  we first compute multiple
commutators of the $\JTL{N}$-generators $e_j$
\begin{equation}
E_{j,l} = \Bigl[e_j,\bigl[e_{j+1},\dots
 e_{j+l-2},[e_{j+l-1},e_{j+l}]\dots\bigl]\Bigl], \qquad 1\leq j\leq N.
\end{equation} 

By an induction, we prove the following, for $1\leq j\leq N$ and $l>0$,
\begin{equation}
E_{j,l} = 
\begin{cases}
(-1)^j\ffrac{i}{N}\sum_{p_1,p_2}e^{ij(p_2-p_1)}( e^{ilp_2} + e^{-ilp_1})(i-e^{-ip_1})(1+ ie^{ip_2})
\fermd_{p_1}\ferm_{p_2},\qquad &l - \text{even},\\
-\ffrac{1}{N}\sum_{p_1,p_2}e^{ij(p_2-p_1)}( e^{ilp_2} - e^{-ilp_1})(i-e^{-ip_1})(1+ ie^{ip_2})
\fermd_{p_1}\ferm_{p_2},\qquad &l - \text{odd},
\end{cases}
\end{equation}
where the sums are taken over all allowed momenta $p_1$, $p_2$ from
the set~\eqref{momenta-set}. Then, the integrable Hamiltonians
$H_l(0)$ are given by the sums of the $E_{j,l}$ over all sites. In
particular, the operators $H_0(0)=H$ and $H_1(0)=P$ were studied
above in Sec.~\ref{sec:emerge-latVir} where we also studied their
Fourier images.  To find fermionic expressions for
Fourier images of all the higher Hamiltonians
\begin{equation*}
H_l(n) = -\half e^{-il\frac{\pi}{2}}\sum_{j=1}^Ne^{-iqj}E_{j,l},\qquad q=\ffrac{n\pi}{L} \quad\text{and}\quad l\in\oN,\quad n\in\oZ,
\end{equation*}
we repeat all the previous steps in the study of $H(n)$ and $P(n)$ in
Sec.~\ref{sec:emerge-latVir} and get 
\begin{equation*}
H_l(n) = 
4 e^{iq(l+1)/2}\sum_{p=0}^{2\pi-\step} \cos\bigl(l(p+q/2)\bigr)\cos\ffrac{p}{2}
\begin{cases}
\cos\ffrac{p+q}{2}\,
\fermd_{p+\frac{\pi}{2}}\ferm_{p+q-\frac{\pi}{2}},&\quad
l-\text{even},\\
\sin\ffrac{p+q}{2}\,
\fermd_{p+\frac{\pi}{2}}\ferm_{p+q+\frac{\pi}{2}},&\quad l-\text{odd},
\end{cases}
\end{equation*}
which we rewrite in terms of  the $\chi$-$\eta$ fermions, for integer $0\leq n < L$, as
\begin{multline*}
H_l(n) =
2e^{iq\frac{l+1}{2}}\Biggl( \sum_{\substack{p=\step\\\text{step}=\step}}^{\pi-q-\step}
\cos{l\bigl(p+\ffrac{q}{2}\bigr)}\sqrt{\sin{(p)}\sin{(p+q)}} \bigl(\chi^{\dagger}_{p}\,\chi_{p+q}
- (-1)^l\eta^{\dagger}_{p}\,\eta_{p+q}\bigr)\\
+\sum_{\substack{p=\step\\\text{step}=\step}}^{q-\step}
\cos{l\bigl(p-\ffrac{q}{2}\bigr)}\sqrt{\sin{(p)}\sin{(q-p)}}\bigl(\eta^{\dagger}_{\pi-p}\,\chi_{q-p} +
 (-1)^l\chi^{\dagger}_{\pi-p}\,\eta_{q-p}\bigr)\\
+\cos{\ffrac{lq}{2}}\sqrt{\sin{q}}\bigl(\chi^{\dagger}_{0}\chi_{q} +(-1)^l \chi^{\dagger}_{0}\eta_{q} + \eta^{\dagger}_{\pi-q}\eta_{0} +
 (-1)^l\chi^{\dagger}_{\pi-q}\eta_{0} \bigr) +
\delta_{n,0}\bigl(1+(-1)^l\bigr)\chi^{\dagger}_{0}\eta_{0} \Biggr).
\end{multline*}
This finally gives in the scaling limit (for finite $n$ and $l$) the
left and right Virasoro generators:
\begin{equation}\label{Hln-scal-lim}
\ffrac{L}{2\pi}H_l(n) \mapsto \sum_{m\in\oZ}\sfermp_{-m+n}\sfermm_{m} +(-1)^l
\sum_{m\in\oZ}\bsfermp_{-m-n}\bsfermm_{m} = 
L_{n}+(-1)^l\bar{L}_{-n}, \qquad l\geq0,
\end{equation}
which does not depend on $l$, only on its value modulo $2$. We note that this result is  obtained 
by  taking the leading term in the expansion
$\cos{l\bigl(p\pm\ffrac{q}{2}\bigr)}=1-\frac{(m\pm n/2)^2}{L^2}\pi^2
l^2+\dots$ only.
It is interesting to explore the content of the higher order terms in the scaling limit, and their relation 
with  conserved quantities in the conformal field theory.   We leave this
  problem for a future work \cite{GST}.
A very similar calculation gives  the same
 scaling limit~\eqref{Hln-scal-lim} for all negative modes $n<0$ as well.

To examine further the relation between $\JTL{N}$ and
$\Vir(2)\boxtimes\overline{\Vir}(2)$, it is possible to compare the
modules over these two algebras present respectively in the spin chain and the
continuum limit. This will be discussed in our third
paper~\cite{GRS3}. But before launching into representation theory, a lot can be
learned from the analysis of the lattice symmetries, to which we now
return.

\section{Symmetries and the scaling limit}\label{sec:symm-continuum}

The expectation that  the natural equivalent of the $\JTL{N}$ algebra in
the continuum limit
would be the product  of
the left and right Virasoro algebras encounters difficulties when we consider the centralizer~$\centVir$ of $\VirN(2)=\Vir(2)\boxtimes\overline{\Vir}(2)$. 
While for finite chains, the centralizer of
$\JTL{N}$ is $\LQGodd$, in the continuum limit, it is well known that
 $\VirN(2)$ commutes at
least with $\gl(1|1)$ and an $\nSU(2)$ symmetry discovered by
Kausch. The situation in the  boundary and periodic cases is thus quite different.
Several questions  arise as a result, the most obvious being, what
happens to $\LQGodd$ and how it is related with the continuum
$\nSU(2)$. This is what we consider first.

\medskip

 We first introduce the continuum fermions  via the mode expansion in the complex plane~\cite{Kausch}
\begin{equation}\label{sympl-ferm-def}
\Phi^\alpha(z,\bar{z})=\phi_0^\alpha-i\psi_0^\alpha\ln(z\bar{z})+i\sum_{m\neq 0}\ffrac{\psi^\alpha_m}{m}z^{-m}+
\ffrac{\bar{\psi}_m^\alpha}{m}\bar{z}^{-m}, \qquad \alpha,\beta\in\{1,2\},
\end{equation}
where the modes have the anti-commutation relations
\begin{equation*}
\{\psi^{\alpha}_m,\psi^{\beta}_{m'}\} =
mJ^{\alpha\beta}\delta_{m+m',0}\,,\qquad
\left\{\phim_0,\sfermp_0\right\}=i,\qquad
\left\{\phip_0,\sfermm_0\right\}=-i,
\end{equation*}
with the symplectic form $J^{12}=-J^{21}=1$.
Then, the generators of the global $\nSU(2)$ in the symplectic-fermion theory are
\begin{equation}\label{eq:Kausch-SU}
Q^a=d^a_{\alpha\beta}\left\{i\phi^\alpha_0\psi^\beta_0+\sum_{n=1}^\infty
\left(\ffrac{\psi^\alpha_{-n}\psi^\beta_n}{n} + \ffrac{\bar{\psi}^\alpha_{-n}\bar{\psi}^\beta_n}{n}\right)\right\}
\end{equation}
with 
\begin{equation}
d^0_{\alpha\beta}=
\half
\begin{pmatrix}
-1 & 0\\
0 & -1
\end{pmatrix}, \quad
d^1_{\alpha\beta}=
\half
\begin{pmatrix}
1&0\\
0&-1
\end{pmatrix}, \quad
d^2_{\alpha\beta}=
\half
\begin{pmatrix}
0&-1\\
-1&0\end{pmatrix},
\end{equation}
with $[Q^a,Q^b]=f^{ab}_cQ^c$ and $f^{01}_2=-1$.

A superficial look at the model would suggest that this $\nSU(2)$
should somehow `emerge' from the lattice symmetry $\LQGodd$.  In the
open case indeed, the lattice model has the full quantum group $\LQG$
symmetry, and the $\SU(2)$ part in that case coincides with the
$\SU(2)$ of the continuum limit. While in the periodic case, the
lattice model has less symmetry, the degeneracies remain of the form
$4j$, see~\cite{GRS2}, and would suggest that the non-commutation with Temperley--Lieb
of the `even part' of $\LQG$ is a lattice effect disappearing in the
continuum limit. A more careful look at the model shows that this
expectation is not correct at all. Maybe the quicker is simply to work out
the scaling limit of the generators.

\subsection{Scaling limit of the  (lattice) $s\ell(2)$ generators $\e$ and
  $\f$}
We consider therefore the scaling limit of the $s\ell(2)$ generators --
the renormalized powers $\e$ and
  $\f$. These do not commute with $\JTL{N}$ or even the Hamiltonian for a finite lattice, and the question is, how they related to the  $\nSU(2)$ generators in the continuum limit.  Recall  first the fermionic formula~\eqref{QG-ferm-2} for the operator $\e$,
\begin{multline}
\e =-\sum_{\substack{p=\step\\\text{step}=\step}}^{\pi-\step}
\cot\ffrac{p}{2}\,\fermd_{p+\frac{\pi}{2}}\,\fermd_{\frac{\pi}{2}-p} 
-
2i\sum_{p\ne\pi}\frac{\fermd_{p-\frac{\pi}{2}}\,\fermd_{\frac{\pi}{2}}}{e^{ip}+1}\\
= i\fermd_{\frac{\pi}{2}}\,\fermd_{\frac{3\pi}{2}} 
+ \sum_{\substack{p=\step\\\text{step}=\step}}^{\pi-\step}
\Bigl(\cot\ffrac{p}{2}\,\fermd_{\frac{\pi}{2}-p}\,\fermd_{p+\frac{\pi}{2}}
+
\ffrac{e^{-ip/2}}{\cos{p/2}}\bigl(\cot{\ffrac{p}{2}}\fermd_{p+\frac{\pi}{2}}
- i\fermd_{p-\frac{\pi}{2}}\bigr)\fermd_{\frac{\pi}{2}}\Bigr),
\end{multline}
which can be rewritten in the $\chi$-$\eta$ notation as
\begin{equation}\label{eq:div-e-chi-eta}
\e = \phip_0\sfermp_0 +
\sum_{p=\step}^{\pi-\step}
\eta^{\dagger}_p\chi^{\dagger}_{\pi-p} +
\sum_{p=\step}^{\pi-\step}
\ffrac{e^{-ip/2}}{\sqrt{\sin{p}}} \bigl((1-i)\chi^{\dagger}_p
-(1+i)\eta^{\dagger}_p\bigr) \chi^{\dagger}_0,
\end{equation}
where we introduce the operators $\phi^{1,2}_0$ conjugated to $\psi_0^{1,2}$,
\begin{eqnarray}
\phip_0=-i\sqrt{\pi/L}\,\fermd_{\frac{3\pi}{2}}, \qquad
\phim_0=i\sqrt{\pi/L}\,\ferm_{\frac{\pi}{2}}\\
\left\{\phim_0,\sfermp_0\right\}=i,\qquad
\left\{\phip_0,\sfermm_0\right\}=-i.
\end{eqnarray}
Going to the
$\psi^{\alpha}$-fermions defined in~\eqref{eq:sferm-lat-def} 
 gives the scaling limit
\begin{equation}
\e \mapsto  \phip_0\sfermp_0 + \sum_{m>0}
\Bigl(\ffrac{\sfermp_{m}\sfermp_{-m}}{m} -
\ffrac{\bsfermp_{m}\bsfermp_{-m}}{m}\Bigr)
+ \sum_{m\ne0} \ffrac{1}{m}\Bigl((i+1)\bsfermp_{m}\bsfermp_{0} +
(i-1)\sfermp_{m}\sfermp_{0}\Bigr).\label{elinear}
\end{equation}
The scaling limit for  $\f$ is given by similar formula with the
substitution $\psi^2\to \psi^1$, $\phip\to \phim$. 

As we can see, the scaling limit  of the renormalized powers $\e$ and
$\f$ describes a  different $s\ell(2)$ than the global $\nSU(2)$
we have in the symplectic fermions theory mostly
because of the second sum. Mainly for these reasons, the
four-dimensional space of the ground
states~\eqref{diag:space-gr-st}, spanned by the vacuum $\vac$, the state
 $\lvac$ and the two fermionic states $\phi^{1,2}$,
is \textit{not} invariant under the action of $\e$  and $\f$ on a
finite lattice. Indeed, it is easy to check using~\eqref{eq:div-e-chi-eta} that the vacuum
$\vac$ is the $s\ell(2)$-invariant while its logarithmic partner $\lvac$ is not an invariant,
\begin{equation*}
\e(\lvac) =
\sum_{m=1}^{L-1}(-1)^{m-1}(1-i)\ffrac{e^{-i\frac{\pi m}{2L}}}{\sqrt{\sin{\ffrac{m\pi}{L}}}}\prod_{\substack{j=1\\\,j\ne
m}}^{L-1}\chi_{\frac{j\pi}{L}}\vectv{\uparrow\dots\uparrow} \equiv
\lvac'\qquad \text{and}\qquad
\e(\phim) = i\phip + \eta_0\lvac',
\end{equation*}
This  is not surprising because the Hamiltonian on a finite lattice
does not commute with the $s\ell(2)$ generated by the $\e$ and $\f$. We see therefore that the natural $s\ell(2)$ generators obtained from $\LQG$ bear no simple relationship with Kausch's 
$\nSU(2)$ in the periodic case. 

\subsection{Scaling limit of $\LQGodd$ in the periodic model}\label{sec:scal-lim-centJTL}
The additional elements $\f^L$ and $\e^L$ which do not belong to
$\LQGodd$ but commute with $\JTL{2L}$  have no meaning
in the scaling limit $L\to\infty$ (they are non zero only on extremely excited states that are not part of that limit)  and we thus suppress them and make
no difference between $\centJTL$ and $\LQGodd$.

In the scaling limit, the centralizer $\LQGodd$ gives rise to the zero modes $\F\mapsto\psi_0^1$ and $\E\mapsto\psi_0^2$ and products of these with the renormalized even powers. We thus get, in the limit, the generators
%
\begin{equation}\label{scal-lim-LQGodd}
  \EE n =\e^n\E \mapsto 
\left[\sum_{m>0}
\Bigl(\ffrac{\sfermp_{m}\sfermp_{-m}}{m} -
\ffrac{\bsfermp_{m}\bsfermp_{-m}}{m}\Bigr)\right]^n\psi_0^2,\qquad
\FF n =\f^n\F \mapsto 
\left[\sum_{m>0}
\Bigl(\ffrac{\sfermm_{m}\sfermm_{-m}}{m} -
\ffrac{\bsfermm_{m}\bsfermm_{-m}}{m}\Bigr)\right]^n\psi_0^1,
\end{equation}
The Cartan element $\h$ meanwhile is given on a finite lattice by
\begin{equation}
2\h =  S^z = \sum_p\fermd_p\ferm_p - L  =
\sum_{p=0}^{\pi-\step} \bigl(\chi^{\dagger}_p\chi_p - \eta_p\eta^{\dagger}_p\bigr)   
\end{equation}
and has the limit
\begin{equation}\label{scal-lim-h}
2\h\mapsto - i \bigl(\sfermp_0\phim_0 + \sfermm_0\phip_0\bigr) + \sum_{m>0}
\ffrac{1}{m}\bigl(\sfermp_{-m}\sfermm_{m} + \sfermm_{-m}\sfermp_{m} + [\psi\to\bar{\psi}]\bigr)
\end{equation}
while the generator $\K = (-1)^{2\h}$.  Note that the value of $S^z$ on
the lattice is {\it twice} the value of the third component of the
$\nSU(2)$ isospin in the continuum. The case $S^z$ even (odd)
corresponds to bosonic (fermionic) states, so the continuum isospin is
integer (respectively, half integer).

Using the symplectic fermions expressions~\eqref{Ln-def}
and~\eqref{barLn-def} for the left and right Virasoro modes $L_n$,
$\bar{L}_n$, we see that the scaling limit~\eqref{scal-lim-LQGodd}
and~\eqref{scal-lim-h}
of the $\JTL{N}$ centralizer $\centJTL$ does commute with the full
Virasoro algebra $\VirN(2)$ (the multiplication by the zero modes suppresses all the unwanted terms
in the expression \eqref{elinear}.)
We should also note that the limit of $\centJTL$ cannot be obtained
as the multiplication of the global $\nSU(2)$ with the zero modes.
 There remains  indeed a different 
 sign between the left and right moving components in the two expressions, meaning once again that the lattice objects identified so far 
are not related with the Kausch's $\nSU(2)$.

\subsection{How to get  the (continuum) $\nSU(2)$ generators from the spin
  chain?}\label{sec:SUgen-lat} 

Of course, it is possible  to study in more detail the scaling limit
 of the lattice fermions themselves, and thus build somewhat artificially lattice  quantities which are not symmetries of the problem in finite size, but go over to the  $\nSU(2)$ generators in the continuum limit. 
A little trial and error suggests the introduction of  
\begin{equation}\label{eq:epl-fpl-def}
\te{+} = \chi_0^{\dagger}\eta_0^{\dagger}-\sum_{p=\step}^{\pi-\step} (\cos{p})\; \eta^{\dagger}_p\chi^{\dagger}_{\pi-p},\qquad
\tf{+} =  \eta_0\chi_0 +
\sum_{p=\step}^{\pi-\step} (\cos{p})^{-1}\;\chi_p\eta_{\pi-p},
\end{equation}
which look like~\eqref{eq:div-e-chi-eta} but are  slightly modified by
the introduction of the  weight $\cos{p}$
\begin{equation}\label{eq:tildee-fermd}
\te{+} = \half\sum_p (\sin{p})\; \fermd_{p}\fermd_{-p},\qquad
\tf{+} = \half\sum_{p\ne0,\pi} (\sin{p})^{-1}\; \ferm_{-p}\ferm_{p},\qquad\text{with} \qquad [\te{+},\tf{+}]=S^z.
\end{equation}
We now have 
\begin{equation}\label{ep-fp-lin}
\te{+}\mapsto -i \phip_0\sfermp_0 +
\sum_{m>0}
\Bigl(\ffrac{\sfermp_{m}\sfermp_{-m}}{m} +
\ffrac{\bsfermp_{m}\bsfermp_{-m}}{m}\Bigr),\qquad
\tf{+}\mapsto i \phim_0\sfermm_0 +
\sum_{m>0}
\Bigl(\ffrac{\sfermm_{-m}\sfermm_{m}}{m} +
\ffrac{\bsfermm_{-m}\bsfermm_{m}}{m}\Bigr)
\end{equation}
in agreement with the expressions~\eqref{eq:Kausch-SU} of the global
 $\nSU(2)$ generators.
 It is straightforward to check that, on a finite-lattice, the $\te{+}$ and $\tf{+}$ 
commute with the Hamiltonian~\eqref{eq:hamilt-fermd} which can be easily checked
using~\eqref{eq:tildee-fermd}:
\begin{equation*}
[\te{+},H] = \sum_p \sin{p}(1+\sin{p}) \fermd_{\pi-p}\fermd_{p} = 0.
\end{equation*}
However, $\te{+}$ and $\tf{+}$  {\sl do not commute} with $H(n)$ for $n\geq1$ -- and thus are not part of $\centJTL$.
The reader interesting in the centralizer $\cent_{H}$ of the Hamiltonian (but not of the whole algebra $\JTL{N}$) can find a discussion in Sec.~\ref{sec:cent-H} below.

\begin{rem}
We could equivalently study the family of operators (generalizing~\eqref{eq:epl-fpl-def})
\begin{equation}\label{ten-tfn-def}
\te{n} = \chi_0^{\dagger}\eta_0^{\dagger}-\sum_{p=\step}^{\pi-\step} (\cos{p})^{n}\; \eta^{\dagger}_p\chi^{\dagger}_{\pi-p},\qquad
\tf{n} =  \eta_0\chi_0 +
\sum_{p=\step}^{\pi-\step} (\cos{p})^{-n}\;\chi_p\eta_{\pi-p}, \qquad n - \text{odd},
\end{equation}
with the $s\ell(2)$ relations
\begin{equation}\label{eq:lat-min-glSU-comm}
[\h,\te{n}] = \te{n},\qquad [\h,\tf{n}] = -\tf{n},\qquad
[\te{n},\tf{n}]=2\h = S^z.
\end{equation}
The
scaling limit of $\te{n}$ and $\tf{n}$  are all identical with~\eqref{eq:Kausch-SU}, but   we stress that these operators do not
commute with the $\JTL{N}$.
The $\te{n}$ and $\tf{n}$ are however in the centralizer for
the Hamiltonian $H(0)$, which is easy to check  -- see also Sec.~\ref{sec:cent-H} below for more details.

In terms of  $\theta$-fermions expression the generators read
\begin{equation}
\te{n} = \half\sum_p (\sin{p})^{n}\; \fermd_{p}\fermd_{-p},\qquad
\tf{n} = \half\sum_{p\ne0,\pi} (\sin{p})^{-n}\; \ferm_{-p}\ferm_{p}.
\end{equation}
Going back to real space however leads to a strongly non local expression for one of these generators ({\it e.g.}, $\tf{n}$ for $n$ positive), since the pole in the Fourier transform 
give rise to a power law growth for the couplings between pairs of fermions  $f_j$.

\end{rem}

\subsection{The twisted model}

In the model with anti-periodic boundary conditions introduced and studied in Sec.~\ref{sec:twist-mod}, things are a bit different. There are no zero modes, and the continuum limit of the $U \SU(2)$ (the centralizer of the $\JTL{N}^{tw}$) generators reads simply 
\begin{equation}
\tilde{Q}^a=d^a_{\alpha\beta}\sum_{n=0}^\infty
\left(\ffrac{\psi^\alpha_{-n-1/2}\psi^\beta_{n+1/2}}{n+1/2} - \ffrac{\bar{\psi}^\alpha_{-n-1/2}\bar{\psi}^\beta_{n+1/2}}{n+1/2}\right).
\end{equation}
Of course, in this case  the continuum limit exhibits in fact two
$\SU(2)$'s, left and right being fully factorized (while they remain
coupled by the zero modes in the periodic model). These two $\SU(2)$'s can
be combined with plus or minus sign; the lattice symmetry (see Sec.~\ref{sec:twist-mod}) becomes one
of them.

\subsection{Remarks about the  Hamiltonian centralizer and  loop $s\ell(2)$
  symmetry}\label{sec:cent-H}
It is interesting to consider further the  centralizer $\cent_{H}$ of
the Hamiltonian $H$ on a finite lattice.
For this, we first give the quantum-group expression for the $n=0$
member of the $s\ell(2)$ family~\eqref{ten-tfn-def}
\begin{equation}\label{epr-fpr-def}
\epr = \ffrac{1}{N}\bigl(\Er\E - i[\e\E,\Fr]\K^{-1}\bigr),\qquad 
\fpr = \ffrac{1}{N}\bigl(\F\Fr + i[\f\F,\Er]\K\bigr), 
\end{equation}
where $\Er$ and $\Fr$ are generators of (representation of)
$U_{\q^{-1}}s\ell(2)$ 
\begin{equation}
\Delta^{N-1}(\Er) =\q^{-1} \sum_{j=1}^{N}\q^{-j} c_j^{\dagger} \K^{-1} = \sqrt{N}\fermd_{3\pi/2}\K^{-1},
\qquad \Delta^{N-1}(\Fr) = \sum_{j=1}^{N}\q^{-j+1} c_j = \q\sqrt{N}\ferm_{\pi/2}.\label{QG-right-ferm-1}
\end{equation}
Recalling also the fermionic
expressions for the generators $\E$ and $\F$
in~\eqref{QG-ferm-1},  we obtain
\begin{eqnarray}
\epr = \fermd_{\frac{\pi}{2}}\fermd_{\frac{3\pi}{2}}
- \e\,\fermd_{\frac{\pi}{2}}\ferm_{\frac{\pi}{2}} -
\ferm_{\frac{\pi}{2}}\fermd_{\frac{\pi}{2}}\e =
\fermd_{\frac{\pi}{2}}\fermd_{\frac{3\pi}{2}} - 
\sum_{p=\frac{\pi}{2}+\step}^{\frac{3\pi}{2}-\step}
\tan{\ffrac{1}{2}\bigl(p+\ffrac{\pi}{2}\bigr)}\fermd_{p}\fermd_{\pi-p},\\
\fpr =
\ferm_{\frac{3\pi}{2}}\ferm_{\frac{\pi}{2}} +
 \f\,\ferm_{\frac{3\pi}{2}}\fermd_{\frac{3\pi}{2}} +
\fermd_{\frac{3\pi}{2}}\ferm_{\frac{3\pi}{2}}\f =
\ferm_{\frac{3\pi}{2}}\ferm_{\frac{\pi}{2}} +
\sum_{p=\frac{\pi}{2}+\step}^{\frac{3\pi}{2}-\step}
\cot{\ffrac{1}{2}\bigl(p+\ffrac{\pi}{2}\bigr)}\ferm_{\pi-p}\ferm_{p}.
\end{eqnarray}

It is then possible to  show that the $\epo$ and $\fpo$ together with
$\epl$ and $\fpl$  --
a lattice analogue of Kausch's $\nSU(2)$ defined
in~\eqref{eq:epl-fpl-def} -- generate a loop $s\ell(2)$ algebra. First, we note~\eqref{eq:tildee-fermd}, and~\eqref{eq:lat-min-glSU-comm} is true for $n=0$, and $[\epo,\epl]=[\fpo,\fpl]=0$. Then, we only need to check the higher-order Serre relations
\begin{gather}
[\epo,[\epo,[\epo,\fpl]]] = [\fpo,[\fpo,[\fpo,\epl]]] = 0,\label{loop-sl-Serre-1}\\
[\epl,[\epl,[\epl,\fpo]]] = [\fpl,[\fpl,[\fpl,\epo]]] = 0.\label{loop-sl-Serre-2}
\end{gather}
Using~\eqref{eq:epl-fpl-def}, we compute the double commutators
$[\epo,[\epo,\fpl]] = -2\te{-1}$, {\it etc.},
which immediately give the Serre relations~\eqref{loop-sl-Serre-1} and we proceed similarly to get~\eqref{loop-sl-Serre-2}. 

Since we have seen that the $\te{n}$ and $\tf{n}$ commute with the
Hamiltonian, we have thus found a loop algebra symmetry \textit{of the
Hamiltonian} $H$ in the $\gl(1|1)$ spin chain. This is much like the
symmetry uncovered in~\cite{DFMC,Deguchi}, but a more careful
comparison shows that the sectors we are considering are different:
while in~\cite{DFMC}, the loop algebra is observed for periodic
(antiperiodic) XX spin chain and even (odd) spin, ours is obtained in
the opposite case, corresponding to periodicity for the $\gl(1|1)$
fermions. We stress that in contrast with the main focus of this
paper, the loop algebra is only a symmetry of the Hamiltonian, and
does not extend to the full $\JTL{N}$ algebra.

Of course, having observed the loop algebra on the lattice it is
natural to ask what happens of it in the continuum limit.  We have
already seen in~\eqref{ep-fp-lin} that the scaling limit of $\epl$ and $\fpl$ coincides
with the $\SU(2)$ generators (\ref{eq:Kausch-SU}). The scaling limit
of $\epr$, $\fpr$ gives very similar expressions, only with the
opposite sign between the chiral and
  antichiral components in the sum. In the end, we get a
(representation of the) loop $s\ell(2)$ algebra, with further additional
relations  like
$[\epo,[\epo,\fpl]] = -2\epl$ due to coincidence of $\epl$ with $\te{-1}$ and $\fpl$ with
$\tf{-1}$ in the scaling limit, and in the leading order.

We note however that  there exists a potential for yet more symmetries
of the Hamiltonian in the finite-lattice problem. Indeed, while the loop $s\ell(2)$ describes
intertwining operators of the Hamiltonian between sectors $\Hilb_{[j]}$
and $\Hilb_{[j']}$, with $|j-j'|=0\mod2$ and $\Hilb_{[n]}$ denotes the subspace with $2\h=S^z=n$, there are two  linearly independent copies
of $\LQGodd$ in $\cent_{H}$  describing
intertwining operators between sectors  with $|j-j'|=1\mod2$. One copy
of (the representation of) $\LQGodd$ in $\cent_{H}$ is generated by $\epr^n\E$ and
$\fpr^m\F$, with $n,m\geq0$, and coincides with the representation
$\repQG(\LQGodd)$. The second copy is generated by  $\epl^n\E$ and
$\fpl^m\F$, with $n,m\geq0$. The two copies are coupled/intersected by the same
$\gl(1|1)$ subalgebra.

\medskip

\section{Conclusion}\label{sec:concl}

The main mathematical result of this paper is the symmetry algebra $\centJTL$ found
in the periodic $\gl(1|1)$ spin-chain -- the centralizer of the
representation of the Jones--Temperley--Lieb algebra $\JTL{N}$. This
symmetry algebra will  be exploited in an analysis~\cite{GRS2} of the
spin-chain as a module over $\JTL{N}$ following earlier results in the boundary case~\cite{ReadSaleur07-2}.

We also discussed in this paper how to proceed from the $\JTL{N}$
generators to get the Virasoro modes in the non-chiral logarithmic
conformal field theory of symplectic fermions:
the combinations $H(n)$ and $P(n)$, introduced in~\eqref{HPn-def},
of the
$\JTL{N}$ generators converge  as $L\to\infty$ to the well-known symplectic fermions representation
of the left and right Virasoro generators
\begin{equation*}
\ffrac{L}{2\pi}H(n) \mapsto 
L_{n}+\bar{L}_{-n}, \qquad \ffrac{L}{2\pi}P(n) \mapsto 
L_{n}-\bar{L}_{-n}.
\end{equation*}

Finally, we showed in Sec.~\ref{sec:symm-continuum} that the scaling limit of the $\JTL{}$ centralizer
$\centJTL$ describes a symmetry of the left-right Virasoro algebra -- that is,  gives
an algebra of intertwining operators respecting the left and right
Virasoro. It is thus reasonable to expect  module structures in the
continuum for the non-chiral Virasoro algebra to be related  to the ones for
$\JTL{N}$: this will be discussed in  our second~\cite{GRS2} and mostly in our third paper \cite{GRS3}.

\medskip

The continuum theory admits a further $\nSU(2)$ symmetry,  which can only lead to
 a refinement of the results inherited from the lattice, since this symmetry is not present in the microscopic model. In fact, if one insists in considering only, in the algebraic approach,  the product 
 $\VirN(2)=\Vir(2)\boxtimes\overline{\Vir}(2)$ as the basic algebra, it is necessary, following our philosophy, to study then the  
  centralizer $\centVir$ of
 $\VirN(2)$ in the local theory. 
 This obviously is not a simple object. It clearly
 contains $\LQG$ (at $\q=i$) generated by the $\nSU(2)$
\begin{equation}\label{eq:Kausch-SU-concl}
Q^a=d^a_{\alpha\beta}\left\{i\phi^\alpha_0\psi^\beta_0+\sum_{m>0}^\infty
\left(\ffrac{\psi^\alpha_{-m}\psi^\beta_m}{m} + \ffrac{\bar{\psi}^\alpha_{-m}\bar{\psi}^\beta_m}{m}\right)\right\}
\end{equation}
  and $\gl(1|1)$ (generated by $\psi_0^1,\psi_0^2$)
but this subalgebra
 does not exhaust the centralizer:
 extending $\gl(1|1)$
we have  the  full scaling limit of the lattice  $\LQGodd$ which we have seen is generated by 
\begin{equation}\label{concl:cent-lim}
  \left[\sum_{m>0}
\Bigl(\ffrac{\sfermp_{m}\sfermp_{-m}}{m} -
\ffrac{\bsfermp_{m}\bsfermp_{-m}}{m}\Bigr)\right]^n\psi_0^2,\qquad
\left[\sum_{m>0}
\Bigl(\ffrac{\sfermm_{m}\sfermm_{-m}}{m} -
\ffrac{\bsfermm_{m}\bsfermm_{-m}}{m}\Bigr)\right]^n\psi_0^1,
\end{equation}
and the  Cartan element, already present in the  $\nSU(2)$~\eqref{eq:Kausch-SU-concl}. On the other hand, it would have been natural  to describe  the
 centralizer  as a quotient of 
 $\LQG\tensor\LQG$ -- the tensor product of the centralizers for (anti)chiral theories $\Vir(2)$ and
 $\overline{\Vir}(2)$. How to do this in practice is not entirely
 clear. We only note that  the centralizer $\centVir$ should in particular contain two
 copies of $\LQGodd$ --- one is the $\JTL{N}$'s centralizer
discussed in Sec.~\ref{sec:scal-lim-centJTL}, and the second is obtained from similar formulas but with
 the opposite sign between the left and right moving components
 in~\eqref{concl:cent-lim} --- coming from each of the chiral halves and coupled
 by the same $\gl(1|1)$ subalgebra.
 However, all these subalgebras still do not   exhaust the centralizer, as can be
 easily seen by commuting them with the $\nSU(2)$.
 
 \medskip
 
We believe  in fact that refining our understanding of $\VirN(2)$ and
$\centVir$ or insisting on the role of Kausch's $\nSU(2)$ is not the way to go. We have strong evidence -- coming
from the study of other models such as those based on $\gl(2|2)$ or
$\gl(2|1)$ that, in fact, the lattice results fully represent the
algebraic structure of the continuum limit. This means that the good
object to consider is not just $\VirN(2)$, but a larger object,
extended by fields mixing the chiral and antichiral sectors, whose
representation theory can be directly inferred from the representation
theory of JTL, and whose centralizer is only $\LQGodd$. This will be
discussed in detail in our third paper~\cite{GRS3}.

\medskip

To conclude, we briefly discuss  the triplet W-algebra~\cite{[K-first],Kausch}. While this algebra does not seem to play an important role in the analysis of models based, {\it e.g.}, on  $\gl(2|2)$ or $\gl(2|1)$, 
it is nevertheless tempting to wonder if, like for the Virasoro algebra, its generators can be simply obtained from lattice considerations. A remark to that effect concerns 
the
permutation $\Pi_{j,j+2}$ of sites at positions $j$, $j+2$. It is easy to
write this operator in terms of fermions
\begin{equation}
\Pi_{j,j+2}=-1+(-1)^j\left(f_j-f_{j+2}\right)\left(f_j^\dagger-f_{j+2}^\dagger\right).
\end{equation}
Consider now 
\begin{equation}
\Pi\equiv \sum_{j=1}^N \Pi_{j,j+2}=2\sum_{\substack{p=\step\\\text{step}=\step}}^{\pi-\step} \left(\chi_p^\dagger\chi_p+\eta_p^\dagger\eta_p\right)(1-\cos 2p).
\end{equation}
In the scaling limit this becomes 
\begin{equation}
\Pi\mapsto \ffrac{1}{L^2}\sum_{m>0} m\left(\psi_{-m}^2\psi_m^1+\psi_{-m}^1\psi_m^2\right)+\left[\psi\to \bar{\psi}\right].
\end{equation}
 We recognize the zero mode of the $W^0+\bar{W}^0$ generator, with
 \begin{equation}
 W^0=\partial\psi^1\psi^2-\psi^1\partial\psi^2,
 \end{equation}
where $\psi^a=i\partial\Phi^a$. It is in fact possible to come up with a full lattice version
 of the triplet W-algebra, with a transparent algebraic interpretation. This   is discussed in a
separate paper~\cite{GST}.

\section*{Acknowledgements}
We are grateful to C.~Candu, I.B.~Frenkel, M.R.~Gaberdiel, J.L.~Jacobsen, 
G.I.~Lehrer, V.~Schomerus, I.Yu.~Tipunin and R.~Vasseur for valuable discussions, and to an anonymous referee for many useful comments. The
work of A.M.G. was supported in part by Marie Curie IIF fellowship, the RFBR grant 10-01-00408,
and the RFBR--CNRS grant 09-01-93105. A.M.G is also grateful to M.R.~Gaberdiel for
kind hospitality in ETH, Z\"urich during  2010, and  grateful to N.~Read for
kind hospitality in Yale University during  2011.  The work of
H.S. was supported by the ANR Projet 2010 Blanc SIMI 4 : DIME.  The work of
N.R. was supported by the NSF grants DMR-0706195 and DMR-1005895.  The authors
are also grateful to the organizers of the ACFTA program at the Institut Henri Poincar\'e in Paris, where this work was finalized.

\section*{Appendix A: The full quantum group $\LQG$ at roots of unity}
\renewcommand\thesection{A}
\renewcommand{\theequation}{A\arabic{equation}}
\setcounter{equation}{0}
We collect here the   expressions for the quantum group $\LQG$ that we
use in the analysis of symmetries of $\gl(1|1)$ spin-chains. We
introduce standard notation for $\q$-numbers
$[n]=\ffrac{\q^n-\q^{-n}}{\q-\q^{-1}}$ and set $[n]!=[1][2]\dots[n]$.

\subsection{Defining relations}
The \textit{full} (or Lusztig) quantum group $\LQG$
with $\q = e^{i\pi/p}$, for $p\geq2$,  is generated by $\E$, $\F$, and
$\K$ satisfying the standard relations for the quantum $s\ell(2)$,
\begin{equation}\label{Uq-com-relations}
  \K\E\K^{-1}=\q^2\E,\quad
  \K\F\K^{-1}=\q^{-2}\F,\quad
  [\E,\F]=\ffrac{\K-\K^{-1}}{\q-\q^{-1}},
\end{equation}
with some constraints,
\begin{equation}\label{root-1-rel}
  \E^{p}=\F^{p}=0,\quad \K^{2p}=\one,
\end{equation}
and 
additionally by the divided powers $\f\sim \F^p/[p]!$ and $\e\sim \E^p/[p]!$, which turn out to satisfy the usual $s\ell(2)$-relations:
\begin{equation}\label{sl2-rel}
  [\h,\e]=\e,\qquad[\h,\f]=-\f,\qquad[\e,\f]=2\h.
\end{equation}
There are also  `mixed' relations~\cite{BFGT}
\begin{gather}
  [\h,\K]=0,\qquad[\E,\e]=0,\qquad[\K,\e]=0,\qquad[\F,\f]=0,\qquad[\K,\f]=0,\label{zero-rel}\\
  [\F,\e]= \ffrac{1}{[p-1]!}\K^p\ffrac{\q \K-\q^{-1} \K^{-1}}{\q-\q^{-1}}\E^{p-1},\qquad
  [\E,\f]=\ffrac{(-1)^{p+1}}{[p-1]!} \F^{p-1}\ffrac{\q \K-\q^{-1} \K^{-1}}{\q-\q^{-1}},
    \label{Ef-rel}\\
  [\h,\E]=\ffrac{1}{2}\E A,\quad[\h,\F]=- \ffrac{1}{2}A\F,\label{hE-hF-rel}
\end{gather}
where 
\begin{equation}\label{A-element}
  A=\,\sum_{s=1}^{p-1}\ffrac{(u_s(\q^{-s-1})-u_s(\q^{s-1}))\K
        +\q^{s-1}u_s(\q^{s-1})-\q^{-s-1}u_s(\q^{-s-1})}{(\q^{s-1}
         -\q^{-s-1})u_s(\q^{-s-1})u_s(\q^{s-1})}\,
        u_s(\K)\idem_s
\end{equation}
with the polynomials $u_s(\K)=\prod_{n=1,\;n\neq s}^{p-1}(\K-\q^{s-1-2n})$, and
$\idem_s$ are some central primitive idempotents~\cite{BFGT}.
The relations~\eqref{Uq-com-relations}-\eqref{A-element} are the defining
relations of the quantum group $\LQG$.

\medskip

The quantum group $\LQG$ has a Hopf-algebra structure
with the comultiplication 
\begin{gather}
  \Delta(\E)=\one\otimes \E + \E\otimes \K,\quad
  \Delta(\F)=\K^{-1}\otimes \F + \F\otimes\one,\quad
  \Delta(\K)=\K\otimes \K,\label{Uq-comult-relations}\\
  \Delta(\e)=\e\tensor\one +\K^p\tensor \e
  +\ffrac{1}{[p-1]!} \sum_{r=1}^{p-1}\ffrac{\q^{r(p-r)}}{[r]}\K^p\E^{p-r}\tensor \E^r
  \K^{-r},\label{e-comult}\\
 \Delta(\f)= \f\tensor \one + \K^p\tensor \f+\ffrac{(-1)^p}{[p-1]!} 
  \sum_{s=1}^{p-1}\ffrac{\q^{-s(p-s)}}{[s]}\K^{p+s}\F^s\tensor \F^{p-s}.\label{f-comult}
\end{gather}
The antipode and counity are not used in the paper but the  reader can
find them, for example, in~\cite{BFGT}.

We can easily write the $(N-1)$-folded coproduct for the capital
generators $\E$ and $\F$,
\begin{equation}\label{N-fold-comult-cap}
\Delta^{N-1}\E =
\sum_{j=1}^{N}\underbrace{\one\tensor\dots\tensor\one}_{j-1}\tensor
\E\tensor \K \tensor \dots \tensor \K,\qquad
\Delta^{N-1}\F =
\sum_{j=1}^{N}\underbrace{\K^{-1}\tensor\dots\tensor \K^{-1}}_{j-1}\tensor
\F\tensor \one \tensor \dots \tensor \one.
\end{equation}

\subsection{Standard spin-chain notations}
We note the Hopf-algebra homomorphism
\begin{equation*}
\E\mapsto S^+ k, \qquad \F\mapsto k^{-1} S^-,\qquad \text{with}\quad k=\sqrt{\K},
\end{equation*}
where we introduced the more usual (in the spin-chain
literature~\cite{PasquierSaleur,DFMC}\footnote{We note that our
  convention for the spin-chain  representation differs from the one
  in~\cite{PasquierSaleur} by the change $\q\to\q^{-1}$.}) quantum group generators
 \begin{equation}
 S^{\pm}=\sum_{1\leq j\leq N} \q^{-\sigma_1^z/2}\otimes\ldots
 \q^{-\sigma_{j-1}^z/2}\otimes\sigma_{j}^{\pm}\otimes\q^{\sigma_{j+1}^z/2}\otimes
 \ldots \otimes \q^{\sigma_{N}^z/2}
 \end{equation}
together with $k=\q^{S^z}$ and the relations
\begin{gather*}
    k S^{\pm}k^{-1}=\q^{\pm 1}S^{\pm},\qquad
   \left[S^{+},S^{-}\right]=\ffrac{k^2-k^{-2}}{\q-\q^{-1}},\\
    \Delta(S^{\pm})=k^{-1}\otimes S^{\pm}+S^{\pm}\otimes k.
\end{gather*}

\subsubsection{The case of XX spin-chains}
For $p=2$ or ``XX spin-chain'' case, the  $(N-1)$-folded coproduct of the renormalized powers $\e$ and $\f$ reads 
\begin{multline}\label{N-fold-comult-ren-e}
\Delta^{N-1}\e =
\sum_{j=1}^{N}\underbrace{\one\tensor\dots\tensor\one}_{j-1}\tensor
\e\tensor \K^2 \tensor \dots \tensor \K^2 +\\
+ \q\sum_{t=0}^{N-2}\sum_{j=1}^{N-1-t} \underbrace{\one\tensor\dots\tensor\one}_{j-1}\tensor
\E\tensor \underbrace{\K \tensor \dots \tensor \K}_{t}\tensor \E\K\tensor \K^2 \tensor \dots \tensor \K^2
\end{multline}
and
\begin{multline}\label{N-fold-comult-ren-f}
\Delta^{N-1}\f =
\sum_{j=1}^{N}\underbrace{\K^2\tensor\dots\tensor \K^2}_{j-1}\tensor
\f\tensor \one \tensor \dots \tensor \one +\\
+ \q^{-1}\sum_{t=0}^{N-2}\sum_{j=1}^{N-1-t}
\underbrace{\K^2\tensor\dots\tensor \K^2}_{j=1}\tensor
\K^{-1}\F\tensor \underbrace{\K^{-1} \tensor \dots \tensor \K^{-1}}_{t}\tensor \F\tensor \one \tensor \dots \tensor \one.
\end{multline}
These renormalized powers can also be expressed in terms of the more usual spin-chain operators, and one finds at $p=2$
\begin{equation*}
\Delta^{N-1}(\e)=\q S^{+(2)}k^{2},\qquad \Delta^{N-1}(\f)=\q^{-1}
k^{-2}S^{-(2)},
\end{equation*}
where $\q=i$ and
 \begin{equation}
 S^{\pm (2)}=\sum_{1\leq j<k\leq N-1} \q^{-\sigma_1^z}\otimes\ldots
 \otimes \q^{-\sigma_{j-1}^z}\otimes\sigma_{j}^{\pm}
 \otimes1\otimes\ldots\otimes 1\otimes \sigma_k^{\pm} \otimes
 \q^{\sigma_{k+1}^z}\otimes \ldots \otimes \q^{\sigma_{N}^z}.
 \end{equation}

\section*{Appendix B: A proof for the centralizer of $\JTL{N}$}
\renewcommand\thesection{B}
\renewcommand{\theequation}{B\arabic{equation}}
\setcounter{equation}{0}

Our proof of Thm.~\ref{Thm:centr-JTL-main} consists in the following three lemmas.  First,
in Lem.~\ref{lem:endomor-TL}, we describe the two-parameter family of
vector spaces $\VEnd_{k,t}$ spanned by homomorphisms (respecting the
open Temperley--Lieb algebra $\TL{N}$ generated by $e_j$ with $1\leq
j\leq N-1$) between any two sectors $\Hilb_{(k)}$ and $\Hilb_{(k')}$ for
$0\leq k\leq N$ and $k-k'=1\mod 2$; we denote by $\Hilb_{(k)}$  the sector\footnote{We note the
  subspace $\Hilb_{(k)}$ coincides with $\Hilb_{[\frac{N}{2}-k]}$, where $\Hilb_{[n]}$ denotes the subspace with $2\h=S^z=n$.}
with $k$~an\-ti\-fer\-mions $\ferm_{p_j}$. Then, in
Lem.~\ref{lem:endo-perTL-count}, we compute commutators between $e_N$
and an intertwining operator from $\VEnd_{k,t}$ and show that all homomorphisms
(between $\Hilb_{(k)}$ and $\Hilb_{(k')}$) respecting the periodic
Temperley--Lieb\footnote{That  is, the algebra
  generated by the $e_j$ with $1\leq j\leq N$, {\it i.e.}, without the
  translation generator $u^2$, see Sec.~\ref{sec:super-spin-ch-def}.} algebra $\TL{N}^a$ are exhausted by elements from
$\LQGodd$. In Lem.~\ref{lem:comm-perTL-final}, we state that
all homomorphisms  between $\Hilb_{(k)}$ and
$\Hilb_{(k')}$ (as modules over $\TL{N}^a$) for $k-k'=0\mod 2$ are also given by $\LQGodd$, together with the two operators $\e^L$ and $\f^L$ mixing
the two $\TL{N}^a$-invariants on the opposite ends of the spin-chain. 
We finally state an isomorphism between the centralizers for
(the $\gl(1|1)$ representations of) $\TL{N}^a$ and $\JTL{N}$.

In what follows,  we omit the notation for the spin-chain representation
$\repQG$ of the quantum group  for brevity and  simply write $\F$ or $\E$ instead of
$\repQG(\F)$ or $\repQG(\E)$. We do the same for the representation
$\repgl$ of generators of $\JTL{N}$.

\begin{Lemma}\label{lem:endomor-TL}
The vector space $\VEnd_{k,t}$ of homomorphisms respecting the $\TL{N}$-action,
\begin{equation*}
\VEnd_{k,t}=\Hom_{\TL{N}}(\Hilb_{(k)},\Hilb_{(k-2t-1)}),\qquad 0\leq
k\leq N,\quad
\Big\lceil\ffrac{k-N-1}{2}\Big\rceil\leq t\leq \Big\lfloor\ffrac{k-1}{2}\Big\rfloor,
\end{equation*}
has the dimension and a basis listed below.
\begin{enumerate}
\item For $0\leq t\leq \big\lfloor\frac{k-1}{2}\big\rfloor$, we have
\begin{itemize}
\item for $k\leq\frac{N}{2}$,
and also for $k>\frac{N}{2}$ and $k-\frac{N}{2}\leq t\leq
\big\lfloor\ffrac{k-1}{2}\big\rfloor$,
\begin{align}
\VEnd_{k,t}
&= \Big\langle \f^{n-t}\e^n\E,\, \F\f^{l-t}\e^{l+1};
\, t\leq n\leq \Big\lfloor\ffrac{k-1}{2}\Big\rfloor,\, t\leq l\leq
\Big\lfloor\ffrac{k-2}{2}\Big\rfloor\Big\rangle,\label{eq:Ekt-span-plus-1}\\
\dim\VEnd_{k,t}&=k-2t,\notag
\end{align} 
\item for $k>\frac{N}{2}$ and $0\leq t\leq k-\frac{N}{2}-1$,
\begin{align}
\VEnd_{k,t}
&= \Big\langle \e^t\E,\, \f^{n-t}\e^n\E,\, \F\f^{l-t}\e^{l+1};
\, k-\ffrac{N}{2}\leq n\leq \Big\lfloor\ffrac{k-1}{2}\Big\rfloor,\,
k-\ffrac{N}{2} \leq l\leq
\Big\lfloor\ffrac{k-2}{2}\Big\rfloor\Big\rangle,\label{eq:Ekt-span-plus-2}\\
\dim\VEnd_{k,t}&=N-k+1,\notag
\end{align} 
\end{itemize}

\item For $\big\lceil\frac{k-N-1}{2}\big\rceil\leq t\leq -1$, we
have
\begin{itemize}
\item for $k\geq\frac{N}{2}$,
and also for $k<\frac{N}{2}$ and $\big\lceil\frac{k-N-1}{2}\big\rceil\leq t\leq
k-\frac{N}{2}$,
\begin{align*}
\VEnd_{k,t}
&= \Big\langle \e^{n+t+1}\f^n\F,\, \E\e^{l+t+1}\f^{l+1};
\, -t-1\leq n\leq \Big\lfloor\ffrac{N-k-1}{2}\Big\rfloor,\, -t-1\leq l\leq
\Big\lfloor\ffrac{N-k-2}{2}\Big\rfloor\Big\rangle,
\\
\dim\VEnd_{k,t}&=N-k+2(t-1),\notag
\end{align*} 
\item for $k<\frac{N}{2}$ and $k-\frac{N}{2}+1\leq t\leq -1$,
\begin{align}
\VEnd_{k,t}
&= \Big\langle \f^{-t-1}\F,\, \e^{n+t+1}\f^n\F,\, \E\e^{l+t+1}\f^{l+1};
\, \ffrac{N}{2}-k\leq n\leq \Big\lfloor\ffrac{N-k-1}{2}\Big\rfloor,\label{eq:Ekt-span-minus-2}\\
&\qquad\qquad\qquad\qquad\qquad\qquad\qquad\qquad\ffrac{N}{2}-k \leq l\leq
\Big\lfloor\ffrac{N-k-2}{2}\Big\rfloor\Big\rangle,\notag\\
\dim\VEnd_{k,t}&=k,\notag
\end{align} 
\end{itemize}
\end{enumerate}
where we suppose that each basis element is multiplied by an
appropriate projector on the sector $\Hilb_{(k)}$ -- a polynomial in the
Cartan element $\h$.
\end{Lemma}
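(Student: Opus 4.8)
The plan is to reduce the statement to a description of the bigraded pieces of the quantum‑group image and then to pin those pieces down by a direct fermionic rank computation. First I would note that the hamiltonian densities $e^{\gl}_j=(f_j+f_{j+1})(f^\dagger_j+f^\dagger_{j+1})$ preserve the fermion number, so each $\Hilb_{(k)}$ is a $\TL{N}$-submodule and $\Hilb_{N}=\bigoplus_{k=0}^{N}\Hilb_{(k)}$ as a $\TL{N}$-module. Hence $\Endo_{\TL{N}}(\Hilb_{N})=\bigoplus_{k,k'}\Hom_{\TL{N}}(\Hilb_{(k)},\Hilb_{(k')})$, and since (as recalled above) the $\TL{N}$-centralizer is $\repQG(\LQG)$ at $\q=i$, we get $\VEnd_{k,t}=\proj_{(k-2t-1)}\,\repQG(\LQG)\,\proj_{(k)}$, where $\proj_{(k)}$ is the idempotent (a polynomial in $\h$) cutting out the sector $\Hilb_{(k)}$. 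So Lemma~\ref{lem:endomor-TL} is entirely a statement about the graded components of the quantum‑group image, and the projector bookkeeping in the statement is just this cutting‑out.

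Next I would produce a spanning set. Starting from the PBW basis $\f^a\e^b\h^c\E^i\F^j\K^l$ of $\LQG$ (with $a,b,c\geq0$, $i,j\in\{0,1\}$, $0\leq l\leq 3$), observe that on a fixed source sector the Cartan factor $\h^c\K^l$ acts as a scalar; since $\Hilb_{(k)}$ and $\Hilb_{(k-2t-1)}$ differ by an odd weight, while $\e,\f$ shift the weight by an even amount and $\E^2=\F^2=0$, a monomial can contribute only if it contains exactly one of $\E$, $\F$. Using the mixed relations of Appendix~A ($[\e,\f]=2\h$ and the relations \eqref{Ef-rel} for $[\E,\f]$, $[\F,\e]$) one rewrites every such monomial, modulo monomials of the same type but strictly lower degree in $\e$, in one of the normal forms $\f^{n-t}\e^{n}\E$, $\F\f^{l-t}\e^{l+1}$ (raising case, relevant for $t\geq0$, where matching the weight forces $\e$-degree minus $\f$-degree to be $t$, resp. $t+1$), or the mirror forms $\e^{n+t+1}\f^{n}\F$, $\E\e^{l+t+1}\f^{l+1}$ (lowering case, $t<0$), each multiplied by $\proj_{(k)}$. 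An induction on the $\e$-degree then shows these span $\VEnd_{k,t}$; it remains only to decide which are nonzero and to check linear independence.

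The heart of the proof is this last step, carried out in the fermionic picture using the explicit expressions \eqref{QG-ferm-1}--\eqref{QG-ferm-3} for $\E,\F,\e,\f$ in terms of the Fourier modes $\ferm_p,\fermd_p$, together with the basis of $\Hilb_{(k)}$ consisting of monomials in $k$ of the $\fermd_{p}$'s applied to $\vectv{\uparrow\dots\uparrow}$. One checks: $\e^{m}$ is nonzero on $\Hilb_{(k)}$ precisely for $m\leq\lfloor k/2\rfloor$ (it fills holes in conjugate‑momentum pairs), the extra factor $\E\sim\fermd_{\pi/2}$ (resp. $\F\sim\ferm_{3\pi/2}$) fills one more specified mode and shifts the admissible $\e$-range in the $\E$-family down to $\lfloor(k-1)/2\rfloor$, the $\f$-prefactors produce no further constraint, and for $k>\frac{N}{2}$ the relations already holding in $\repQG(\LQG)$ (those descending from $\e^{N/2}$ being the top nonzero power of $\e$, together with the boundary terms in \eqref{QG-ferm-2}--\eqref{QG-ferm-3}) force the additional truncation, so that only the $N-k+1$ monomials in \eqref{eq:Ekt-span-plus-2} survive. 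Linear independence of the survivors is then immediate, since their leading fermionic terms involve pairwise distinct sets of creation/annihilation modes. The cases $t<0$ are treated by the same computation with the mirror normal forms. (The shape of the answer reflects the $\q\to i$ degeneration of the Schur--Weyl picture, cf. Remark~\ref{rem:inj-hom-LQGodd}.)

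I expect the main obstacle to be exactly the precise determination of the nonvanishing ranges in the third step -- in particular the dichotomy at $k=\frac{N}{2}$ and the truncation $\dim\VEnd_{k,t}=N-k+1$ for $k>\frac{N}{2}$ with small $t$ -- which requires keeping careful track of the boundary terms (the $\fermd_{\pi/2}$- and $\ferm_{3\pi/2}$-pieces) in the fermionic expressions for $\e$ and $\f$ and of the specific momentum‑pairing structure they impose on each sector.
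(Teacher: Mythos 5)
Your route is genuinely different from the paper's. The paper never tries to produce a spanning set from inside $\LQG$: it first computes $\dim\VEnd_{k,t}$ abstractly, by restricting the bimodule decomposition \eqref{decomp-LQG-TL} to the two weight spaces involved and using the known Hom spaces between projective $\TL{N}$-modules, $\HommTL(\PrTL{j},\PrTL{j})\cong\oC^2$ and $\HommTL(\PrTL{j},\PrTL{j\pm1})\cong\oC$; it then shows that the listed operators are linearly independent because their images in $\Hilb_{(k-2t-1)}$ are pairwise non-isomorphic ``zig-zag'' modules $\MJTL{j}$, $\NJTL{j}$ (identified via the complexes with differentials $\E$, $\F$ and the explicit $\LQG$-action on the projectives $\PP_{1,r}$ of App.~C). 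An independent set of the right cardinality is then automatically a basis. You instead invoke the double-centralizer statement $\centTL=\repQG\bigl(\LQG\bigr)$ up front, extract a spanning set from the PBW basis by normal-form reduction (which is a real economy: it replaces the dimension count and stays inside the algebra, and the weight bookkeeping and the reduction via $[\e,\f]=2\h$ and \eqref{Ef-rel} are correct), and then propose to settle nonvanishing and independence by direct computation in the Fourier modes. In exchange, the paper's structural argument never has to decide by hand which of the resulting fermionic expressions are linearly dependent.

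That last step is where I see a genuine gap, and it sits exactly at the nontrivial part of the lemma. For $k>\frac{N}{2}$ and $0\leq t\leq k-\frac{N}{2}-1$ your normal forms give $k-2t$ candidates while $\dim\VEnd_{k,t}=N-k+1<k-2t$, so there must be $2(k-\frac{N}{2})-2t-1$ linear relations. These are not produced by any factor vanishing: one checks from the $s\ell(2)$ weight structure that $\e^n\E\neq0$ on $\Hilb_{(k)}$ for all $n\leq\lfloor(k-1)/2\rfloor$ and that the subsequent $\f^{n-t}$ does not annihilate the relevant weight space, so ``$\e^{N/2}$ being the top nonzero power'' is not the mechanism. (Already for $N=4$, $k=3$, $t=0$ the three nonzero operators $\E$, $\f\e\E$, $\F\e$ restricted to $\Hilb_{(3)}\cong\PrTL{2}$ span only the two-dimensional $\Hom(\PrTL{2},\PrTL{1}\oplus\StTL{2})$.) The dependencies only appear after restriction to the sector, because the images of distinct composites coincide as submodules; they are therefore invisible to a comparison of ``leading fermionic terms with distinct mode content'', and the claim that independence of the survivors is ``immediate'' overstates the case even for $k\leq\frac{N}{2}$ --- the analogous verification in the proof of Lem.~\ref{lem:endo-perTL-count} requires evaluating on carefully chosen states $v_k(n)$ with prescribed momentum content. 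To close the argument you would need either to exhibit the explicit relations in $\repQG(\LQG)\,\proj_{(k)}$, or to import an independent upper bound on $\dim\VEnd_{k,t}$ --- which is precisely the count the paper extracts from \eqref{decomp-LQG-TL}.
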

\begin{proof}
The idea of the proof is to compute  dimensions of the spaces $\VEnd_{k,t}$ of
homomorphisms using explicit decompositions over the two commuting algebras and
then to check that the images
(in $\Hilb_{(k-2t-1)}$)
of the  basis elements proposed in the lemma are non-isomorphic, so they are indeed linearly independent.

We recall the decomposition of the tensor-product space $\Hilb_{N}$ over the two
commuting algebras $\TL{N}$ and $\LQG$ (centralizing each other) in
the open case~\cite{ReadSaleur07-2},
\begin{equation}\label{decomp-LQG-TL}
\Hilb_{N}|_{\rule{0pt}{7.5pt}%
\LQG} =  \bigoplus_{j=1}^{L}\IrrTL{j}\boxtimes \PP_{1,j}\,, \qquad 
\Hilb_{N}|_{\rule{0pt}{7.5pt}%
\TL{N}} =  \bigoplus_{j=1}^{L} \PrTL{j}\boxtimes \XX_{1,j} \oplus \StTL{L}\boxtimes \XX_{1,L+1}, 
\end{equation}
 with multiplicities $d^0_j=\sum_{i=j}^L(-1)^{j-i}\left(\binom{N}{L+i}
 - \binom{N}{L+i+1}\right)$ given by dimensions of irreducibles over
 $\TL{N}$. We use the notations $\PrTL{j}$ and $\StTL{j}$ for
 projective and standard $\TL{N}$-modules, respectively.  The standard
 module $\StTL{L}$ is the trivial representation denoted also by
 $(1)$; the standard module $\StTL{j}$, with $1 \leq j < L$, has the
 dimension $\binom{N}{L+j} - \binom{N}{L+j+1}$ and is indecomposable,
 with the structure of subquotients $\StTL{j}:
 \IrrTL{j}\rightarrow\IrrTL{j+1}$, where by $ \IrrTL{j}$ we denote irreducible TL modules. The projectives $\PrTL{j}$ are
 described by the diagram
 $\StTL{j}\rightarrow\StTL{j-1}$ or with  simple subquotients as
\begin{equation}\label{schem-projTL}
     \xymatrix@C=2pt@R=24pt@M=1pt@W=2pt{&&\\&\PrTL{j}\quad=\quad&\\&&}
\xymatrix@C=4pt@R=18pt@M=2pt@W=2pt{
           &\IrrTL{j} \ar[dl] \ar[dr]&\\
 	  \IrrTL{j-1} \ar[dr]
 	    &&\IrrTL{j+1}\ar[dl]\\
        &\IrrTL{j}&
    } 
\end{equation}
 where we exclude subquotients $\IrrTL{j>L}$ from the diagram, 
 see also more details in~\cite{ReadSaleur07-2} including the bimodule structure. We note that these modules are self-contragredient\footnote{Recall that the  (left) $A$-module $V^*$ contragredient to a left $A$-module $V$ is  the vector space of linear functions $V\to\oC$ with the left action of the algebra $A$ given by $a\, f(v) = f(a^{\dagger}v)$  for any $v\in V$ and $f\in V^*$. We use the anti-involution 
$\cdot^{\dagger}: (e_i)^{\dagger}=e_{N-i}$ on the TL algebra.
The contragredient module is then described by a diagram where all arrows are inverse with respect to the diagram of the initial module.}, {\it i.e.}, $\PrTL{j}\cong\PrTL{j}^{*}$. On the quantum-group
 side, the $\LQG$-action on the $j$-dimensional irreducible modules
 $\XX_{1,j}$ is defined in~\eqref{eq:sl-irrep} and the action on the
 projective modules $\PP_{1,j}$ is defined
 in~\eqref{schem-proj}-\eqref{proj-mod-QGbase}.  We note also the
 only non-trivial $\Hom$ spaces for a pair of projectives over
 $\TL{N}$ are
 $\HommTL(\PrTL{j},\PrTL{j})\cong\oC^2$ and $\HommTL(\PrTL{j},\PrTL{j\pm1})\cong\oC$.

Using the decomposition~\eqref{decomp-LQG-TL} over $\TL{N}$ restricted
to sectors with $S^z=k$ and $S^z=k-2t-1$ as well as the $\HommTL$ spaces
for a pair of projective  $\TL{N}$-modules described just above, we  easily compute
dimensions of the spaces $\VEnd_{k,t}$ for all cases described in the
lemma.

Next,  in order to describe images of intertwining operators $\e^m\f^n\E$
 and $\e^m\f^n\F$ in each subspace $\Hilb_{(k)}$ we introduce ``zig-zag'' type
$\TL{N}$-modules in Fig.~\ref{TL-Mmod}.
\begin{figure}\centering
\begin{align*}
\xymatrix@R=15pt@C=3pt@W=2pt@M=1pt
{
&\mbox{}\\
&\boldsymbol{\MJTL{n-1}\;:}
}&
\xymatrix@R=26pt@C=6pt@W=4pt@M=4pt
{
&{(\TLX_n)}\ar[dr]\ar[dl]&&
{(\TLX_{n+2})}\ar[dr]\ar[dl]&&
{(\TLX_{n+4})}\ar[dr]\ar[dl]&
\dots&
{(\TLX_{L-1})}\ar[dr]\ar[dl]&
\\
 (\TLX_{n-1})&
 &(\TLX_{n+1})&
   &(\TLX_{n+3})&
   &\mbox{}\quad\dots\quad\mbox{}&
   &(\TLX_{L})&
     }\\
&\mbox{}\\
\xymatrix@R=15pt@C=3pt@W=2pt@M=1pt
{
&\mbox{}\\
&\boldsymbol{\NJTL{n-1}\;:}
}&
\xymatrix@R=26pt@C=22pt@W=4pt@M=4pt
{
&{(\TLX_n)}\ar[dr]\ar[d]&
{(\TLX_{n+2})}\ar[dr]\ar[d]&
{(\TLX_{n+4})}\ar[dr]\ar[d]&
\dots\ar[dr]&
{(\TLX_{L})}\ar[d]&
\\
& (\TLX_{n-1})&
 (\TLX_{n+1})&
  (\TLX_{n+3})&
  \mbox{}\quad\dots\quad\mbox{}&
   (\TLX_{L-1})&
     }
\end{align*}
\caption{For even $L$, the indecomposable ``zig-zag'' $\TL{N}$-modules
  $\MJTL{n-1}$ at the top, with odd $n$, 
  and  $\NJTL{n-1}$ at the bottom, with even $n$,
   $k=(L-\bar{n})/2$, and $\bar{n} = n - (n\,\textrm{mod}\,2)$.}
    \label{TL-Mmod}
    \end{figure}
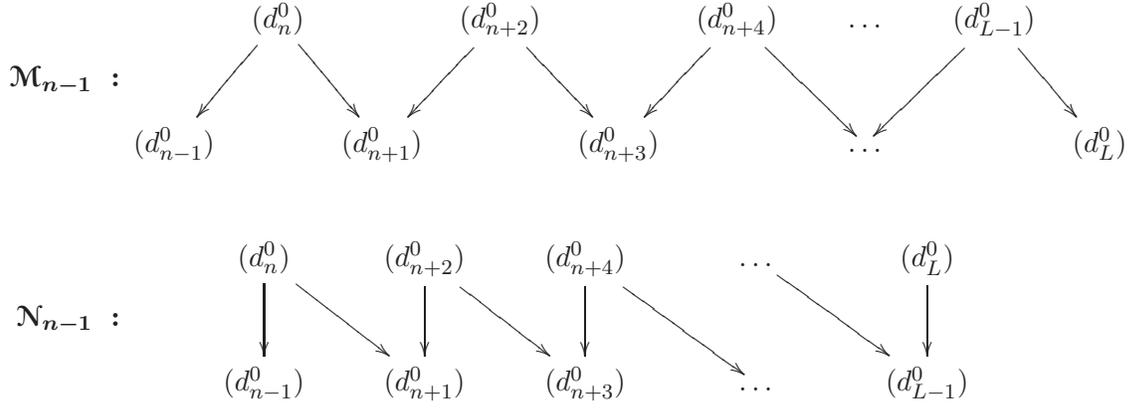
 These are obtained as kernels of $\F$ or $\E$ in the following way.
We note  that the spin-chain $\Hilb_N=\oplus_{j=-L}^{j=L}\Hilb_{[j]}$
graded by $S^z$ defines two  long exact
sequences with the differentials $\F$ and $\E$
(we recall that $\F^2=\E^2=0$) with
$\E:\Hilb_{[n]}\to\Hilb_{[n+1]}$ and $\F:\Hilb_{[n]}\to\Hilb_{[n-1]}$.  The images and kernels of these
differentials are $\TL{N}$-modules: 
for any
$j>0$, using again the decompositions~\eqref{decomp-LQG-TL} restricted
to the supbspace with $S^z=j$ and the $\LQG$-action from
App.~\Approjmodbase, we obtain the short exact sequences of
$\TL{N}$-modules
\begin{align*}
\mbox{}\qquad\qquad
0\;\to\;\NJTL{1}\;\to&\;\Hilb_{[0]}\;\to\;\NJTL{1}^*\;\to\;0,&\\
0\;\to\;\MJTL{j+1}\;\to&\;\Hilb_{[j]}\;\to\;\NJTL{j}\;\to\;0,
\qquad  j - \,\text{odd},&\\
0\;\to\;\NJTL{j+1}\;\to&\;\Hilb_{[j]}\;\to\;\MJTL{j}\;\to\;0,
\qquad  j - \,\text{even},&
\end{align*}
where we define the submodules $\MJTL{j+1}$ and
$\NJTL{j+1}$ as the kernels of the
quantum-group generator $\F$ on $\Hilb_{[j]}$,  for odd and even $j$, respectively. 
Equivalently, they are defined as  the kernels
of $\E$ but for $j<0$.

We note next that the modules $\MJTL{j}$ and
$\NJTL{j}$ have filtrations by submodules $\MJTL{j'}$ and
$\NJTL{j'}$ with appropriate $j'>j$.  
We then use  the bimodule structure on $\Hilb_N$ described in~\cite{ReadSaleur07-2}
together with the explicit action of
$\LQG$ generators $\e$ and $\f$  given also in App.~\Approjmodbase~in order to compute images of the intertwining operators $\e^m\f^n\E$ and $\e^m\f^n\F$ (these images are 
identified with  terms of the
filtrations in the submodules $\MJTL{j}$ and $\NJTL{j}$.)
By straightforward calculations we check that the images of  intertwining operators
proposed in the lemma for all possible pairs $(k,t)$ are given by non-isomorphic
``zig-zag'' type $\TL{N}$-modules (together with their duals), defined
just above and described in Fig.~\ref{TL-Mmod}. This finishes our proof.
\end{proof}


\begin{Lemma}\label{lem:endo-perTL-count}
All homomorphisms between $\Hilb_{(k)}$ and $\Hilb_{(k')}$, for $0\leq
k\leq N$ and $k-k'=1\mod 2$, respecting the periodic Temperley--Lieb
algebra $\TL{N}^a$ are given by action of elements from $\LQGodd$.
\end{Lemma}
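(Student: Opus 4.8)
The plan is to reduce everything to a finite computation with the explicit basis of $\VEnd_{k,t}$ provided by Lem.~\ref{lem:endomor-TL}, and then to impose commutation with the single extra generator $e_N$ that makes the system periodic. Recall that $\TL{N}^a$ is generated by the $e_j$ with $1\leq j\leq N$, so $\Hom_{\TL{N}^a}(\Hilb_{(k)},\Hilb_{(k')})$ is the subspace of $\VEnd_{k,t}=\Hom_{\TL{N}}(\Hilb_{(k)},\Hilb_{(k-2t-1)})$ cut out by the extra condition $[e_N,\phi]=0$. So the strategy is: (i) take a general element $\phi\in\VEnd_{k,t}$, written as a linear combination of the explicit basis elements of the form $\f^{n-t}\e^n\E$, $\F\f^{l-t}\e^{l+1}$ (and the low-weight exceptional generators $\e^t\E$ or $\f^{-t-1}\F$ in the boundary cases), each multiplied by the appropriate idempotent projecting onto $\Hilb_{(k)}$; (ii) compute $[e_N,\phi]$ using the commutators~\eqref{eq:comm-TL-f} and~\eqref{eq:comm-TL-e} together with the Fourier-mode expression~\eqref{eq:PTL-ferm} for $e_N$; (iii) show that $[e_N,\phi]=0$ forces the linear combination to lie precisely in the span of the images of $\LQGodd$, i.e. of the combinations $\f^n\F$ and $\e^m\E$ (times idempotents).

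First I would set up the bookkeeping. From~\eqref{eq:comm-TL-f}--\eqref{eq:comm-TL-e} we already know $[e_N,\f]=2i\sum_{p'\ne 3\pi/2}(e^{ip'}-i)\ferm_{p'}\ferm_{3\pi/2}$ and $[e_N,\e]=2i\sum_{p'\ne\pi/2}(e^{-ip'}-i)\fermd_{p'}\fermd_{\pi/2}$, whereas $[e_N,\f\,\F]=[e_N,\e\,\E]=0$ and more generally $[e_N,\f^n\F]=[e_N,\e^m\E]=0$ by~\eqref{eq:comm-TL-efn}. The point is that $\F\sim\ferm_{3\pi/2}$ and $\E\sim\fermd_{\pi/2}$ are exactly the modes that annihilate the residual commutator with $e_N$; any deviation of $\phi$ from a word ending in $\F$ or $\E$ in this sense produces a nonzero commutator. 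Concretely, a basis vector like $\F\f^{l-t}\e^{l+1}$ has $e^{l+1}$ (odd power) at the `top', not $\e^{l}\E$, so commuting $e_N$ through it produces a term involving $[e_N,\e^{l+1}]$ which does not vanish; I would compute this in Fourier modes and show it is linearly independent of the commutators coming from the other basis vectors, hence cannot be cancelled. The upshot should be that in each of the four cases of Lem.~\ref{lem:endomor-TL} the kernel of $\mathrm{ad}_{e_N}$ inside $\VEnd_{k,t}$ is spanned exactly by those basis elements of the form $\f^{n-t}\e^n\E=\f^{n-t}\bigl(\e^{n}\E\bigr)$ and $\F\f^{l-t}\e^{l+1}=\bigl(\f^{l-t+1}\F\bigr)\e^{\,?}\cdots$ — more precisely, one checks these rewrite (using $\f^n\F\e^m\E=\f^n\e^m\F\E$ and $\E^2=\F^2=0$, as recorded just before Def.~\ref{dfn:Uqodd}) as products of the images $\FF{\bullet}\mapsto\f^{\bullet}\F\,\tfrac{\K^2+\one}{2}$ and $\EE{\bullet}\mapsto\e^{\bullet}\E\,\tfrac{\K^2+\one}{2}$ from Rem.~\ref{rem:inj-hom-LQGodd}, i.e. elements of $\repQG(\LQGodd)$; and the low-weight exceptional generators $\e^t\E$, $\f^{-t-1}\F$ are already manifestly in $\LQGodd$ (they are $\EE t$, $\FF{-t-1}$ on the nose, up to idempotent).

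The converse inclusion — that every element of $\LQGodd$ does commute with all of $e_1,\dots,e_N$ — is the easy direction: for $1\le j\le N-1$ it is immediate since $\LQGodd\subset\LQG=\centTL$, and for $j=N$ it is~\eqref{eq:comm-TL-efn} extended multiplicatively, using that $\LQGodd$ is generated by the $\FF n$, $\EE m$ together with $\K^{\pm1}$ and $\h$ (and the commutators $[e_N,\K]=[e_N,\h]=0$ hold because $\K$ and $\h$ are diagonal in the Fourier basis, hence commute with the number-conserving part and one checks the $\fermd_{p_1}\ferm_{p_2}$ form of $e_N$ respects the $S^z$-grading). I would organize the writeup so the converse is dispatched in one line and the bulk goes into the kernel computation.

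The main obstacle I expect is step (ii)–(iii): carrying the commutator $[e_N,\cdot]$ through an arbitrary linear combination of the basis elements and proving the residual terms are \emph{linearly independent} across distinct basis vectors, so that no cancellation can occur except within the $\LQGodd$-span. This is essentially a determinant / triangularity argument in the Fourier modes: one wants to show that the map $\phi\mapsto[e_N,\phi]$, restricted to a complement of the $\LQGodd$-part, is injective. The cleanest route is probably to grade $\VEnd_{k,t}$ by the top fermionic mode (or by the power of $\e$ resp. $\f$ appearing `outside' the final $\E$ or $\F$) and check that $\mathrm{ad}_{e_N}$ is strictly lowering with an invertible leading coefficient on the non-$\LQGodd$ directions — the coefficients $(e^{ip'}-i)$ and $(e^{-ip'}-i)$ being nonzero for all allowed $p'\ne 3\pi/2$ (resp. $\ne\pi/2$) is exactly what guarantees invertibility. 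Getting the indexing of the momenta right (the two cases $L$ even/odd in~\eqref{momenta-set}) and handling the boundary values of $t$ in Lem.~\ref{lem:endomor-TL} cleanly is where the bookkeeping is heaviest; everything else is forced.
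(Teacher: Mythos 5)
Your overall plan coincides with the paper's: take the basis of $\VEnd_{k,t}=\Hom_{\TL{N}}(\Hilb_{(k)},\Hilb_{(k-2t-1)})$ from Lem.~\ref{lem:endomor-TL}, observe that passing to $\TL{N}^a$ only adds the single condition $[e_N,\cdot]=0$, and identify the kernel of $\mathrm{ad}_{e_N}$ with the $\LQGodd$ part; the easy direction is dispatched identically in both places. But your ``upshot'' misidentifies the answer, and this is a genuine gap rather than a notational slip. Since $\EE m$ shifts $S^z$ by $2m+1$ and $\FF n$ by $-(2n+1)$, the only PBW monomials $\EE n\FF m\h^a\K^b$ of $\LQGodd$ that produce the odd shift $2t+1$ relevant to $\VEnd_{k,t}$ are $\EE t\,\h^a\K^b$ (for $t\geq0$), resp.\ $\FF{-t-1}\h^a\K^b$ (for $t\leq-1$). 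None of the basis elements $\f^{n-t}\e^n\E$ with $n>t$ and none of the $\F\f^{l-t}\e^{l+1}$ rewrite as products of the images of $\FF{\bullet}$, $\EE{\bullet}$: the intersection $\VEnd_{k,t}\cap\LQGodd$ is one-dimensional, spanned by the exceptional generator alone (times a polynomial in $\h$, which is a scalar on the fixed sector). So what must be proved is that $\mathrm{ad}_{e_N}$ is injective on the entire $(\dim\VEnd_{k,t}-1)$-dimensional complement; your sentence asserting both that $\F\f^{l-t}\e^{l+1}$ has a nonvanishing commutator with $e_N$ and that elements of this form span part of the kernel cannot both hold, and it is the first half that is correct.

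The second, related gap is that the linear-independence step --- which you rightly single out as the crux --- is left as an intention without the device that makes it go through. Bidegree in creation/annihilation modes does separate $[e_N,\f^{n-t}]\e^n\E$ from $\F\f^{n-t}[e_N,\e^{n+1}]$ and separates distinct $n$, but it does not by itself show that each such commutator is a \emph{nonzero operator on the particular sector} $\Hilb_{(k)}$: a nonvanishing annihilation part composed with $\e^n\E$ can still kill all of $\Hilb_{(k)}$ if the momenta are not populated correctly, which is exactly why the ranges of $n$ in Lem.~\ref{lem:endomor-TL} depend on $k$ and $N$. The paper resolves this by evaluating on the nested test vectors $v_k(n)=\ferm_{p''_1}\cdots\ferm_{p''_k}\vectv{\uparrow\dots\uparrow}$ with $p''_j=\frac{\pi}{2}+(n-j+1)\step$, chosen so that (a) the commutator of the $n$-th pair of basis elements annihilates $v_k(r)$ for all $t\leq r\leq n-1$ but not $v_k(n)$, which is your triangularity, and (b) on $v_k(n)$ the two commutators of the pair land on manifestly independent states, so no combination $\alpha_n\f^{n-t}\e^n\E+\beta_n\F\f^{n-t}\e^{n+1}$ with $(\alpha_n,\beta_n)\neq(0,0)$ survives; induction on $n$ then finishes it. Some explicit device of this kind is needed for the proof to be complete. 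A small further correction: in $[e_N,\f]$ the coefficient $e^{ip'}-i$ vanishes at the allowed momentum $p'=\pi/2$ (only $p'=3\pi/2$ is excluded from the sum), so ``all coefficients nonzero'' is not the invertibility mechanism; what matters is that the full commutator, and its composition with the creation part on the chosen sector, is nonzero.
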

\begin{proof}			
We use the fermionic expressions~\eqref{QG-ferm-2}-\eqref{QG-ferm-3}
for the generators of $\LQG$ and the explicit expressions for the
commutators~\eqref{eq:comm-TL-f} and~\eqref{eq:comm-TL-e} to find, for
$n\leq N/2-1$,
\begin{align*}
[e_N,\f^n] &= 2^{n}i\sum_{p_1,\dots,p_n\ne\frac{3\pi}{2}}(e^{ip_{n}}-i) 
\prod_{j=1}^{n-1}f(p_j)\ferm_{p_j}\ferm_{\pi-p_j}\ferm_{p_n}\ferm_{3\pi/2},\\
[e_N,\e^n] &=
2^{n}i\sum_{p_1,\dots,p_n\ne\frac{\pi}{2}}(e^{-ip_{n}}-i)\prod_{j=1}^{n-1}
g({p_j})\fermd_{p_j}\fermd_{\pi-p_j}\fermd_{p_n}\fermd_{\pi/2},
\end{align*}
where 
\begin{equation*}
f(p_j) =
\q\frac{e^{i(\frac{3\pi}{2}-p_j)}}{e^{i(\frac{3\pi}{2}-p_j)}-1},\qquad
g(p_j) =
-\q\frac{e^{i(\frac{\pi}{2}+p_j)}}{e^{i(\frac{\pi}{2}+p_j)}+1}.
\end{equation*}
Simplifying, we get
\begin{align}
[e_N,\f^n] &=
(-2)^n i(n-1)!\sum_{\substack{\frac{\pi}{2}=p_1<p_2<\dots<p_{n-1}\\\text{step}=\step}}^{\frac{3\pi}{2}-\step}
\sum_{p_n}(i-e^{ip_{n}}) 
\prod_{j=1}^{n-1}\cot\ffrac{1}{2}(\ffrac{\pi}{2}+p_j)\,\ferm_{p_j}\ferm_{\pi-p_j}\ferm_{p_n}\ferm_{3\pi/2}\ne 0,\label{eq:comm-TL-fn}\\
[e_N,\e^n] &= 2^n i(n-1)!\sum_{\substack{\frac{\pi}{2}+\step=p_1<p_2<\dots<p_{n-1}\\\text{step}=\step}}^{\frac{3\pi}{2}}
\sum_{p_n}(e^{-ip_{n}}-i) \prod_{j=1}^{n-1}\tan\ffrac{1}{2}(\ffrac{\pi}{2}+p_j)\,\fermd_{p_j}\fermd_{\pi-p_j}\fermd_{p_n}\fermd_{\pi/2}\ne 0,\label{eq:comm-TL-en}
\end{align}
where we introduce $\step=\frac{2\pi}{N}$ and all the non-zero
summands are linearly independent.

Then, using~\eqref{QG-ferm-1}-\eqref{QG-ferm-3}, we obtain
\begin{align*}
\e^m \E &= m!\sqrt{N}\!\!\!\sum_{\substack{\frac{\pi}{2}+\step=p_1<p_2<\dots<p_{m}\\\text{step}=\step}}^{\frac{3\pi}{2}}
\prod_{j=1}^{m}\tan\ffrac{1}{2}(\ffrac{\pi}{2}+p_j)\,\fermd_{p_j}\,\fermd_{\pi-p_j}\fermd_{\pi/2}K,\\
\F\f^n &=(-1)^{n-1} n! i \sqrt{N}\!\!\!\sum_{\substack{\frac{\pi}{2}=p_1<p_2<\dots<p_{n}\\\text{step}=\step}}^{\frac{3\pi}{2}-\step}
\prod_{j=1}^{n}\cot\ffrac{1}{2}(\ffrac{\pi}{2}+p_j)\,\ferm_{p_j}\,\ferm_{\pi-p_j}\ferm_{3\pi/2}
\end{align*}
and, using~\eqref{eq:comm-TL-efn} and~\eqref{eq:comm-TL-fn},
\begin{multline}\label{eq:comm-eN-fnemE}
[e_N,\f^n\e^m\E] = [e_N,\f^n]\e^m \E =\\
=(-2)^n i(n-1)!\, m!\sqrt{N}\!\!\!\!\!\!\sum_{\substack{\frac{\pi}{2}=p_1<p_2<\dots<p_{n-1}\\\text{step}=\step}}^{\frac{3\pi}{2}-\step}
\sum_{p_n}
\sum_{\substack{\frac{\pi}{2}+\step=p'_1<p'_2<\dots<p'_{m}\\\text{step}=\step}}^{\frac{3\pi}{2}}
\!\!\!\!\!\!(i-e^{ip_{n}}) 
\prod_{j=1}^{n-1}\cot\ffrac{1}{2}(\ffrac{\pi}{2}+p_j)\,\ferm_{p_j}\ferm_{\pi-p_j}\ferm_{p_n}\ferm_{3\pi/2}\\
\times
\prod_{l=1}^{m}\tan\ffrac{1}{2}(\ffrac{\pi}{2}+p'_l)\,\fermd_{p'_l}\,\fermd_{\pi-p'_l}\fermd_{\pi/2}
K, \qquad n\geq 1,\, m\geq 0,
\end{multline}
and, similarly,
\begin{multline}\label{eq:comm-eN-Ffnem}
[e_N,\F\f^n\e^m] = \F\f^n[e_N,\e^m] =\\
=2^m(-1)^{n} (m-1)!\, n! \sqrt{N}\!\!\!\sum_{\substack{\frac{\pi}{2}=p_1<p_2<\dots<p_{n}\\\text{step}=\step}}^{\frac{3\pi}{2}-\step}
\sum_{p'_m}
\sum_{\substack{\frac{\pi}{2}+\step=p'_1<p'_2<\dots<p'_{m-1}\\\text{step}=\step}}^{\frac{3\pi}{2}}
\!\!\!\!\!\!(e^{-ip'_{m}}-i)
\prod_{j=1}^{n}\cot\ffrac{1}{2}(\ffrac{\pi}{2}+p_j)\,\ferm_{p_j}\,\ferm_{\pi-p_j}\ferm_{3\pi/2}\\
\times
\prod_{l=1}^{m-1}\tan\ffrac{1}{2}(\ffrac{\pi}{2}+p'_l)\,\fermd_{p'_l}\fermd_{\pi-p'_l}\fermd_{p'_m}\fermd_{\pi/2},
\qquad m\geq 1,\, n\geq 0,
\end{multline}
 where all the summands are linearly independent.

Next, we restrict the action on a sector with 
$k$ antifermions $\ferm_{p_j}$, $0\leq k\leq N$,
\begin{equation*}
\Hilb_{[\frac{N}{2}-k]}\equiv\Hilb_{(k)}=
\Big\langle\prod_{j=1}^{k}\ferm_{p_j}\vectv{\uparrow\dots\uparrow},
\; p_1>p_2>\dots>p_k\Big\rangle,
\end{equation*}
where the momenta $p_j$ belong to the set~\eqref{momenta-set},
and compute commutation relations between $e_N$ and an operator from 
the vector space
$\VEnd_{k,t}=\Hom_{\TL{N}}(\Hilb_{(k)},\Hilb_{(k-2t-1)})$ described
in Lem.~\ref{lem:endomor-TL}. 
We first note that the intersection of $\VEnd_{k,t}$ with $\LQGodd$ is spanned by
the operators $\e^t \E$, for $t\geq 0$, and by $f^{-t-1}F$, for $t\leq -1$.
Then, we show in three steps that any non-zero linear combination of all the other operators
is not contained in the space
$\Hom_{\TL{N}^a}(\Hilb_{(k)},\Hilb_{(k-2t-1)})$ of homomorphisms
respecting the periodic Temperley--Lieb algebra $\TL{N}^a$. We begin
with consideration of the case $\textit{1.}$ in Lem.~\ref{lem:endomor-TL}.
\begin{enumerate}
\item
Using~\eqref{eq:comm-eN-fnemE}, we calculate the action of the
commutator $[e_N,\f^m\e^n\E]$ on a vector
$v_k(n)=\ferm_{p''_1}\ferm_{p''_2}\dots\\\dots\ferm_{p''_k}\vectv{\uparrow\dots\uparrow}$
with the specifically chosen momenta
\begin{equation}\label{eq:choosn-mom}
p''_j=\ffrac{\pi}{2}+(n-j+1)\step, \qquad 1\leq j\leq k,\quad 1\leq k\leq N.
\end{equation}
Here, we set $m=n-t$ and the power $n$ is running the values $t+1,\dots,
\big\lfloor\frac{k-1}{2}\big\rfloor$ in the case corrseponding
to~\eqref{eq:Ekt-span-plus-1}, and for the
case~\eqref{eq:Ekt-span-plus-2} -- the values $k-\frac{N}{2},\dots,
\big\lfloor\frac{k-1}{2}\big\rfloor$.
\begin{multline*}
[e_N,\f^m\e^n\E]v_k(n) = [e_N,\f^m\e^n\E]\,\ferm_{\frac{\pi}{2}+n\step}\,\ferm_{\frac{\pi}{2}+(n-1)\step}
\dots\ferm_{\frac{\pi}{2}-n\step}\dots\ferm_{\frac{\pi}{2}+(n-k+1)\step}\vectv{\uparrow\dots\uparrow}\\
=a(n)\sum_{\substack{\frac{\pi}{2}+\step=p_1<p_2<\dots<p_{m-1}\\\text{step}=\step}}^{\frac{3\pi}{2}-\step}
\sum_{p_m} (i-e^{ip_{m}}) 
\prod_{j=1}^{m-1}\cot\ffrac{1}{2}(\ffrac{\pi}{2}+p_j)\,\ferm_{p_j}\ferm_{\pi-p_j}\ferm_{p_m}\ferm_{\frac{3\pi}{2}}\,
\times\\
\times\!\!\!\!\!\!\!
\sum_{\substack{\frac{\pi}{2}+\step=p'_1<p'_2<\dots<p'_{n}\\\text{step}=\step}}^{\frac{3\pi}{2}-\step}
\prod_{l=1}^{n}\tan\ffrac{1}{2}(\ffrac{\pi}{2}+p'_l)\,\fermd_{p'_l}\,\fermd_{\pi-p'_l}\fermd_{\frac{\pi}{2}}
\ferm_{\frac{\pi}{2}+n\step}\,\ferm_{\frac{\pi}{2}+(n-1)\step}\dots\ferm_{\frac{\pi}{2}+(n-k+1)\step}\vectv{\uparrow\dots\uparrow},
\end{multline*}
where $a(n) = (-1)^{m-k+N/2}\, i\, 2^m n! (m-1)!\sqrt{N}$. In
consequence of the condition $\frac{\pi}{2}+(n-k+1)\step\geq
-\frac{\pi}{2}+\step$ or, equivalently, $k-n\leq N/2$, the state
$v_k(n)$ contains precisely $n$ pairs of momenta $(p_l,\pi-p_l)$,
where $\frac{\pi}{2}+n\step\leq p_l\leq
\frac{\pi}{2}+\step$. Therefore, the sum over $p'_l$ (on the third
line) contains only one non-zero term corresponding to
$p'_l=\frac{\pi}{2}+l\step$, $1\leq l\leq n$:
\begin{multline*}
\prod_{l=1}^{n}\tan\ffrac{1}{2}(\ffrac{\pi}{2}+l\step)\,\fermd_{\frac{\pi}{2}+\step}\,\fermd_{\frac{\pi}{2}-\step}
\,\fermd_{\frac{\pi}{2}+2\step}\,\fermd_{\frac{\pi}{2}-2\step}\dots
\fermd_{\frac{\pi}{2}+n\step}\,\fermd_{\frac{\pi}{2}-n\step}\,\fermd_{\frac{\pi}{2}}\times\\
\times\underbrace{\ferm_{\frac{\pi}{2}+n\step}\,\ferm_{\frac{\pi}{2}+(n-1)\step}\dots
\ferm_{\frac{\pi}{2}+\step}\,\ferm_{\frac{\pi}{2}}\ferm_{\frac{\pi}{2}-\step}\dots
\ferm_{\frac{\pi}{2}-n\step}}\,\ferm_{\frac{\pi}{2}-(n+1)\step}\dots\ferm_{\frac{\pi}{2}+(n-k+1)\step}\vectv{\uparrow\dots\uparrow},
\end{multline*}
where the under-braced term is annihilated by the corresponding
$\fermd$-operators. We thus obtain
\begin{multline*}\label{eq:comm-eN-fnenE-k}
[e_N,\f^m\e^n\E]v_k(n)= a(n)\!\!\! \sum_{\substack{\frac{\pi}{2}+\step=p_1<p_2<\dots<p_{m-1}\\\text{step}=\step}}^{\frac{3\pi}{2}-\step}
\sum_{p_m} (i-e^{ip_{m}}) 
\prod_{l=1}^{n}\tan\ffrac{1}{2}(\pi+l\step)\prod_{j=1}^{m-1}\cot\ffrac{1}{2}(\ffrac{\pi}{2}+p_j)\\
\times\ferm_{p_j}\ferm_{\pi-p_j}\ferm_{p_m}\ferm_{\frac{3\pi}{2}}
\ferm_{\frac{\pi}{2}-(n+1)\step}\dots\ferm_{\frac{\pi}{2}+(n-k+1)\step}\vectv{\uparrow\dots\uparrow}\ne
0,
\end{multline*}
where the sum contains non-zero terms in consequence of the inequality
$m<n$ and all the non-zero terms are linearly independent.

\item
Using~\eqref{eq:comm-eN-Ffnem}, we next calculate similarly the action
of the commutator $[e_N,F\f^m\e^{n+1}]$, where $m=n-t$ and $t\leq
n\leq \big\lfloor\frac{k-2}{2}\big\rfloor$ for the
case~\eqref{eq:Ekt-span-plus-1} and $k-\frac{N}{2}\leq n\leq
\big\lfloor\frac{k-2}{2}\big\rfloor$ in the case~\eqref{eq:Ekt-span-plus-2}, on the same vector
$v_k(n)=\ferm_{p''_1}\ferm_{p''_2}\dots\ferm_{p''_k}\vectv{\uparrow\dots\uparrow}$
with the momenta~\eqref{eq:choosn-mom}.
\begin{multline*}
[e_N,\F\f^m\e^{n+1}]v_k(n)= b(n)\!\! \sum_{\substack{\frac{\pi}{2}+\step=p_1<p_2<\dots<p_{m}\\\text{step}=\step}}^{\frac{3\pi}{2}-\step}
\!\!\sum_{r=1}^{k-2n-1}(-1)^{r-1} 
\prod_{j=1}^{m}\cot\ffrac{1}{2}(\ffrac{\pi}{2}+p_j)\prod_{l=1}^{n}\tan\ffrac{1}{2}(\pi+l\step)\\
\times (e^{-i(\frac{\pi}{2}-(n+r)\step)}-i)\, \ferm_{p_j}\ferm_{\pi-p_j}\ferm_{\frac{3\pi}{2}}
\ferm_{\frac{\pi}{2}-(n+1)\step}\dots\widehat{\ferm}_{\frac{\pi}{2}-(n+r)\step}
\dots\ferm_{\frac{\pi}{2}+(n-k+1)\step}\vectv{\uparrow\dots\uparrow}\ne0,
\end{multline*}
where $b(n) = (-1)^m 2^{n+1} m! n! \sqrt{N}$ and the notation
$\widehat{\ferm}$ means the absence of the corresponding term.

\item
We then note that, for $t\leq r\leq n-1$ in the
case corresponding to~\eqref{eq:Ekt-span-plus-1} and $k-\frac{N}{2}\leq r\leq n-1$ in
the case~\eqref{eq:Ekt-span-plus-2},
\begin{equation*}		
[e_N,\alpha_n\f^{n-t}\e^{n}\E + \beta_n\F\f^{n-t}\e^{n+1}]v_k(r) =0, \quad
\;\; \alpha_n,\beta_n\in\oC,
\end{equation*}
and 
\begin{equation*}
[e_N,\alpha_n\f^{n-t}\e^{n}\E + \beta_n\F\f^{n-t}\e^{n+1}]v_k(n) \ne
0, 
\end{equation*}
for any non-zero complex numbers $\alpha_n$ and $\beta_n$. Therefore,
we can prove by induction with respect to $n$
that the only operator
from $\VEnd_{k,t}$ which commutes with $e_N$ is $\e^t\E\in\LQGodd$.
\end{enumerate}
Similar analysis can be carried out for the case $\textit{2.}$ of
Lem.~\ref{lem:endomor-TL}. More convenient basis to express $v_k(n)$
for this case is spanned by
$\prod_{j=1}^{k}\fermd_{p_j}\vectv{\downarrow\dots\downarrow}$.
\end{proof}

\begin{Lemma}\label{lem:comm-perTL-final}
The vector space $\Hom_{\TL{N}^a}(\Hilb_{(k)},\Hilb_{(k-2t)})$ of
homomorphisms between $\Hilb_{(k)}$ and $\Hilb_{(k')}$, for
$k-k'=0\mod 2$, respecting $\TL{N}^a$ is spanned by elements from
$\LQGodd$ and operators $\f^L$, $\e^L$.
\end{Lemma}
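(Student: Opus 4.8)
The plan is to follow the two‑step template set up by Lemmas~\ref{lem:endomor-TL} and~\ref{lem:endo-perTL-count}, now for pairs of sectors whose $S^z$‑values differ by an \emph{even} integer, while exploiting wherever possible that the $\gl(1|1)$ generators $\E$ and $\F$ are themselves $\TL{N}^a$‑intertwiners (recall the inclusion $\repQG(\UresSL{2})\subset\centJTL$ from Sec.~\ref{sec:qg-res}). First I would prove the even‑separation analogue of Lemma~\ref{lem:endomor-TL}: an explicit basis of $\Hom_{\TL{N}}(\Hilb_{(k)},\Hilb_{(k-2t)})$. The method is unchanged — restrict the decomposition~\eqref{decomp-LQG-TL} to the sectors $S^z=\tfrac N2-k$ and $S^z=\tfrac N2-k+2t$, use $\HommTL(\PrTL j,\PrTL j)\cong\oC^2$ and $\HommTL(\PrTL j,\PrTL{j\pm1})\cong\oC$ to count dimensions, and identify the images of the candidate intertwiners with the ``zig‑zag'' $\TL{N}$‑modules of Fig.~\ref{TL-Mmod} (and their duals) to prove linear independence. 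The basis now consists of: the ``even'' monomials $\f^n\F\e^m\E$ of $\LQGodd$ (equivalently $\FF n\EE m$ up to the commutation relation of Def.~\ref{dfn:Uqodd}, each multiplied by the appropriate polynomial projector in $\h$); their ``decorations'' by extra even powers, i.e.\ $\f^j\cdot(\,\cdot\,)$ on the left and $(\,\cdot\,)\cdot\e^j$ on the right with $j\geq1$, which are \emph{not} elements of $\LQGodd$; and, coming from the trivial summand $\StTL L\boxtimes\XX_{1,L+1}$ of~\eqref{decomp-LQG-TL}, the two top powers $\e^L=\e^{N/2}$ and $\f^L=\f^{N/2}$, the only intertwiners between the one‑dimensional $\TL{N}$‑invariants sitting at $S^z=\pm L$.

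Next I would compute the commutators $[e_N,\,\cdot\,]$ of all these basis elements. The even $\LQGodd$‑monomials commute with $e_N$: by~\eqref{eq:comm-TL-efn} each factor $\e^m\E$ and $\f^n\F$, as well as $\E$ and $\F$, commutes with $e_N$, and $[e_N,\,\cdot\,]$ is a derivation. The top powers $\e^{N/2},\f^{N/2}$ also commute with $e_N$ — indeed with all of $\JTL{N}$, as they merely mix the two $\JTL{N}$‑invariants (Sec.~\ref{sec:symm-sp-ch}). The decorated monomials do not: $[e_N,\f^j Z]=[e_N,\f^j]Z$ with $[e_N,\f^j]\neq0$ for $1\leq j\leq N/2-1$ by~\eqref{eq:comm-TL-fn}, and symmetrically on the right by~\eqref{eq:comm-TL-en}. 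To rule out a non‑trivial linear combination of decorated monomials (possibly mixed with the undecorated ones) whose commutator with $e_N$ vanishes, I would run the evaluation‑and‑induction argument of the proof of Lemma~\ref{lem:endo-perTL-count}: act $[e_N,\,\cdot\,]$ of a candidate combination on the vectors $v_k(n)$ with the momenta chosen as in~\eqref{eq:choosn-mom}, read off the surviving leading term, and induct on the power to force all decorations to vanish, leaving only $\LQGodd$ and $\{\e^{N/2},\f^{N/2}\}$.

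For the sectors with \emph{odd} $S^z$ there is a shortcut that dispenses with most of this. If $\phi\in\Hom_{\TL{N}^a}(\Hilb_{(k)},\Hilb_{(k-2t)})$ and $\tfrac N2-k$ is odd, then $\F\phi$ and $\E\phi$ are odd‑separation $\TL{N}^a$‑homs, hence lie in $\repQG(\LQGodd)$ by Lemma~\ref{lem:endo-perTL-count}; since $\E=\EE 0$ and $\F=\FF 0$ belong to $\LQGodd$ and $\LQGodd$ is an algebra, $\E\F\phi$ and $\F\E\phi$ lie in $\repQG(\LQGodd)$ as well; and because $\repQG([\E,\F])=\repQG(\tfrac{\K-\K^{-1}}{\q-\q^{-1}})$ acts by a nonzero scalar $c$ on an odd‑$S^z$ sector, $\phi=c^{-1}(\E\F\phi-\F\E\phi)\in\repQG(\LQGodd)$ at once (edge cases where an intermediate map is forced to zero are harmless). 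In particular, when $L$ is odd the two invariant sectors carry odd $S^z$, are covered by this shortcut, and $\e^{N/2},\f^{N/2}$ then turn out to lie already in $\repQG(\LQGodd)$ — so the statement of the lemma is uniform but the extra operators are only genuinely needed when $L$ is even.

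The hard part is exactly the sectors with \emph{even} $S^z$ — including, for $L$ even, the two one‑dimensional invariants linked by $\e^{N/2},\f^{N/2}$. There $\repQG([\E,\F])$ vanishes, the shortcut is useless, and one must carry the fermionic induction of Lemma~\ref{lem:endo-perTL-count} through in full. The delicate point is the bookkeeping: one has to verify that the leading terms produced by $[e_N,\,\cdot\,]$ on the decorated ``two‑generator'' monomials $\f^j\FF n\EE m$ and $\FF n\EE m\e^j$ remain linearly independent from one another and from those of the bare $\LQGodd$‑monomials, so that after the induction precisely the $\LQGodd$‑part together with $\e^{N/2}$ and $\f^{N/2}$ (and nothing more) survives, the two extra operators being traced to the summand $\StTL L\boxtimes\XX_{1,L+1}$ of~\eqref{decomp-LQG-TL}. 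Once this is done the lemma follows; combined with the remark that $\LQGodd$, $\e^{N/2}$ and $\f^{N/2}$ all commute with $u^{\pm2}$ (immediate from their Fourier‑mode form, cf.~\eqref{uu-theta}), it also delivers the announced identification of the centralizers of $\TL{N}^a$ and of $\JTL{N}$.
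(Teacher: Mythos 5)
Your main route (redo the even-separation analogue of Lemma~\ref{lem:endomor-TL}, then rerun the $v_k(n)$-evaluation induction against $e_N$) would presumably work, but you leave precisely the part you call ``the hard part'' as unchecked bookkeeping, and the paper avoids it entirely with a one-line reduction: if $\mathcal{O}\in\Hom_{\TL{N}}(\Hilb_{(k)},\Hilb_{(k-2t)})$ commutes with $e_N$ but is not represented by an element of $\LQGodd$ (nor by $\e^L,\f^L$), then $\mathcal{O}\E$ is \emph{nonzero} (because such an $\mathcal{O}$ is a combination of $\h$-projectors times $\f^{n-t}\e^{n}$, which does not land in $\ker\E$), it still commutes with $e_N$ since $[\E,e_N]=0$, and it is an odd-separation $\TL{N}^a$-intertwiner not in $\LQGodd$ --- contradicting Lemma~\ref{lem:endo-perTL-count}. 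So the only genuinely new computation the paper needs is the observation from~\eqref{eq:comm-TL-fn}--\eqref{eq:comm-TL-en} that $[e_N,\f^n]=[e_N,\e^n]=0$ forces $n=L$, which produces the two extra generators $\f^L,\e^L$.

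The concrete error in your proposal is the ``shortcut'' for odd-$S^z$ sectors. On the alternating chain of length $N=2L$ one has $\repQG(\K)=(-1)^{S^z}$, so $\K=\K^{-1}$ on \emph{every} sector and
\begin{equation*}
\repQG\bigl([\E,\F]\bigr)=\repQG\Bigl(\ffrac{\K-\K^{-1}}{\q-\q^{-1}}\Bigr)=0 ,
\end{equation*}
as is also visible from~\eqref{QG-ferm-1}: $[\E,\F]\propto\{\fermd_{\pi/2},\ferm_{3\pi/2}\}=0$ since the two momenta differ. Hence there is no nonzero scalar $c$ with $\phi=c^{-1}(\E\F\phi-\F\E\phi)$, on odd-$S^z$ sectors or anywhere else, and the shortcut (together with the derived claim that $\e^L,\f^L$ become redundant for $L$ odd --- they do not, since every $S^z$-raising element of $\LQGodd$ of even degree has the form $\EE{n}\FF{m}$ and annihilates the all-down state, unlike $\e^L$) collapses. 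Without it you are back to carrying out the full fermionic induction in all even-separation sectors, which you have not done. Note that the correct way to exploit composition with $\E$ and $\F$ here is the paper's: not to invert $[\E,\F]$, but to propagate a hypothetical counterexample into the odd-separation case already settled by Lemma~\ref{lem:endo-perTL-count}.
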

\begin{proof}
First, using the
fermionic expression~\eqref{eq:comm-TL-fn} and~\eqref{eq:comm-TL-en}
for $[e_N,\f^n]$ and $[e_N,\e^n]$, we
conclude that the only power of $\f$ and $\e$ that commutes with $e_N$ is
$n=N/2=L$.

Second, we  note that (for $t\geq 0$) the vector space of operators
intertwining the $\TL{N}$-action has the basis
\begin{equation*}
\tilde{\VEnd}_{k,t}\equiv\Hom_{\TL{N}}(\Hilb_{(k)},\Hilb_{(k-2t)})=\Big\langle
\proj_k(\h)\delta_{t,0},\, \f^{n-t}\e^n\F^{\nu}\E^{\nu};
\, \nu\in\{0,1\},\; t\leq n\leq
\Big\lfloor\ffrac{k-1}{2}\Big\rfloor\Big\rangle,
\end{equation*}
where $\delta$ is the Kronecker symbol and we introduce
polynomials $\proj_k(\h)$ in $\h$ projecting the tensor-product space
$\Hilb_{N}$ onto the subspace $\Hilb_{(k)}$,
\begin{equation}\label{proj-def}
\proj_k(\h) = \prod_{j=-N/2;\,j\ne L-k}^{j=N/2}(2\h-j).
\end{equation}
The intersection of $\tilde{\VEnd}_{k,t}$ with $\LQGodd$ is spanned by
the operators  $\f^{n-t}\e^n\F\E$ (and $\proj_k(\h)$, for $t=0$).
Assume there is a non-zero operator $\mathcal{O}$ from
$\tilde{\VEnd}_{k,t}$ which is not presented by an element from
$\LQGodd$ but commutes with $e_N$ and consider the product of the
operator $\mathcal{O}$ with $\E$. Obviously, the operator
$\mathcal{O}\E$ is non-zero because by our assumption $\mathcal{O}$ is
not in $\LQGodd$ and therefore it is a linear
combination of projectors on a $\TL{N}$ direct summand times $\e^t$
which does not belong to the kernel of $\E$. Then, the assumption
$[\mathcal{O},e_N]=0$ with the fact $[\E,e_N]=0$ imply that the
non-zero homomorphism
$\mathcal{O}\E\in\VEnd_{k,t}=\Hom_{\TL{N}}(\Hilb_{(k)},\Hilb_{(k-2t-1)})$
which is not represented by an element from $\LQGodd$
commutes with $e_N$. We thus get a contradiction to
Lem.~\ref{lem:endo-perTL-count}. This finishes our proof for $t\geq
0$. The case $t<0$ is considered in the same manner.

 We finally note that operators $\EE n\f^L$ and $\FF n\e^L$
belong to $\repQG\bigl(\LQGodd\bigr)$. Therefore, the multiplication of $\f^L$
or $\e^L$ with `off-diagonal' elements from $\LQGodd$ does not give any new
operatots centralizing $\TL{N}^a$.
\end{proof}

Combining all the three Lemmas we prove the following result.
\begin{Cor}\label{cor:TLa-centralizer}
The centralizer of the
periodic Temperley--Lieb algebra $\repgl(\TL{N}^a)$ on the spin-chain
is isomorphic to the subalgebra in $\repQG\bigl(\LQG\bigr)$ generated
by $\LQGodd$ and $\f^L$, and $\e^L$.
\end{Cor}

A final comment is in order however: the algebra $\JTL{N}$ contains also the generator
$u^2$ expressed in terms of  the $\theta$-fremions in~\eqref{uu-theta}. This generator acts
on the fermion generators as
\begin{equation}
u^2 f_j^{(\dagger)}u^{-2}=f_{j+2}^{(\dagger)}
\end{equation}
while it does not leave the generators 
$\f$ and $\e$ invariant, it does leave $\LQGodd$ (together with $\f^L$
and $\e^L$) invariant\footnote{We could equivalently consider a
representation depending on a phase $\varphi$: $u^2 f_j
u^{-2}=e^{i\varphi}f_{j+2}$ and $u^2
f_j^{\dagger}u^{-2}=e^{-i\varphi}f_{j+2}^{\dagger}$, with
$\varphi=2\pi n/L$, $n\in\oZ$, but it has the same centralizer as the
representation with $\varphi=0$.}.  This can be seen by using
fermionic expressions in Sec.~\ref{sec:qg-res} and Sec.~\ref{sec:ferm-exp-cent}.
This observation together with Cor.~\ref{cor:TLa-centralizer} finally prove Thm.~\ref{Thm:centr-JTL-main}.

\section*{Appendix C: Projective $\LQG$-modules $\PP_{1,r}$}
\renewcommand\thesection{C}
\renewcommand{\theequation}{C\arabic{equation}}
\setcounter{equation}{0}

Here, we recall~\cite{BFGT} $\LQG$-action (for $\q=i$) in projective
modules $\PP_{1,r}$, for $r\in\oN$.  Their simple subquotients are
$r$-dimensional irreducible modules $\XX_{1,r}$ spanned by
$\stprp_{m}$, $0\leq m\leq r{-}1$, with the action\footnote{We
simplify a notation used in~\cite{BFGT} assuming $\XX_{1,r} \equiv
\XX^{\alpha(r)}_{1,r}$ with $\alpha(r)=(-1)^{r-1}$, and the same for $\PP_{1,r}$.  }
\begin{equation}\label{eq:sl-irrep}
\begin{split}
&\E\, \stprp_{m} = \F\, \stprp_{m} = 0,\qquad \K\, \stprp_{m} = (-1)^{r-1} \stprp_{m},\\
  \h\, \stprp_{m} &=  \half(r-1-2m)\stprp_{m},\quad
  \e\, \stprp_{m} =  m(r-m)\stprp_{m-1},\quad
  \f\, \stprp_{m} =  \stprp_{m+1},
\end{split}
\end{equation}
where we set $\stprp_{-1}
=\stprp_{r}=0$.
For $r=0$, we also set $\XX_{1,0}\equiv 0$.
The subquotient
structure of $\PP_{1,r}$ is then given as 
\begin{equation}\label{schem-proj}
     \xymatrix@C=2pt@R=24pt@M=1pt@W=2pt{&&\\&\PP_{1,r}\quad=\quad&\\&&}
\xymatrix@C=4pt@R=18pt@M=2pt@W=2pt{
           &\XX_{1,r} \ar[dl] \ar[dr]&\\
 	  \XX_{1,r-1} \ar[dr]
 	    &&\XX_{1,r+1}\ar[dl]\\
        &\XX_{1,r}&
    } 
\end{equation}

For $r > 1$, the projective module $\PP_{1,r}$ has the basis
\begin{equation}\label{left-proj-basis-plus}
  \{\toppr_{m},\botpr_{m}\}_{0\le m\le r-1}
  \cup\{\leftpr_{l}\}_{1\le l\le
  r-1}
\cup\{\rightpr_{l}\}_{0\le l\le
  r},
\end{equation}
where $\{\toppr_{m}\}_{0\le m\le
    r-1}$ is the basis
corresponding to the top module in~\eqref{schem-proj},
$\{\botpr_{m}\}_{0\le m\le r-1}$
to the bottom, $\{\leftpr_{l}\}_{1\le l\le r-1}$ to the left, and
$\{\rightpr_l\}_{0\le l\le r}$ to
the right module. 
For $r=1$, the basis does not contain
$\{\leftpr_{l}\}_{1\le l\le
r-1}$ terms and we imply $\leftpr_{l}~\equiv~0$ in the
action.

We set $\alpha(r)=(-1)^{r-1}$.   The $\LQG$-action on $\PP_{1,r}$ is then given by
\begin{align}
  \K\toppr_{m}&=\alpha(r)\toppr_{m}, \qquad  \K\botpr_{m}=\alpha(r)\botpr_{m}, &\quad 0\le m\le r-1,&\notag\\
  \K\leftpr_{l}&=-\alpha(r)\leftpr_{l}, &\quad 1\le l\le r-1,&\notag\\
  \K\rightpr_{l}&=-\alpha(r)\rightpr_{l}, &\quad 0\le l\le r,&\notag\\
  \E\toppr_{m}&=
    \alpha(r) \ffrac{r-m}{r}\rightpr_{m} + \alpha(r) \ffrac{m}{r}\leftpr_{m},   \qquad \E\botpr_{m}=  0,  &\quad 0\le m\le r-1,&\notag\\
  \E\leftpr_{l}&=
    \alpha(r) (l-r)\botpr_{l-1},  &\quad 1\le l\le r-1,&\label{proj-mod-QGbase}\\
  \E\rightpr_{l}&=
    \alpha(r) l\botpr_{l-1},
&  \quad 0\le l\le r,&\notag\\
  \F\toppr_{m}&=
    \ffrac{1}{r}\rightpr_{m+1}-\ffrac{1}{r}\leftpr_{m+1}, \qquad \F\botpr_{m}= 0 &\quad 0\le m\le r-1, 
    \quad(\leftpr_{r}\equiv0),&\notag\\
  \F\leftpr_{l}&=
    \botpr_{l}, &
  \quad 1\le l\le r-1,&\notag\\
  \F\rightpr_{l}&=
    \botpr_{l}, &
  \quad 0\le l\le r.&\notag
\end{align}

In thus introduced basis, the $s\ell(2)$-generators $\e$, $\f$ and $\h$
act in $\PP_{1,r}$ as in the direct sum
$\XX_{1,r}\oplus\XX_{1,r-1}\oplus\XX_{1,r+1} \oplus \XX_{1,r}$ with
the action defined in~\eqref{eq:sl-irrep}.

\end{document}